\newcommand{\nc}{\newcommand}
\nc{\tcr}{\textcolor{red}}
\nc{\rnc}{\renewcommand}
\nc{\nn}{\nonumber}
\nc{\der}{{\partial}}
\rnc{\Im}{{\rm{Im}\,}}
\rnc{\Re}{{\rm{Re}\,}}
\nc{\db}{\displaybreak[0]\\}
\nc{\bra}{\langle}
\nc{\ket}{\rangle}
\nc{\bs}{\boldsymbol}
\DeclareMathOperator{\Tr}{Tr}
\DeclareMathOperator{\End}{End}
\newtheorem{theorem}{Theorem}[section]
\newtheorem{lemma}[theorem]{Lemma}
\newtheorem{proposition}[theorem]{Proposition}
\newtheorem{corollary}[theorem]{Corollary}
\theoremstyle{definition}
\newtheorem{definition}[theorem]{Definition}
\newtheorem{example}[theorem]{Example}
\numberwithin{equation}{section}
\numberwithin{equation}{section}
\begin{document}%
%
\title{
$K$-theoretic boson-fermion correspondence and melting crystals}

\author{
Kohei Motegi$^1$\thanks{E-mail: motegi@gokutan.c.u-tokyo.ac.jp} \,
and
Kazumitsu Sakai$^2$\thanks{E-mail: sakai@gokutan.c.u-tokyo.ac.jp}
\\\\
$^1${\it Okayama Institute for Quantum Physics, }\\
 {\it Kyoyama 1-9-1, Okayama 700-0015, Japan} \\
\\
$^2${\it Institute of physics, University of Tokyo,} \\ 
{\it Komaba 3-8-1, Meguro-ku, Tokyo 153-8902, Japan}
\\\\
\\
}

\date{\today}

\maketitle

\begin{abstract}
We study non-Hermitian integrable fermion and boson systems 
from the perspectives of Grothendieck polynomials. The models 
considered in this article are the five-vertex model as a 
fermion system and the non-Hermitian phase model as a 
boson system. Both of the models are characterized by the 
different solutions satisfying the same Yang-Baxter relation. From our 
previous works on the identification between the wavefunctions 
of the five-vertex model and Grothendieck polynomials, 
we introduce skew Grothendieck polynomials, and derive the 
addition theorem among them. Using these relations,  we derive
the wavefunctions of the non-Hermitian phase model as a determinant
form which can  also be expressed as the Grothendieck polynomials. 
Namely, we establish a $K$-theoretic boson-fermion correspondence 
at the level of wavefunctions. As a by-product, 
the partition function of the statistical mechanical model of a 3D
melting crystal is exactly calculated by use of the scalar
products of the wavefunctions of the phase model. The resultant
expression can be regarded as a $K$-theoretic generalization of the  
MacMahon function describing the generating function of the plane 
partitions, which interpolates the generating functions of 
two-dimensional and three-dimensional Young diagrams.
\end{abstract}

\section{Introduction}
%
The symmetric polynomials is the basic object in representation theory, 
combinatorics and related geometry. It also appears in mathematical 
physics, especially in the integrable models.
The most fundamental symmetric polynomial is the Schur polynomials,
which appears as the solutions of the KP hierarchy \cite{DJKM} and
wavefunctions of the phase model \cite{Bo,SU,KS,BBF,Wh,OY} for example.
It can also be used to construct a determinantal process
named the Schur process \cite{OR} which have
applications to the partition functions of the topological strings \cite{ORV},
for example.

We recently extended the relation between
the Schur polynomials and the integrable models,
and found that the wavefunctions of the one-parameter
family of the integrable five-vertex models
can be represented in Grothendieck polynomials \cite{MS}.
The Grothendieck polynomials
was originally introduced in the context of algebraic geometry
\cite{LS,FK,IN,IS}
as structure sheaf of the Schubert variety
in the $K$-theory of flag varieties.
By the identification of the wavefunctions
with the Grothendieck polynomials for Grassmannian varieties,
the determinant representations of the scalar products,
which is the inner product between the wavefunctions,
is nothing but the Cauchy identity for the Grothendieck polynomials.
We also revealed the meaning of the orthogonality to show that
the Grothendieck polynomials is a discrete orthogonal polynomial over the
``Cassini oval" \cite{GM}, the solution curve of the Bethe equations.

The integrable five-vertex model is related to the non-Hermitian quantum 
integrable spin chain and the stochastic process called the
totally asymmetric simple exclusion process (TASEP) \cite{MGP}. The TASEP is 
a many-particle stochastic process with exclusion as an interaction, which 
can be viewed as a natural generalization of random walk. From the following 
perspectives, these models can be regarded  as {\it fermion} systems. 
First, the space on which the Hamiltonian or the stochastic matrix acts 
is the tensor product of copies of two-dimensional space spanned by the empty
state and particle-occupied state, i.e. the double occupancy is forbidden.
Second, the above models are in one-to-one correspondence with fermion
systems through the Jordan-Wigner transformation. Finally for the
interaction-free case, the physical quantities, such as wavefunctions,
and the number of configurations of stochastic particles, etc.
are represented as the Schur polynomials which can be described
in terms of the formalism of the fermion and its Fock space.

In this paper, we study another type of integrable lattice model
derived by a different solution satisfying the same Yang-Baxter relation
for the five-vertex model. The model discussed in this paper is a 
boson model called the non-Hermitian phase model \cite{BN}, which is a 
one-parameter generalization of the phase model \cite{BIK}.
At a special point of the parameter, the non-Hermitian phase model
describes the totally asymmetric zero range process (TAZRP),
i.e., a stochastic process for a system of bosons which,
in contrast to the TASEP, the particles are allowed to occupy the same site.
The wavefunctions of the phase model was shown to be expressed 
as the Schur polynomials \cite{Bo}. In this sense, the phase model
can be interpreted as the free fermion systems.
We show that the one-parameter family of the phase model
corresponds to the generalization from the Schur polynomials
to the Grothendieck polynomials.
Namely, we show that the wavefunctions of the non-Hermitian phase model
is nothing but the Grothendieck polynomials: we establish a $K$-theoretic
boson-fermion correspondence at the level of the wavefunctions.
We show this by introducing the skew Grothendieck polynomials
and by deriving an addition theorem satisfied by the skew Grothendieck
polynomials. The skew Grothendieck polynomials can be introduced
in the context of the integrable five-vertex model naturally
from the relation between the wavefunctions of the $N$-particle state
and the $N$-variable Grothendieck polynomials (see \cite{Mc,IS} for another 
definition introduced from perspectives of combinatorics).
By this boson-fermion correspondence, the determinant representations 
of the scalar products and the summation of the wavefunctions
follow from the Cauchy identity and the summation formula
for the Grothendieck polynomials.
The Cauchy identity \cite{MS} used in this paper
is different from the dual Cauchy identity \cite{LN}
which is the pairing between Grothendieck polynomials and dual Grothendieck
polynomials.
Our approach is based on the quantum inverse scattering method which starts
from the $L$-operator.
There is another approach to the wavefunction from the coordinate Bethe ansatz
\cite{Take}, where the equivalence with the Grothendieck polynomials follows
as a consequence.

As another application of the above mentioned boson-fermion correspondence, 
we study the statistical mechanical model of a three-dimensional 
melting crystal. The model is in one-to-one correspondence with 
the plane partitions which  is regarded as a three-dimensional
extension of the Young diagrams. The partition function of the
model becomes a generating function of the plane partitions.
We show the partition function can be exactly calculated by
the scalar product of the non-Hermitian phase model. For the
finite volume, the partition function can be given by a
determinant form which reproduces MacMahon's generating function \cite{Mac}
at a special point of the parameter. In the infinite volume limit, 
the partition function is explicitly given by an infinite
product which is regarded as a $K$-theoretic generalization of the  
MacMahon function \cite{Mac}.
The $K$-theoretic MacMahon function
interpolates the ordinary MacMahon function and 
Euler's generating function of partitions.
Namely, it unifies the generating functions of the two-dimensional
and three-dimensional Young diagrams.
Note that there are other types of three-dimensional melting crystal models
\cite{DZ,FZ,Vu,FW} whose constructions are based on connections with
integrable models such as the loop models related to the
XXZ chain at roots of unity and 
free fermion models, or connections with symmetric polynomials 
such as the Schur polynomials and its generalization to the 
Hall-Littlewood and the Macdonald polynomials. 
Our model is different from them, and is based on the non-Hermitian integrable
spin chain and phase model,
whose wavefunctions are the Grothendieck polynomials.
The directions of extending the Schur polynomials to the 
Grothendieck and Macdonald polynomials are different,
hence the explicit forms of the corresponding skew polynomials and the weights
assigned to each plane partition are totally different between the one
in this paper and the ones in previous literature.
The Hall-Littlewood polynomials have representations in terms
of vertex operators, and many properties including the connection with the
melting crystal model can be treated in the same way for the Schur polynomials.
However, there is no such vertex operator representation
for the Grothendieck polynomials, and we approach to the problem 
of construction by using the correspondence
with the non-Hermitian integrable models.

This paper is organized as follows.
In the next section, we review the relation between the
wavefunctions of the integrable five-vertex model and the
Grothendieck polynomials. In section 3, we introduce the skew 
Grothendieck polynomials and derive an addition theorem 
satisfied by them.
In section 4, we introduce the non-Hermitian phase model,
and show that the wavefunctions can be expressed as Grothendieck polynomials
in section 5. In section 6, we discuss the melting crystal and
derive the exact expressions of the partition function of the model.
Section 7 is devoted to summary and discussion.
\section{Grothendieck polynomials and five-vertex models}
%
In this section, we recall a relationship between 
Grothendieck polynomials and the integrable five-vertex model 
\cite{MS}. Utilizing this relation, in the next section we introduce 
skew Grothendieck polynomials which play a key role in subsequent 
analysis.

Grothendieck polynomials were originally introduced as
polynomial representatives of structure sheaf of the Schubert variety
in the $K$-theory of flag varieties \cite{LS}.
The $\beta$-Grothendieck polynomials was introduced \cite{FK}
to unify the original Grothendieck polynomials and the Schubert polynomials,
which are structure sheaves for the $K$-theory ($\beta=-1$)
and the cohomology ($\beta=0$), respectively.
For the case when the flag variety is type $A$ Grassmannian varieties,
the Grothendieck polynomials can be represented as the
following determinant form \cite{IN}, which we regard  as the definition
of the Grothendieck polynomials.
\begin{definition} \cite{LS,FK,IN}
The Grothendieck polynomials is defined as the following determinant
\begin{align}
G_\lambda(z_1,\dots,z_N;\beta)=\frac{\mathrm{det}_N(z_j^{\lambda_k+N-k}
(1+\beta z_j)^{k-1})}{\prod_{1 \le j < k \le N}(z_j-z_k)}, \label{GR}
\end{align}
where $\{z_1,\dots,z_N\}$ is a set of variables and
$\lambda=(\lambda_1,\dots,\lambda_N)$
is a sequence of weakly decreasing
nonnegative integers $\lambda_1 \ge \dots \ge \lambda_N \ge 0$.
\end{definition}

Note that for the case of cohomology $\beta=0$,
the $\beta$-Grothendieck polynomials are nothing but the Schur polynomials,
which are Schubert polynomials for type $A$ grassmannian varieties.

In fact, the Grothendieck polynomials appear as wave-functions
in the five-vertex model. The five-vertex model is a two-dimensional
statistical mechanical model whose Boltzmann weights are given by 
the elements of the $L$-operator \cite{MS}
\footnote{The five-vertex model in this paper is different
from the one in \cite{BBF}.
The $R$-matrix in \cite{BBF} satisfying the $RLL$ relation is essentially
the trigonometric Felderhof model.
The $R$-matrix in this paper is a special limit of the XXZ chain
(the signs of weights when all spins are up and all spins are down
are different for the trigonometric Felderhof model,
and are the same for the XXZ chain),
and the corresponding $L$-operators are different.
For example, the configurations of the five-vertex models
having nonzero weights are different,
and the model in \cite{BBF} cannot create
either the $N$-particle state or its dual.
The model in this paper can create both the $N$-particle state
and its dual.}
$L_{a j}(u)\in\End(W_a \otimes V_j)$:
\begin{align}
L_{a j}(u)=u s_a s_j+\sigma_a^- \sigma_j^+
                   +\sigma_a^+ \sigma_j^-+(-\beta^{-1} u-u^{-1})n_a s_j
                   -\beta^{-1} u n_a n_j \quad (u\in \mathbb{C}),
\label{loperator}
\end{align}
where $W_a=\mathbb{C}^2$ (resp. $V_j=\mathbb{C}^2$)
denotes the $a$th auxiliary  space (resp. $j$th quantum space) 
spanned by the empty state $|0\ket_{a}=\binom{1}{0}_{a}$ (resp. 
$|0\ket_{j}=\binom{1}{0}_{j}$)
and particle occupied state $|1\ket_a=\binom{0}{1}_a$ 
(resp. $|1\ket_j=\binom{0}{1}_j$). The parameter $\beta$ can be taken arbitrary
(the parameter $\alpha$ in \cite{MS} corresponds to $\beta$ as
$\alpha=-\beta^{-1}$).
See also Figure \ref{weight} for a pictorial description of the $L$-operator 
\eqref{loperator}. 
The above $L$-operator is given by a solution to the following
Yang-Baxter relation ($RLL$-relation):
\begin{align}
R_{ab}(u,v)L_{a j}(u)L_{b j}(v)=
L_{b j}(v)L_{a j}(u)R_{ab}(u,v)
\label{RLL}
\end{align}
holding in $\End(W_a \otimes W_b \otimes V_j)$ for arbitrary
$u,v \in \mathbb{C}$. Here the matrix $R_{ab}(u,v)\in \End(W_a \otimes W_b)$
is defined by
\begin{align}
R(u,v)
=
\begin{pmatrix}
f(v,u) & 0 & 0 & 0 \\
0 & 0 & g(v,u) & 0 \\
0 & g(v,u) & 1 & 0 \\
0 & 0 & 0 & f(v,u)
\end{pmatrix},
\,\,
f(v,u)=\frac{u^2}{u^2-v^2},  \,
g(v,u)=\frac{uv}{u^2-v^2}, 
\label{Rmatrix}
\end{align}
which is a solution to the Yang-Baxter equation:
\begin{align}
R_{ab}(u,v)R_{ac}(u,w)R_{bc}(v,w)=
R_{bc}(v,w)R_{ac}(u,w)R_{ab}(u,v).
\label{YBE}
\end{align}

\begin{figure}[tt]
\begin{center}
\includegraphics[width=0.75\textwidth]{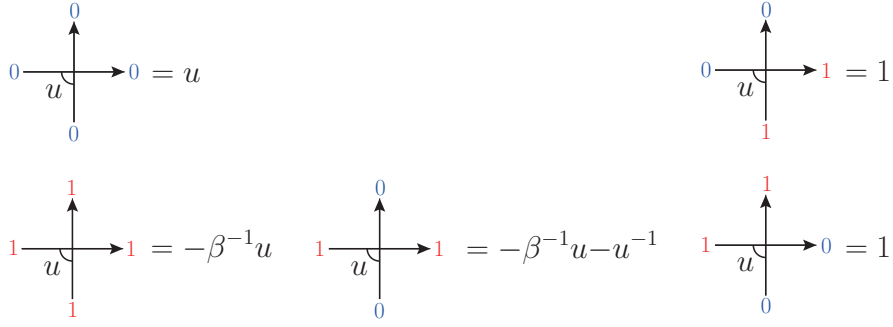}
\end{center}
\caption{The non-zero elements of the $L$-operator \eqref{loperator}.
The left (resp. up) arrow represents an auxiliary space 
(resp. a quantum space).
The indices 0 or 1 on the left (resp. right) of the vertices
denote the input (resp. output) states $|0\ket$ or $|1\ket$ 
in the auxiliary space, while those on the bottom (resp.  top) 
denote the input (resp. output) states in the quantum space. 
Note that the weights are invariant under a $180^{\circ}$ rotation.
}
\label{weight}
\end{figure}

Let us define the monodromy matrix $T(u)$ as a product of $L$-operators:
\begin{align}
T_{a}(u)=L_{a M}(u) \cdots L_{a 1}(u)&=
\begin{pmatrix}
A(u) & B(u)  \\
C(u) & D(u)
\end{pmatrix}
_{a}
\label{monodromy1}
\end{align}
acting  on $W_a \otimes (V_1\otimes\dots\otimes V_M)$.
Tracing out the auxiliary space, one obtains the transfer matrix 
$t(u)\in \End(V^{\otimes M})$
\begin{align}
t(u)=\Tr_{W_a} T_a(u)
\end{align}
 which commutes for different spectral parameters:
$[t(u),t(v)]=0$. The quantum Hamiltonian corresponding to the five-vertex
model is defined by $t(u)$:
\begin{align}
H:=
\sum_{j=1}^M 
\left\{
-\beta^{-1} \sigma_j^+\sigma_{j+1}^-+\frac{1}{4}
(\sigma_j^z\sigma_{j+1}^z-1)\right\}
=
\left.
\frac{\sqrt{-\beta}}{2}\frac{\der}{\der u} 
\log\left\{u^{-M} t(u)\right\}
\right|_{u=\sqrt{-\beta}}.
\label{Baxter}
\end{align}
Note that the above Hamiltonian, in general, is non-Hermitian.
For $\beta=-1$, the Hamiltonian corresponds to a stochastic
matrix describes a stochastic process called the totally asymmetric
simple exclusion process (TASEP).

The arbitrary $N$-particle state $|\psi(\{u \}_N) \ket$ 
(resp. its dual $\bra \psi(\{u \}_N)|$) 
(not normalized) with $N$ spectral parameters
$\{ u \}_N=\{ u_1,\dots,u_N \}$
is constructed by a multiple action
of $B$ (resp. $C$) operator on the vacuum state 
$|\Omega \ket:=| 0^{M} \ket:=|0\ket_1\
\otimes \dots \otimes |0\ket_{M}$
(resp. $\bra \Omega|:=\bra 0^{M}|:=
{}_1\bra 0|\otimes\dots \otimes{}_{M}\bra 0|$):
\begin{align}
|\psi(\{u \}_N) \ket=\prod_{j=1}^N B(u_j)| \Omega \ket,
\quad
\bra \psi(\{u \}_N)|=\bra \Omega| \prod_{j=1}^N C(u_j).
\label{statevector}
\end{align}
In \cite{MS}, we computed the overlap between the arbitrary 
off-shell\footnote{The terminology ``off-shell" means
that the set of parameters $\{u\}_N$ is arbitrary. On the
other hand ``on-shell" means that $\{u\}_N$ is taken so that
the $N$-particle state $|\psi(\{u \}_N)\ket$ is one of the
eigenstate of the Hamiltonian.}
$N$-particle state $|\psi(\{u\}_N)\ket$ and 
the (normalized) state with an arbitrary particle configuration 
$|x_1 \cdots x_N\ket$ $(x_1<\dots<x_N$), 
where $x_j$ denotes the positions of the particles. 
The wavefunction
$\bra x_1 \cdots x_N | \psi(\{u\}_N) \ket$ and its dual 
$\bra \psi(\{u\}_N)|x_1\cdots x_N \ket$ were found to be
given by the Grothendieck polynomials.
\begin{theorem} \label{th-wave} {\rm \cite{MS}}
The (off-shell) wavefunction and its dual wave-function
of the integrable five-vertex model
are, respectively, given by the Grothendieck polynomials as
\begin{align}
\bra x_1 \cdots x_N|\psi(\{ u \}_N) \ket&=(-\beta^{-1})^{N(N-1)/2}
\prod_{j=1}^N u_j^{M-1} G_\lambda(z_1,\dots,z_N;\beta), 
\label{wavefunctionone} \\
\bra \psi(\{ u \}_N)|x_1 \cdots x_N \ket&=(-\beta^{-1})^{N(N-1)/2}
\prod_{j=1}^N u_j^{M-1} G_{\lambda^\vee}(z_1,\dots,z_N;\beta),
\label{wavefunctiontwo}
\end{align}
where $z_j=-\beta^{-1}-u_j^{-2}$,  and
$\lambda=(\lambda_1,\dots,\lambda_N)$
($M-N \ge \lambda_1 \ge \cdots \ge \lambda_N \ge 0$)
and $\lambda^\vee=(\lambda_1^\vee,\dots,\lambda_N^\vee)$
($M-N \ge \lambda_1^\vee \ge \cdots \ge \lambda_N^\vee \ge 0$)
are the Young diagrams related to the particle configuration
$x=(x_1, \dots, x_N) $ as
$\lambda_j=x_{N-j+1}-N+j-1$ and $\lambda_j^\vee=M-N+j-x_j$,
respectively.
\end{theorem}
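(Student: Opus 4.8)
The plan is to evaluate the wavefunction \eqref{wavefunctionone} as the partition function of the five-vertex model on an $N\times M$ lattice, whose $N$ horizontal lines carry the auxiliary spaces of $B(u_1),\dots,B(u_N)$ and whose $M$ vertical lines carry the quantum spaces $V_1,\dots,V_M$, with boundary conditions fixed by $\bra x_1\cdots x_N|$ on top, by $|\Omega\ket$ on the bottom, and by the occupied-in/empty-out states on the right/left of each row that define the entry $B(u)$ of the monodromy matrix \eqref{monodromy1}. Reading off the Boltzmann weights from \eqref{loperator} (Figure \ref{weight}), I would sum over all admissible configurations and show that the result equals the bialternant \eqref{GR}, organising the computation by the quantum inverse scattering method and an induction on the particle number $N$.

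First I would extract from the $RLL$-relation \eqref{RLL} the commutation rule for the creation operators. Taking the matrix element between the auxiliary states $\bra 00|$ and $|11\ket$, where the $R$-matrix \eqref{Rmatrix} acts by the common scalar $f(v,u)$ on both sides, gives $B(u)B(v)=B(v)B(u)$. Hence the state \eqref{statevector} is symmetric in $u_1,\dots,u_N$; since $z_j=-\beta^{-1}-u_j^{-2}$ and $G_\lambda$ is symmetric in $z_1,\dots,z_N$, this matches the symmetry of the right-hand side and lets me fix a convenient ordering of the variables. I would then settle the base case $N=1$ by a direct vertex count: the single auxiliary line enters occupied, deposits its particle at the top site $x_1$ through the turning vertex of weight $1$, and contributes the diagonal weight $-\beta^{-1}u_1-u_1^{-1}=u_1 z_1$ at each of the $x_1-1$ columns before the turn and the weight $u_1$ at each of the $M-x_1$ columns after it, so that $\bra x_1|B(u_1)|\Omega\ket=(u_1 z_1)^{x_1-1}u_1^{M-x_1}=u_1^{M-1}z_1^{x_1-1}=u_1^{M-1}G_{(x_1-1)}(z_1;\beta)$.

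For the inductive step I would organise each configuration by the trajectories of the $N$ auxiliary lines, each running from the right boundary to one of the occupied sites on the top boundary, and sum over the assignments of the spectral parameters $u_1,\dots,u_N$ to the positions $x_1,\dots,x_N$. The two-body factors supplied by the $R$-matrix \eqref{Rmatrix} turn this sum into the ratio of an $N\times N$ determinant to the Vandermonde $\prod_{1\le j<k\le N}(z_j-z_k)$. Evaluating the determinant entries by the single-line vertex count of the base case and applying column operations, I would bring them to the form $z_j^{\lambda_k+N-k}(1+\beta z_j)^{k-1}$ of the numerator of \eqref{GR}, with the prefactor $(-\beta^{-1})^{N(N-1)/2}\prod_{j=1}^N u_j^{M-1}$ collecting the accumulated diagonal weights and the identification $\lambda_j=x_{N-j+1}-N+j-1$ emerging from the positions.

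The hard part is the $K$-theoretic deformation. If only non-overlapping trajectories contributed, the matrix entries would be the pure powers $z_j^{\lambda_k+N-k}$ of the Schur numerator --- the $\beta\to0$ specialisation $G_\lambda(z;0)=s_\lambda(z)$, in which the factors $(1+\beta z_j)^{k-1}$ trivialise. For general $\beta$ these factors are generated by configurations in which a later auxiliary line overlaps the worldlines created by earlier $B$-operators, via the vertex $(1,1)\to(1,1)$ of weight $-\beta^{-1}u$ together with the straight vertex $(1,0)\to(1,0)$ of weight $u_j z_j$; the power $k-1$ reflects the overlaps available to the $k$th line. I expect the principal obstacle to be proving that the sum over all such overlap patterns resums exactly into the factors $(1+\beta z_j)^{k-1}$ of \eqref{GR}. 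Once this is established, the remaining steps --- cancellation of the Vandermonde denominator and the parallel computation of the dual wavefunction \eqref{wavefunctiontwo}, in which $C(u)$ and $\lambda^\vee_j=M-N+j-x_j$ replace $B(u)$ and $\lambda_j$ --- are routine.
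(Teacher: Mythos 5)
Your base case $\bra x_1|B(u_1)|\Omega\ket=u_1^{M-1}z_1^{x_1-1}$ and the commutativity $B(u)B(v)=B(v)B(u)$ extracted from \eqref{RLL} are both correct, and the overall strategy (expanding $\prod_jB(u_j)|\Omega\ket$ as a sum over lattice configurations) is a legitimate way to attack \eqref{wavefunctionone}. But the proposal has a genuine gap, and you name it yourself: the claim that the sum over overlap patterns of the auxiliary trajectories ``resums exactly into the factors $(1+\beta z_j)^{k-1}$'' is precisely the content of the theorem, and no argument is given for it. Nothing in the proposal explains why the two-body factors $f$ and $g$ of \eqref{Rmatrix} organise the sum over assignments of $u_1,\dots,u_N$ to $x_1,\dots,x_N$ into the particular alternant $\mathrm{det}_N\bigl(z_j^{\lambda_k+N-k}(1+\beta z_j)^{k-1}\bigr)/\prod_{1\le j<k\le N}(z_j-z_k)$ of \eqref{GR} rather than some other symmetric function; for $\beta\neq 0$ the trajectories may overlap via the weight-$(-\beta^{-1}u)$ vertex, the model is no longer free-fermionic, and the cancellation mechanism available in the Schur case does not apply verbatim. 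As written, the central step of the proof is an acknowledged conjecture, so this is a plan rather than a proof.

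For comparison, the paper does not reprove \eqref{wavefunctionone} at all --- it is imported from \cite{MS} --- and the only argument supplied in the text is the derivation of \eqref{wavefunctiontwo} from \eqref{wavefunctionone}: the weights \eqref{loperator} are invariant under a $180^{\circ}$ rotation, so the rotated picture of $\bra x_1\cdots x_N|\psi(\{u\}_N)\ket$ is the dual wavefunction $\bra\psi(\{u\}_N)|x_1^{\vee}\cdots x_N^{\vee}\ket$ with $x^{\vee}$ given by \eqref{reverse}, and relabelling $x^{\vee}\to x$ (equivalently $\lambda\to\lambda^{\vee}$) gives \eqref{wavefunctiontwo}. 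You instead propose a ``parallel computation'' with $C(u)$, which would work but duplicates the entire argument; the rotation observation gets the dual formula for free. If you want a self-contained proof of \eqref{wavefunctionone}, the practicable routes are either an Izergin--Korepin-style characterisation (show both sides are symmetric polynomials of bounded degree satisfying the same recursion at special values of $u_N$ or $x_N$), or induction on $N$ via the explicit one-row action \eqref{actionone} of $B(u)$, establishing the single-variable branching coefficients first and verifying that the branching rule characterises $G_\lambda$. Either way, the resummation you flagged must actually be carried out, not assumed.
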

\noindent
Note that the Young diagram $\lambda^\vee$ is the complementary
part of the Young diagram $\lambda$ in the $N \times (M-N)$ rectangular
Young diagram.  Let $x_j^{\vee}$ be the particle configuration given by
$x_j^{\vee}=\lambda^{\vee}_{N-j+1}+j$.
The particle configurations $x=(x_1,\dots,x_N)$ 
($x_j=\lambda_{N-j+1}+j$) and $x^{\vee}=(x^{\vee}_1,\dots,x^{\vee}_N)$ 
for given $\lambda$ are connected by the relation:
\begin{align}
(x^{\vee}_1,\dots,x^{\vee}_N)=(M-x_N+1,\dots,M-x_1+1) 
\text{ for $x_j=\lambda_{N-j+1}+j$ and
 $x^{\vee}_j=\lambda^{\vee}_{N-j+1}+j$}.
\label{reverse}
\end{align}
In Figure~\ref{config}, we denote an example of 
$\lambda$ and $\lambda^{\vee}$ together with the corresponding
particle configurations $x_j=\lambda_{N-j+1}+j$ and 
$x^{\vee}_j=\lambda^{\vee}_{N-j+1}+j$. From this, one can intuitively 
find that the positions of the particles corresponding to $\lambda^{\vee}$ are 
related to those corresponding to $\lambda$ after a $180^{\circ}$ rotation.
\begin{figure}[ttt]
\begin{center}
\includegraphics[width=0.75\textwidth]{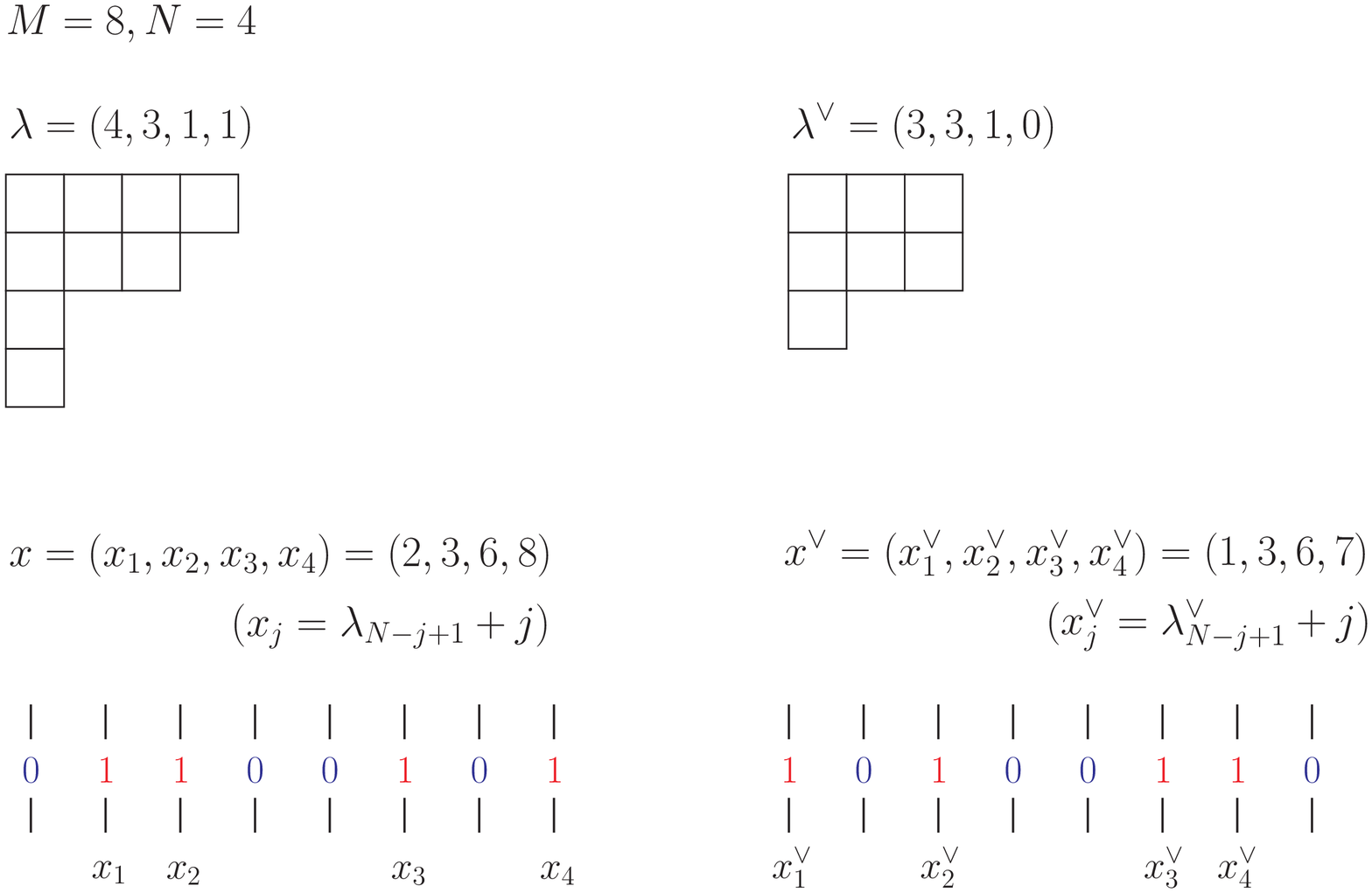}
\end{center}
\caption{An example of the partitions $\lambda$ and $\lambda^{\vee}$
and the corresponding particle configurations $x$ ($x_j=\lambda_{N-j+1}+j$) 
and $x^{\vee} $ ($x_j^{\vee}=\lambda_{N-j+1}^{\vee}+j$) for the conditions $M=8$,
$N=4$ and $\lambda=(4,3,1,1)$.  One sees that
the positions of the particles corresponding to $\lambda^{\vee}$ are 
related to those corresponding to $\lambda$ after a $180^{\circ}$ rotation.
}
\label{config}
\end{figure}

The graphical description of the wavefunction 
$\bra x_1\cdots x_N|\psi(\{u\}_N)\ket$ \eqref{wavefunctionone}
is also depicted in Figure~\ref{wave}.   Due to the invariance
of the Boltzmann weights under a $180^{\circ}$ rotation and $[B(u),B(v)]=0$
($[C(u),C(v)]=0$),  the graphical description
of the wavefunction is also invariant under the rotation. One easily
finds that the rotated graph corresponds to the dual wavefunction
$\bra \psi(\{u_N\})|x^{\vee}_1\cdots x^{\vee}_N\ket$ where
 the positions of the  particles $x^{\vee}_j$ is given by \eqref{reverse}. 
After transforming
$x^{\vee}\to x$ which corresponds to the transformation 
$\lambda\to\lambda^{\vee}$, one finds \eqref{wavefunctiontwo}
is valid if \eqref{wavefunctionone} holds.

\begin{figure}[ttt]
\begin{center}
\includegraphics[width=0.7\textwidth]{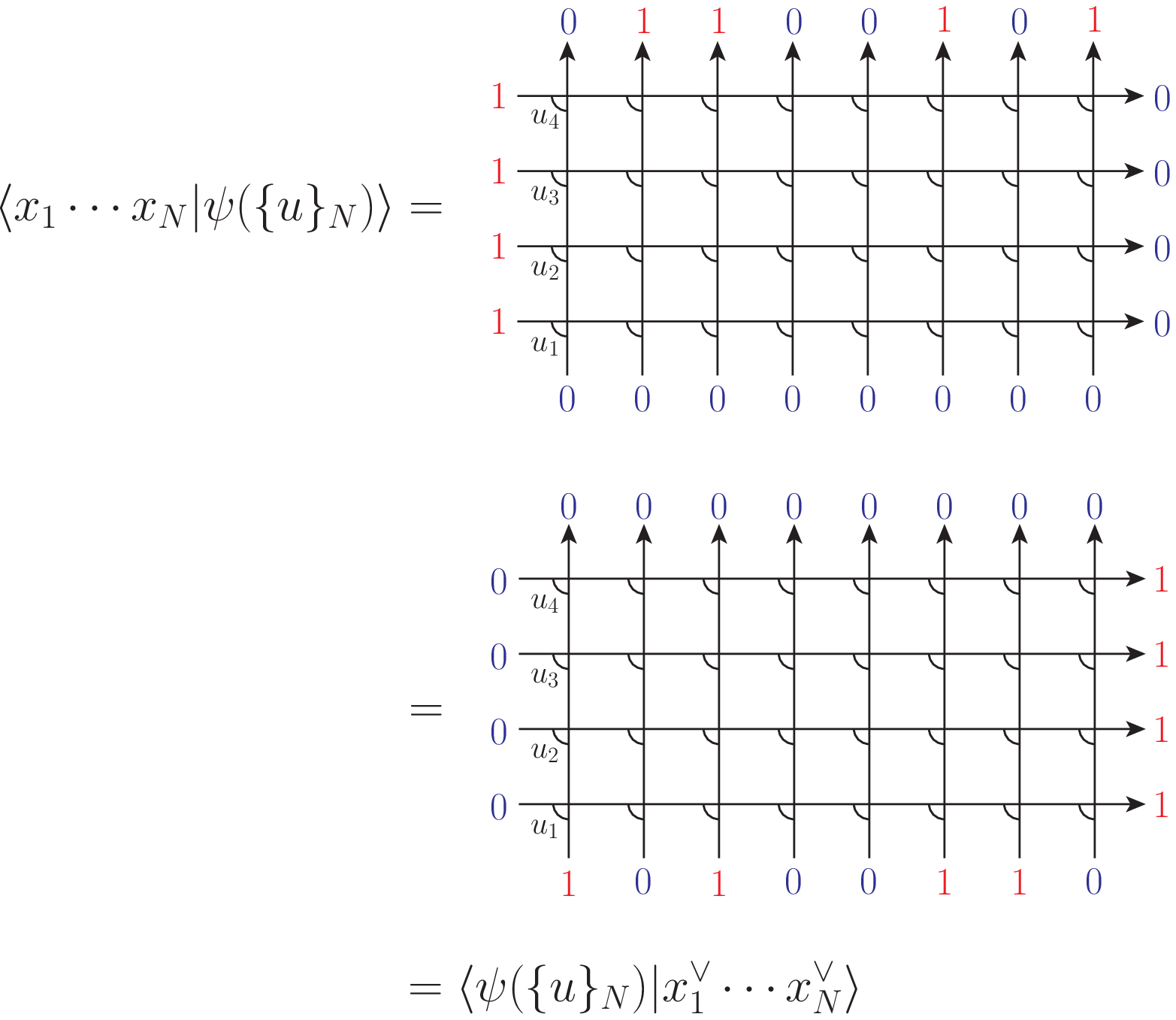}
\end{center}
\caption{A graphical description of the wavefunction \eqref{wavefunctionone}
 for $M=8$, $N=4$ and $\lambda=(4,3,1,1)$ 
(same conditions in Figure~\ref{config}). To go from the first equality to 
the second equality, we use the invariance of the Boltzmann weights
under a $180^{\circ}$ rotation and the commutativity of the $B$- and 
$C$-operators $[B(u),B(v)]=0$ and $[C(u),C(v)]=0$.
}
\label{wave}
\end{figure}

One can show the following Cauchy identity holds for the
Grothendieck polynomials, which is obtained by comparing
the determinant representations for the scalar product 
$\bra\psi( \{v\}_N)|\psi(\{u\}_N)\ket$ \cite{MS} 
to that obtained by multiplying \eqref{wavefunctionone} by 
\eqref{wavefunctiontwo} and then by summing over all 
possible configurations $1\le x_1<\cdots<x_N\le M$.

\begin{theorem}{\rm \cite{MS}}
The following Cauchy identity
for the Grothendieck polynomials holds.
\begin{align}
&\sum_{\lambda \subseteq L^N} G_\lambda(z_1,\dots,z_N;\beta)
G_{\lambda^\vee}(w_1,\dots,w_N;\beta)
\nonumber \\
=&\prod_{1 \le j<k \le N} \frac{1}{(z_j-z_k)(w_k-w_j)}
\mathrm{det}_N
\left[
\frac{z_j^{L+N}(1+\beta w_k)^{N-1}-w_k^{L+N}(1+\beta z_j)^{N-1}}
{z_j-w_k}
\right], \label{cauchy}
\end{align}
where the Young diagram
$\lambda^\vee=(\lambda_1^\vee,\dots,\lambda_N^\vee)$
is given by the Young diagram $\lambda=(\lambda_1,\dots,\lambda_N)$
as $\lambda_j^\vee=L-\lambda_{N+1-j}$.
\end{theorem}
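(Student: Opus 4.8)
The plan is to evaluate the scalar product $\bra\psi(\{v\}_N)|\psi(\{u\}_N)\ket$ in two independent ways and to match the outcomes. First, inserting the completeness relation on the $N$-particle sector,
\begin{align}
\sum_{1\le x_1<\cdots<x_N\le M}|x_1\cdots x_N\ket\bra x_1\cdots x_N|=\mathrm{Id}, \nonumber
\end{align}
writes it as $\sum_x\bra\psi(\{v\}_N)|x_1\cdots x_N\ket\,\bra x_1\cdots x_N|\psi(\{u\}_N)\ket$. Second, the same quantity can be computed directly from the commutation relations encoded in the $RLL$-relation \eqref{RLL}, which yields a closed determinant. Equating the two representations and stripping off the common analytic prefactors will leave precisely the Cauchy identity \eqref{cauchy}.

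For the first evaluation I substitute Theorem~\ref{th-wave}: by \eqref{wavefunctionone} and \eqref{wavefunctiontwo} the two factors are Grothendieck polynomials times the prefactors $(-\beta^{-1})^{N(N-1)/2}\prod_j u_j^{M-1}$ and $(-\beta^{-1})^{N(N-1)/2}\prod_j v_j^{M-1}$. The dictionary $\lambda_j=x_{N-j+1}-N+j-1$ turns the configuration sum $1\le x_1<\cdots<x_N\le M$ into the sum over $\lambda\subseteq L^N$ with $L=M-N$, and a short computation with \eqref{reverse} shows that the diagram attached to the dual factor is exactly the complement $\lambda_j^\vee=L-\lambda_{N+1-j}$ of the statement. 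Thus the first evaluation equals
\begin{align}
(-\beta^{-1})^{N(N-1)}\prod_{j=1}^N u_j^{M-1}v_j^{M-1}\sum_{\lambda\subseteq L^N}G_\lambda(z_1,\dots,z_N;\beta)G_{\lambda^\vee}(w_1,\dots,w_N;\beta), \nonumber
\end{align}
with $z_j=-\beta^{-1}-u_j^{-2}$ and $w_j=-\beta^{-1}-v_j^{-2}$.

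It then remains to compare this with the direct determinant for $\bra\psi|\psi\ket$ and to change variables from $(u,v)$ to $(z,w)$. After the prefactor $\prod_j u_j^{M-1}v_j^{M-1}$ and the power of $\beta$ cancel against their counterparts in the determinant, one must check that the Vandermonde factors reorganize into $\prod_{j<k}(z_j-z_k)(w_k-w_j)$ --- the sign $(-1)^{N(N-1)/2}$ generated by reversing the $w$-columns is what converts $\prod_{j<k}(w_j-w_k)$ into $\prod_{j<k}(w_k-w_j)$ --- and that each matrix entry collapses to $\big(z_j^{L+N}(1+\beta w_k)^{N-1}-w_k^{L+N}(1+\beta z_j)^{N-1}\big)/(z_j-w_k)$ under the substitution. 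I expect the main obstacle to lie here: deriving the closed determinant for the scalar product from the commutation relations (by induction on $N$) and carrying out the prefactor bookkeeping and change of variables that bring its entries into this precise rational form; the identification of the configuration sum with the sum over $\lambda$ and $\lambda^\vee$ is comparatively routine. As a self-contained alternative one can attempt a direct proof from the determinant definition \eqref{GR}, writing $G_\lambda G_{\lambda^\vee}$ as a product of minors and summing over $\lambda$ by a Cauchy--Binet argument; there the obstacle shifts to handling the position-dependent factors $(1+\beta z_j)^{k-1}$, which are tied to the column index rather than to the exponent being summed.
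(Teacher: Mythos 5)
Your proposal follows essentially the same route as the paper: the identity is obtained by comparing the determinant representation of the scalar product $\bra\psi(\{v\}_N)|\psi(\{u\}_N)\ket$ (derived from the commutation relations of the $RLL$-algebra) with the expansion obtained by inserting the complete set of configuration states and substituting the wavefunctions \eqref{wavefunctionone} and \eqref{wavefunctiontwo}. Your identification of where the real work lies --- establishing the closed determinant form of the scalar product and the subsequent change of variables from $(u,v)$ to $(z,w)$ --- is accurate; that is exactly the content deferred to \cite{MS}.
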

\noindent
Here we have set $L=M-N$, but the above formula holds for any
$L \ge 0$.
As a limiting case of the Cauchy identity, we have also
derived the summation formula for the Grothendieck polynomials.
\begin{theorem} {\rm \cite{MS}}
The following summation for the Grothendieck polynomials holds.
\begin{align}
&\sum_{\lambda\subseteq L^N}(-\beta)^{\sum_{j=1}^N\lambda_j}
G_{\lambda}(z_1,\dots,z_N;\beta)=\prod_{1\le j<k\le N}\frac{1}{z_k-z_j}
\mathrm{det}_N V
\label{sum-Gr}
\end{align}
with an $N\times N$ matrix $V$ whose matrix elements are
\begin{align}
&
V_{jk}=\sum_{m=0}^{j-1}(-1)^m (-\beta)^{j-N}
\binom{L+N}{m}(1+\beta z_k)^{m-j+N-1}
\quad (1\le j \le N-1), \nn \\
&
V_{Nk}=-\sum_{m=\mathrm{max}(N-1,1)}^{L+N} (-1)^m \binom{L+N}{m}
(1+\beta z_k)^{m-1}.
\end{align}
\end{theorem}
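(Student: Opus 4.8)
The plan is to obtain \eqref{sum-Gr} as a confluent specialization of the Cauchy identity \eqref{cauchy}, sending all the auxiliary variables to the distinguished value $w_j\to-\beta^{-1}$ at which $1+\beta w_j$ vanishes. Since \eqref{cauchy} is an identity of rational functions whose two sides are in fact polynomials in $w_1,\dots,w_N$ (all the Vandermonde denominators cancel), the limit may be taken termwise on both sides and the equality is preserved. On the left this limit collapses the $w$-dependent Grothendieck factor into the weight $(-\beta)^{|\lambda|}$, while on the right it turns the alternant into a Wronskian-type determinant whose entries are Taylor coefficients at $w_0:=-\beta^{-1}$; the bulk of the work is to identify the latter with $\det_N V$.

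First I would evaluate the left-hand side. Applying the standard confluent-alternant (Wronskian) limit to $G_\mu(w_1,\dots,w_N;\beta)=\det_N\!\big(f_k(w_j)\big)/\prod_{j<k}(w_j-w_k)$ with $f_k(w)=w^{\mu_k+N-k}(1+\beta w)^{k-1}$, the factor $(1+\beta w)^{k-1}$ vanishes to order $k-1$ at $w_0$, so the resulting matrix of Taylor coefficients is triangular and one reads off the clean evaluation
\begin{align}
G_\mu(-\beta^{-1},\dots,-\beta^{-1};\beta)=(-\beta)^{-|\mu|}.\nn
\end{align}
Setting $\mu=\lambda^\vee$ and using $|\lambda^\vee|=NL-|\lambda|$, the left-hand side of \eqref{cauchy} becomes $(-\beta)^{-NL}\sum_{\lambda\subseteq L^N}(-\beta)^{|\lambda|}G_\lambda(z_1,\dots,z_N;\beta)$, i.e. the target sum up to an explicit scalar.

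For the right-hand side I would apply the same confluent limit to the columns of the determinant in \eqref{cauchy}, with the Vandermonde $\prod_{j<k}(w_k-w_j)$ already present in the prefactor absorbing the singularity. This replaces the $(j,k)$ entry by the $(k-1)$-th Taylor coefficient at $w_0$ of
\begin{align}
\phi_j(w):=\frac{z_j^{L+N}(1+\beta w)^{N-1}-w^{L+N}(1+\beta z_j)^{N-1}}{z_j-w}.\nn
\end{align}
The structural key is that the first numerator term carries $(1+\beta w)^{N-1}$, vanishing to order $N-1$ at $w_0$: hence for Taylor orders $k=1,\dots,N-1$ only the second term survives, and expanding $w^{L+N}$ and $(z_j-w)^{-1}$ about $w_0$ produces precisely the binomial sums $\sum_{m=0}^{k-1}(-1)^m\binom{L+N}{m}(1+\beta z_j)^{m-k+N-1}$ of the rows $V_{jk}$ with $1\le j\le N-1$. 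For the top order $k=N$ the first term also contributes; re-expanding its coefficient $z_j^{L+N}$ in powers of $1+\beta z_j$ and combining with the second term yields the sum $\sum_m(-1)^m\binom{L+N}{m}(1+\beta z_j)^{m-1}$ of the distinguished last row $V_{Nj}$. A single lower power shared between this row and the earlier ones is removed by an elementary row operation (adding a multiple of the $k=1$ row, which is a pure multiple of $(1+\beta z_j)^{N-2}$), after which each row of the confluent matrix is a scalar multiple of the corresponding row of $V$.

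The main obstacle is precisely this last matching: carrying the Taylor coefficients of $\phi_j$ through the confluent limit and reconciling the piecewise structure of $V$ — the ranges of summation, the binomial coefficients, and the accumulated powers of $\beta$ — row by row. Once that is done, the remaining task is bookkeeping of the $w_0$-independent scalars: the sign from the confluent alternant, the factor $(-\beta)^{-NL}$ from the left-hand side, and the per-row factors $(-1)^{L+1}(-1)^k\beta^{-L}$ extracted on the right. I expect these to recombine, via the parity identity $(-1)^{N+N(N+1)/2}=(-1)^{N(N-1)/2}$, into the single prefactor $\prod_{1\le j<k\le N}(z_k-z_j)^{-1}$ of \eqref{sum-Gr}, completing the proof.
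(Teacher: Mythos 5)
Your proposal is correct and follows exactly the route the paper indicates: the theorem is presented there as ``a limiting case of the Cauchy identity'' \eqref{cauchy}, and your confluent specialization $w_j\to-\beta^{-1}$ (using $G_{\lambda^\vee}(-\beta^{-1},\dots,-\beta^{-1};\beta)=(-\beta)^{-|\lambda^\vee|}$ on the left and the Taylor/Wronskian limit of the determinant on the right, with the row operation absorbing the $m=N-1$ term) is precisely that limit, with the sign bookkeeping checking out.
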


 Let us comment on a possible extension of the method to another 
integrable model also defined by \eqref{RLL}. 
In fact,  for given $R$-matrix 
\eqref{Rmatrix}, the $L$-operator \eqref{loperator}
is  not the unique solution to \eqref{RLL}. Indeed,
the non-Hermitian phase model discussed in section~\ref{NHPM} is 
constructed by another solution to \eqref{RLL}.
As mentioned before, the quantum space on which the $L$-operator 
\eqref{loperator} acts is the tensor product of copies of two-dimensional 
space spanned by the empty state $|0\ket$ and particle-occupied 
states $|1\ket$, i.e.,  the double occupancy is  forbidden. 
In this sense, the corresponding quantum system \eqref{Baxter} 
is interpreted as a fermion system. (More precisely, there exists 
a one-to-one correspondence between the spin system \eqref{Baxter} 
and a fermion system through the Jordan-Wigner transformation.) 
On the other hand, the phase model (see \eqref{phase} and \eqref{Lop-boson}
in section~\ref{NHPM})
is a boson system: the quantum space is defined as the tensor 
product of bosonic Fock spaces whose dimension is infinite. At first glance
it seems there is little connection between the fermion system \eqref{Baxter}
and the bosonic system \eqref{phase}, but by definition the algebraic
relations of the both $B$-operators (or $C$-operators) constructing 
the $N$-particle states are  completely the same. Moreover, as 
shown later, the $N$-particle states for the phase model can be uniquely
mapped to those for a fermion model \eqref{Baxter}, and vice versa. These 
observations intuitively indicate
that there is a close correspondence between the fermion \eqref{Baxter} 
and the boson \eqref{phase} models.  This intuition is true.  Indeed, the 
wavefunctions for the both models can be given by the Grothendieck polynomials. 
To show this, first we give an addition theorem satisfied by the  
Grothendieck polynomials, introducing the skew Grothendieck polynomials.

\section{Skew Grothendieck polynomials and addition theorem}
%
The relations \eqref{wavefunctionone} and \eqref{wavefunctiontwo}
between the wavefunctions and the Grothendieck polynomials
lead us to the natural definition of the single variable skew Grothendieck polynomials.

\begin{definition} (cf. \cite{Mc})
The single variable skew Grothendieck polynomial
is defined in terms of the $B$-operator of the five-vertex model:
\begin{align}
&G_{\mu/\lambda}(z;\beta):=\bra y_1 \cdots y_{N+1}|
(-\beta)^N u^{1-M} B(u)|x_1 \cdots x_N \ket, \label{skewone} 
\end{align}
where
$z=-\beta^{-1}-u^{-2}$, and
$\lambda=(\lambda_1,\dots,\lambda_N)$
($M-N \ge \lambda_1 \ge \cdots \ge \lambda_{N} \ge 0$)
and $\mu=(\mu_1,\dots,\mu_{N+1})$
($M-N-1 \ge \mu_1 \ge \cdots \ge \mu_{N+1} \ge 0$)
are  the Young diagrams related to the particle configurations
$x=(x_1,\dots,x_N)$ ($x_j=\lambda_{N-j+1}+j$) and 
$y=(y_1,\dots,y_{N+1})$ ($y_j=\mu_{N-j+2}+j$), respectively.
\end{definition}
\noindent
We shall see later that this is a natural extension of the
skew Schur polynomials.
\begin{proposition}\label{skew}
The skew Grothendieck polynomial $G_{\mu/\lambda}(z;\beta)$
defined in \eqref{skewone} can be given by in terms of the
$C$-operator:
\begin{align}
G_{\mu/\lambda}(z;\beta)=
\bra x^{\vee}_1 \cdots x^{\vee}_N| (-\beta)^N u^{1-M} C(u)
|y_1^{\vee} \cdots y_{N+1}^{\vee} \ket,
\label{skewtwo}
\end{align}
or equivalently,
\begin{align}
G_{\mu^{\vee}/\lambda^{\vee}}(z;\beta)=
\bra x_1 \cdots x_N| (-\beta)^N u^{1-M} C(u)
|y_1 \cdots y_{N+1} \ket,
\label{skewthree}
\end{align}
where the particle positions $x^{\vee}=(x^{\vee}_1,\dots,x^{\vee}_N)$ and
$y^{\vee}=(y^{\vee}_1,\dots,y^{\vee}_{N+1})$ are, respectively, defined as
$x_j^{\vee}=\lambda_{N-j+1}^{\vee}+j$ and $y_j^{\vee}=
\mu_{N-j+2}^{\vee}+j$. 
\end{proposition}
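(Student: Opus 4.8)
The plan is to prove \eqref{skewtwo} directly from the lattice (graphical) representation of the matrix element in \eqref{skewone}, exploiting the $180^\circ$ rotation invariance of the Boltzmann weights in exactly the same way that \eqref{wavefunctiontwo} was obtained from \eqref{wavefunctionone}; the equivalent form \eqref{skewthree} will then follow by a relabeling. First I would represent $\bra y_1 \cdots y_{N+1}| B(u)|x_1 \cdots x_N \ket$ as a one-row partition function of the five-vertex model: a horizontal chain of $M$ vertices threaded by a single auxiliary line, with the bottom (input) edges fixed to the configuration $x$, the top (output) edges fixed to $y$, and the auxiliary edge carrying $|1\ket$ on one end and $|0\ket$ on the other. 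This auxiliary boundary data is precisely what selects the $B$-entry of the monodromy matrix and what accounts for the increase $N\to N+1$ in the particle number. The value of the diagram is the sum over all internal edge configurations of the product of local weights.

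Next I would rotate the whole diagram by $180^\circ$. Each local weight is invariant under this rotation (Figure~\ref{weight}), and rotating every internal edge gives a bijection between the admissible internal configurations of the original and the rotated lattices, so the partition function is unchanged. In the rotated diagram the external data is transformed: top and bottom rows are interchanged and reflected left–right, so the old output $y$ becomes the new input and the old input $x$ becomes the new output, each read in reversed order. By the reversal relation \eqref{reverse}, applied with $N+1$ and with $N$ particles respectively, the new input configuration is $y^\vee$ and the new output configuration is $x^\vee$. Simultaneously the auxiliary line is reversed, so the pair $(|1\ket \text{ in}, |0\ket \text{ out})$ characterizing $B$ is sent to the pair $(|0\ket \text{ in}, |1\ket \text{ out})$ characterizing $C$. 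Hence the rotated diagram computes $\bra x^\vee_1 \cdots x^\vee_N| C(u)|y^\vee_1 \cdots y^\vee_{N+1}\ket$; equating the two partition functions and multiplying by the common prefactor $(-\beta)^N u^{1-M}$ yields \eqref{skewtwo}.

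Finally, \eqref{skewthree} follows by substituting $\lambda\to\lambda^\vee$ and $\mu\to\mu^\vee$ into \eqref{skewtwo} and using that $\vee$ is an involution, both on Young diagrams ($(\lambda^\vee)^\vee=\lambda$) and on the associated particle configurations ($(x^\vee)^\vee=x$, which is immediate from \eqref{reverse}). Under this substitution the bra configuration $x^\vee$ becomes $x$ and the ket configuration $y^\vee$ becomes $y$, so \eqref{skewtwo} turns into the stated $C$-operator expression for $G_{\mu^\vee/\lambda^\vee}(z;\beta)$. Note also that $z=-\beta^{-1}-u^{-2}$ and the prefactor are unaffected, since both depend only on the single parameter $u$ that is common to the two sides.

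The main obstacle I expect is the careful bookkeeping in the rotation step: one must check that the auxiliary-space boundary states transform so that the $B$-entry is genuinely sent to the $C$-entry, rather than to $A$, $D$, or some transposed object, and that the index shifts in \eqref{reverse} are mutually consistent between the $N$-particle configuration $x$ and the $(N+1)$-particle configuration $y$. The weight invariance and the operator commutativity needed for the wavefunction version of this argument are already in place, so once the boundary data is matched correctly the identity is immediate.
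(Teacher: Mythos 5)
Your proposal is correct and follows essentially the same route as the paper's own proof: the $180^{\circ}$ rotation invariance of the Boltzmann weights applied to the one-row lattice representation of the $B$-matrix element, turning it into the $C$-matrix element with reversed configurations $x^{\vee}$, $y^{\vee}$, followed by the relabeling $\lambda\to\lambda^{\vee}$, $\mu\to\mu^{\vee}$ to obtain \eqref{skewthree}. Your additional bookkeeping of how the auxiliary boundary states map $B$ to $C$ is a welcome elaboration but does not change the argument.
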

\begin{proof}
The graphical argument is useful to show \eqref{skewtwo} and \eqref{skewthree}.
As discussed in  the wavefunctions 
(see below Theorem~\ref{th-wave} and Figure~\ref{wave}), 
the definition \eqref{skewone} is invariant under a $180^{\circ}$ rotation.
The rotated graph corresponds to 
$\bra x^{\vee}_1 \cdots x^{\vee}_N| (-\beta)^N u^{1-M} C(u)
|y_1^{\vee} \cdots y_{N+1}^{\vee} \ket$. Thus we have \eqref{skewtwo}.
Transforming the variables as $x_j^{\vee}\to x_j=\lambda_{N-j+1}+j$ ($1\le j\le N$)
and $y_j^{\vee} \to y_j=\mu_{N-j+2}+j$ ($1\le j\le N+1$)
which, respectively,  correspond to the transformations  $\lambda\to\lambda^{\vee}$
and $\mu\to\mu^{\vee}$,  one obtains  \eqref{skewthree}.
\end{proof}

Let us define the ordering on the Young diagrams for later purpose.
\begin{definition}
For two Young diagrams $\mu=(\mu_1,\mu_2,\dots,\mu_{N+1})$ and
$\lambda=(\lambda_1,\lambda_2,\dots,\lambda_N)$,
we say that $\mu$ and $\lambda$ interlace,
if and only if $\mu_j \ge \lambda_j \ge \mu_{j+1}$ $(j=1,\dots,N)$,
and write this relation as $\mu \succ \lambda$. 
Correspondingly, 
we write $y\succ x$ for the particle configurations $y=(y_1,\dots,y_{N+1})$
($y_j=\mu_{N-j+2}+j$)
and  $x=(x_1,\dots,x_N)$ ($x_j=\lambda_{N-j+1}+j$), if and only if
$\mu \succ \lambda$ or equivalently $y_{j}\le x_j< y_{j+1}$ $(j=1,\dots,N)$ holds.
\end{definition}
\noindent
In Figure~\ref{Interlacing-1} (resp. Figure~\ref{Interlacing-2}), an example of  the
 interlacing
(resp. non-interlacing) partitions and the corresponding particle configurations 
are depicted. It immediately follows that
\begin{align}
\mu \succ \lambda \Longleftrightarrow \mu^{\vee}\succ \lambda^{\vee}.
\label{correspond}
\end{align}

\begin{figure}[ttt]
\begin{center}
\includegraphics[width=0.9\textwidth]{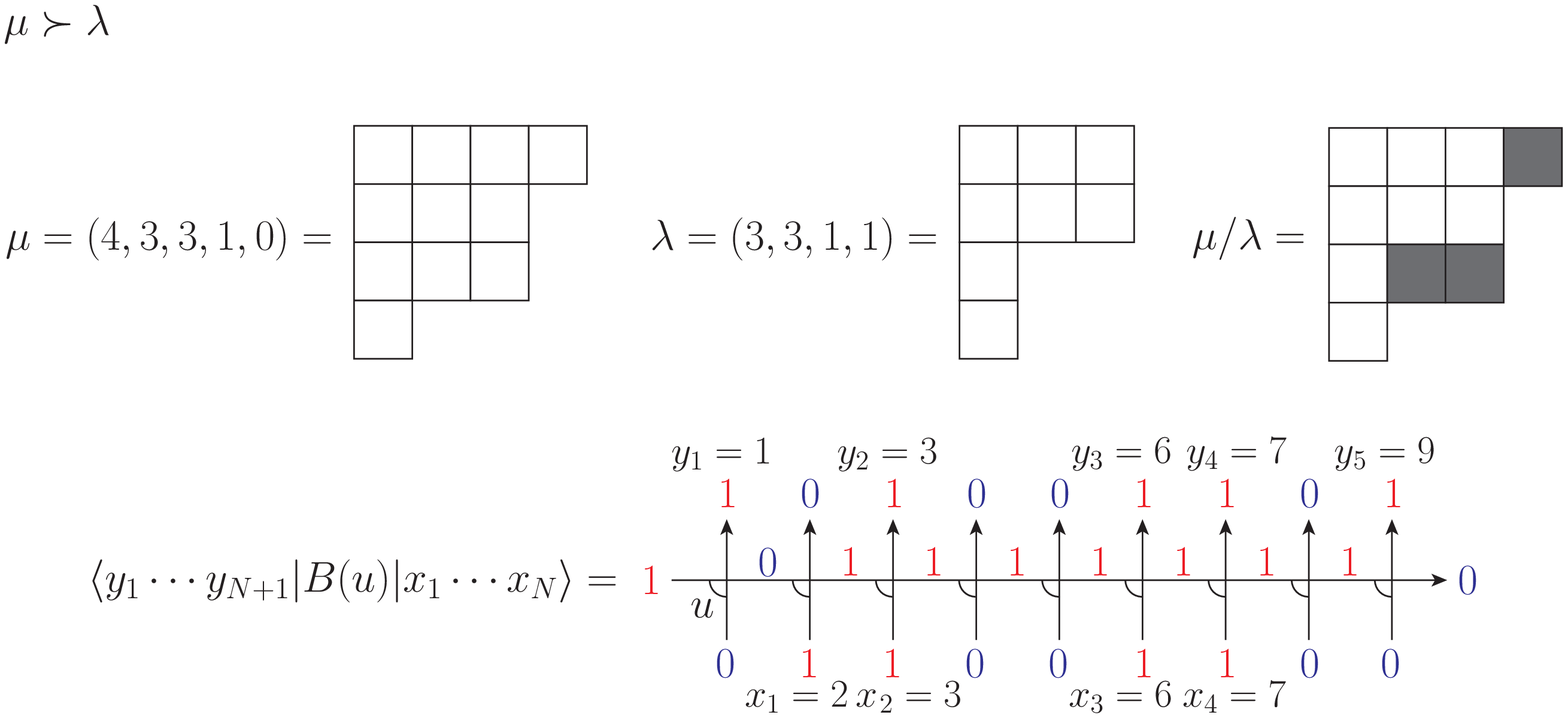}
\end{center}
\caption{An example of the interlacing partition functions 
$\mu \succ \lambda$. Here we have set
$\mu=(4,3,3,1,0)$ and $\lambda=(3,3,1,1)$. The skew Young diagram $\mu/\lambda$ is
 depicted as the gray boxes. The 
input (resp. output) state denotes the particle configuration corresponding to $\lambda$ 
(resp. $\mu$). For interlacing partitions $\mu\succ\lambda$,  
the matrix element $\bra y_1 \cdots y_{N+1}|B(u) |x_1\cdots x_N\ket$ is non-zero
for the generic value of $u$. }
\label{Interlacing-1}
\end{figure}

\begin{figure}[ttt]
\begin{center}
\includegraphics[width=0.9\textwidth]{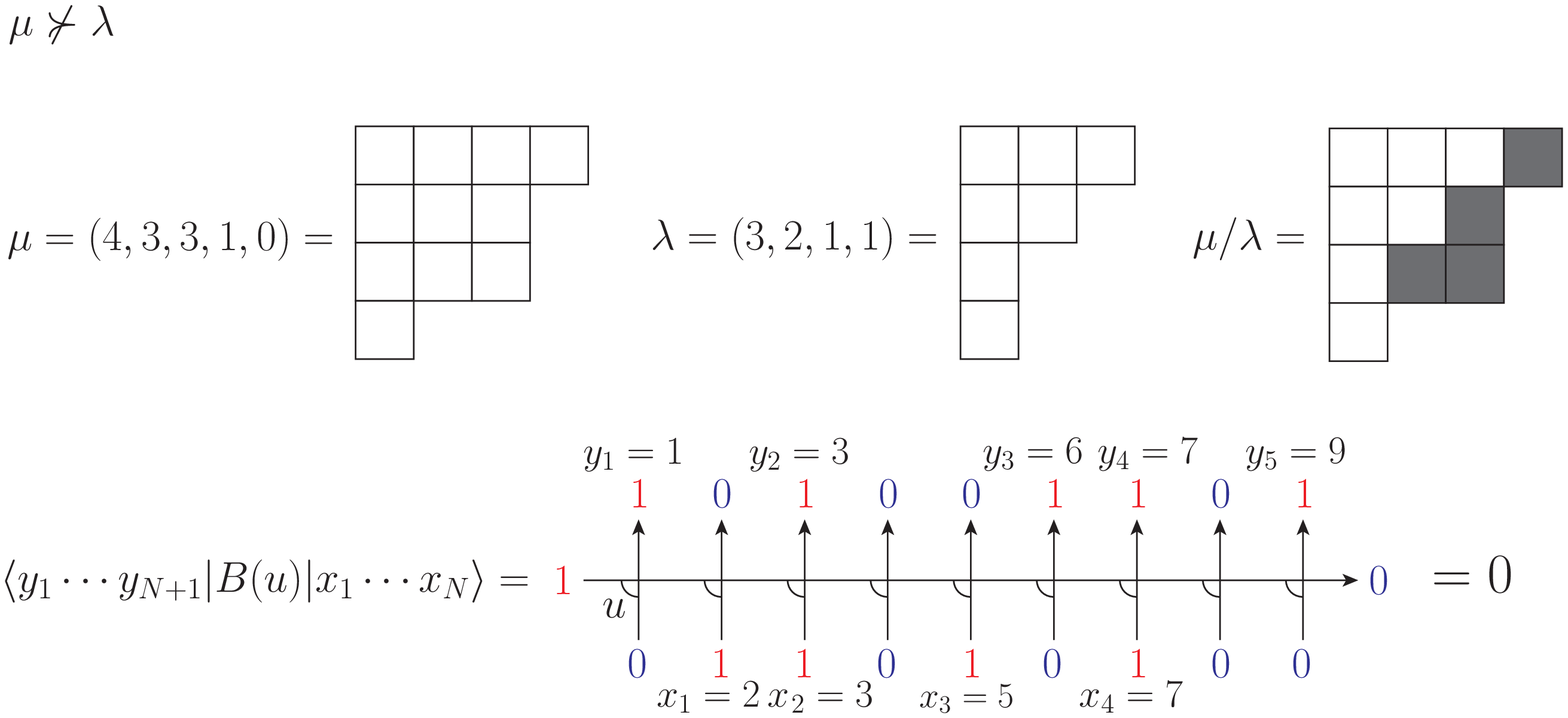}
\end{center}
\caption{
An example of the non-interlacing partition functions $\mu\nsucc \lambda$. 
Here we have set $\mu=(4,3,3,1,0)$ and $\lambda=(3,2,1,1)$.  The 
input (resp. output) state denotes the particle configuration corresponding to $\lambda$ 
(resp. $\mu$). For non-interlacing partitions $\mu\nsucc\lambda$, one sees
$\bra y_1 \cdots y_{N+1}|B(u) |x_1\cdots x_N\ket=0$.}
\label{Interlacing-2}
\end{figure}

The single variable skew Grothendieck polynomials \eqref{skewone}
is given by the following explicit expression.
\begin{proposition}
The single variable skew Grothendieck polynomials $G_{\mu/\lambda}(z;\beta)$
can be explicitly expressed as
\begin{align}
G_{\mu/\lambda}(z;\beta)
=
\begin{cases}
z^{\sum_{j=1}^{N+1} \mu_j-\sum_{j=1}^N \lambda_j}
\prod_{j=1}^N (1+\beta z-\beta z \delta_{\mu_{j+1}\, \lambda_j})
& \mu \succ \lambda \\
0 & \text{otherwise}
\end{cases}.
\label{skewexpression}
\end{align}
The case $\beta=0$ reduces to the single variable skew Schur polynomials:
$G_{\mu/\lambda}(z;0)=s_{\mu/\lambda}(z)
=z^{\sum_{j=1}^{N+1} \mu_j-\sum_{j=1}^N \lambda_j}$.
\end{proposition}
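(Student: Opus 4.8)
The plan is to compute the matrix element in \eqref{skewone} directly by the vertex-model bookkeeping of the quantum inverse scattering method, exploiting the rigidity of the five-vertex $L$-operator. First I would rewrite \eqref{loperator} as a $2\times2$ matrix in the auxiliary space $W_a$ with operator-valued entries on $V_j$, so that $B(u)=(T_a)_{01}$ becomes the product $\prod_{j}(L_{aj})_{\epsilon_j\epsilon_{j-1}}$ summed over the internal auxiliary-edge states $\epsilon_1,\dots,\epsilon_{M-1}\in\{0,1\}$, with the boundary fixed so that one quantum particle is created: $\epsilon_0=1$ and $\epsilon_M=0$. Sandwiching between the prescribed quantum configurations $|x_1\cdots x_N\ket$ (bottom) and $\bra y_1\cdots y_{N+1}|$ (top) replaces each entry by its Boltzmann weight, which is nonzero only for the five vertices in \eqref{loperator}.

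The key simplification is that every vertex conserves the combined (auxiliary $+$ quantum) particle number, so the relation $\epsilon_j=\epsilon_{j-1}+b_j-t_j$ (with $b_j,t_j\in\{0,1\}$ the occupations read off from $x$ and $y$) forces a \emph{unique} auxiliary profile $\epsilon_j=1+\#\{i:x_i\le j\}-\#\{i:y_i\le j\}$; no summation survives. Thus $G_{\mu/\lambda}$ is the single product of weights along this profile, provided every local vertex is allowed. I would then show this product is nonzero exactly when $\mu\succ\lambda$, noting that there are two distinct ways it can vanish: the profile can leave $\{0,1\}$, and, more subtly, it can stay inside $\{0,1\}$ yet be forced onto the one conserving-but-forbidden ``sixth'' vertex $(\epsilon_{j-1},b_j;\epsilon_j,t_j)=(0,1;0,1)$. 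Checking that the disjunction of these two obstructions is precisely the failure of the interlacing inequalities $y_j\le x_j<y_{j+1}$ (equivalently $\mu_j\ge\lambda_j\ge\mu_{j+1}$) pins down the ``$0$ otherwise'' branch and matches Figures~\ref{Interlacing-1} and~\ref{Interlacing-2}.

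For the interlacing case I would read off the weight product. Writing $n_{\mathrm{empty}},n_{\mathrm{pick}},n_{\mathrm{drop}},n_{\mathrm{pass}}$ for the counts of the loaded-over-empty, pick-up, drop, and pass-through vertices, the only allowed vertices carrying an $x$-particle ($b_j=1$) are the pick-up (weight $1$) and the pass-through (weight $-\beta^{-1}u$), so $n_{\mathrm{pick}}+n_{\mathrm{pass}}=N$; likewise the vertices carrying a $y$-particle give $n_{\mathrm{drop}}+n_{\mathrm{pass}}=N+1$, whence $n_{\mathrm{drop}}-n_{\mathrm{pick}}=1$. Using $-\beta^{-1}u-u^{-1}=uz$ for the loaded-over-empty weight and $u^{-2}=-\beta^{-1}(1+\beta z)$ coming from $z=-\beta^{-1}-u^{-2}$, the prefactor $(-\beta)^N u^{1-M}$ together with $n_{\mathrm{pick}}+n_{\mathrm{pass}}=N$ makes all powers of $\beta$ and the spurious powers of $u$ cancel, collapsing the product to $z^{\,n_{\mathrm{empty}}}(1+\beta z)^{\,n_{\mathrm{pick}}}$.

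It remains to evaluate the two exponents combinatorially. For the power of $z$ I would compute the ``area'' $\sum_{j}\epsilon_{j-1}=\sum_k y_k-\sum_k x_k$ under the profile; since the loaded incoming edges ($\epsilon_{j-1}=1$) split among the empty-, drop-, and pass-through vertices and the latter two number $N+1$, this gives $n_{\mathrm{empty}}=\sum y-\sum x-(N+1)=\sum_j\mu_j-\sum_j\lambda_j$. For the power of $(1+\beta z)$ I would observe that at the $k$-th $x$-particle interlacing leaves only the dichotomy $y_k<x_k$ (a pick-up, factor $1+\beta z$) versus $y_k=x_k$ (a pass-through, factor $1$), which in partition variables reads $\mu_{j+1}<\lambda_j$ versus $\mu_{j+1}=\lambda_j$ with $j=N-k+1$; this is exactly $\prod_{j=1}^N(1+\beta z-\beta z\,\delta_{\mu_{j+1}\lambda_j})$, and the $\beta=0$ specialization is immediate. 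The main obstacle is not any single computation but the careful translation between the two labelings (particle positions $x,y$ versus partition parts $\lambda,\mu$) together with the recognition that vanishing requires excluding the forbidden sixth vertex and not merely keeping the auxiliary profile within $\{0,1\}$.
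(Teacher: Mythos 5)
Your proposal is correct, and it reaches \eqref{skewexpression} by a route that is organized differently from the paper's. The paper computes the full action $B(u)|x_1\cdots x_N\ket$ by cutting the lattice row into the blocks $[x_{j-1}+1,x_j]$ (the decomposition \eqref{decomp}) and evaluating the partial actions \eqref{parts}--\eqref{parts-two} on each block; this produces, block by block, a sum over the new particle position $y_j$ with $x_{j-1}+1\le y_j\le x_j$, and the interlacing condition together with the weight in \eqref{actionone} emerges from assembling these local sums. You instead fix both configurations $x$ and $y$ from the outset, observe that particle conservation at each vertex determines the auxiliary profile $\epsilon_j$ uniquely (so the matrix element is a single product of Boltzmann weights, with no surviving summation), and then evaluate that product by global counting of vertex types, using $n_{\mathrm{pick}}+n_{\mathrm{pass}}=N$, $n_{\mathrm{drop}}+n_{\mathrm{pass}}=N+1$ and the area identity $\sum_j\epsilon_{j-1}=\sum_k y_k-\sum_k x_k$. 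Your treatment of the vanishing locus is also slightly sharper than the paper's (which simply reads it off the pictures): you correctly separate the two obstructions --- the profile leaving $\{0,1\}$, which encodes $y_j\le x_j\le y_{j+1}$, and the forbidden vertex $(0,1;0,1)$ of \eqref{loperator}, which excludes $x_j=y_{j+1}$ --- and their union is exactly the failure of $y\succ x$. What each approach buys: the paper's block-by-block computation directly yields the expansion of $B(u)|x\ket$ over all admissible $y$ (which is what is actually reused in the proof of the addition theorem), while your global-counting argument cancels the spurious powers of $u$ and $\beta$ in one step via the conservation identities rather than by telescoping local factors, and makes the origin of each exponent ($z^{\sum\mu_j-\sum\lambda_j}$ from the area under the auxiliary profile, $(1+\beta z)$ from each strict inequality $\mu_{j+1}<\lambda_j$) transparent.
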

\begin{proof}
We show \eqref{skewexpression} by explicit evaluation of
the definition \eqref{skewone}.
From the graphical description (see Figure~\ref{Interlacing-2}, for instance), 
we find $\bra y_1\cdots y_{N+1}| B(u) |x_1\cdots x_N \ket=0$
for $\mu\nsucc \lambda$. Thus $G_{\mu/\lambda}(z;\beta)=0$ holds for $\mu\nsucc \lambda$.
The first equality in \eqref{skewexpression} can be shown by the following decomposition:
\begin{align}
B(u)| x_1\cdots x_N\ket={}_a\bra 0 |\prod_{j=1}^{N+1}\prod_{l=x_{j-1}+1}^{x_j} L_{al}(u)
|1 \ket_a \otimes 
 \left\{\otimes_{k=x_{j-1}+1}^{x_j-1} | 0 \ket_k \right\}
\otimes | 1 \ket_{x_j} ,
\label{decomp}
\end{align}
where $x_0=0$ and $x_{N+1}=M$.
Using the following relations,
\begin{align}
&\prod_{l=x_{j-1}+1}^{x_j}L_{al}(u)|1 \ket_a \otimes 
 \left\{\otimes_{k=x_{j-1}+1}^{x_j-1} | 0 \ket_k \right\}
\otimes | 1 \ket_{x_j} \nonumber \\
&\qquad=\sum_{x_{j-1}+1 \le y_j \le x_j}
(-\beta^{-1}u-u^{-1})^{y_j-x_{j-1}-1}
\left\{ u^{x_j-y_j-1}(1-\delta_{x_j\, y_j})
-\beta^{-1} u \delta_{x_j\, y_j} \right\} \nonumber \\
&\qquad \quad \times | 1 \ket_a \otimes 
\left\{ \otimes_{k=x_{j-1}+1}^{y_j-1}|0 \ket_k \right\} \otimes
|1 \ket_{y_j} \otimes
\left\{ \otimes_{k=y_j+1}^{x_j}|0 \ket_k \right\}   \qquad (1 \le j \le N),
\label{parts}
\\
&{}_a\bra 0| \prod_{l=x_N+1}^M L_{al}(u)|1 \ket_a \otimes
\left\{ \otimes_{k=x_N+1}^M |0 \ket_k \right\} \nonumber \\
&\qquad=\sum_{x_N+1 \le y_{N+1} \le M} (-\beta^{-1}u-u^{-1})^{y_{N+1}-x_N-1}
u^{M-y_{N+1}} \nonumber \\
&\qquad \quad \times \left\{ \otimes_{k=x_N+1}^{y_{N+1}-1} |0 \ket_k \right\}
\otimes
|1 \ket_{y_{N+1}} \otimes \left\{ \otimes_{k=y_{N+1}+1}^M | 0 \ket_k \right\},
\label{parts-two}
\end{align}
which are directly obtained by the graphical representation shown in 
Figure~\ref{skew-fig},
we find that \eqref{decomp} yields
\begin{align}
B(u)|x_1 \cdots x_N \ket
=&\sum_{y\succ x}
(-\beta^{-1}u-u^{-1})^{-1-N+\sum_{j=1}^{N+1}y_j-\sum_{j=1}^N x_j}
u^{M-y_{N+1}} \nonumber \\
&\times\prod_{j=1}^N \{ u^{x_j-y_j-1}(1-\delta_{x_j \, y_j})
-\beta^{-1} u \delta_{x_j \, y_j} \} | y_1 \cdots y_{N+1} \ket.
\label{actionone}
\end{align}

\begin{figure}[ttt]
\begin{center}
\includegraphics[width=0.9\textwidth]{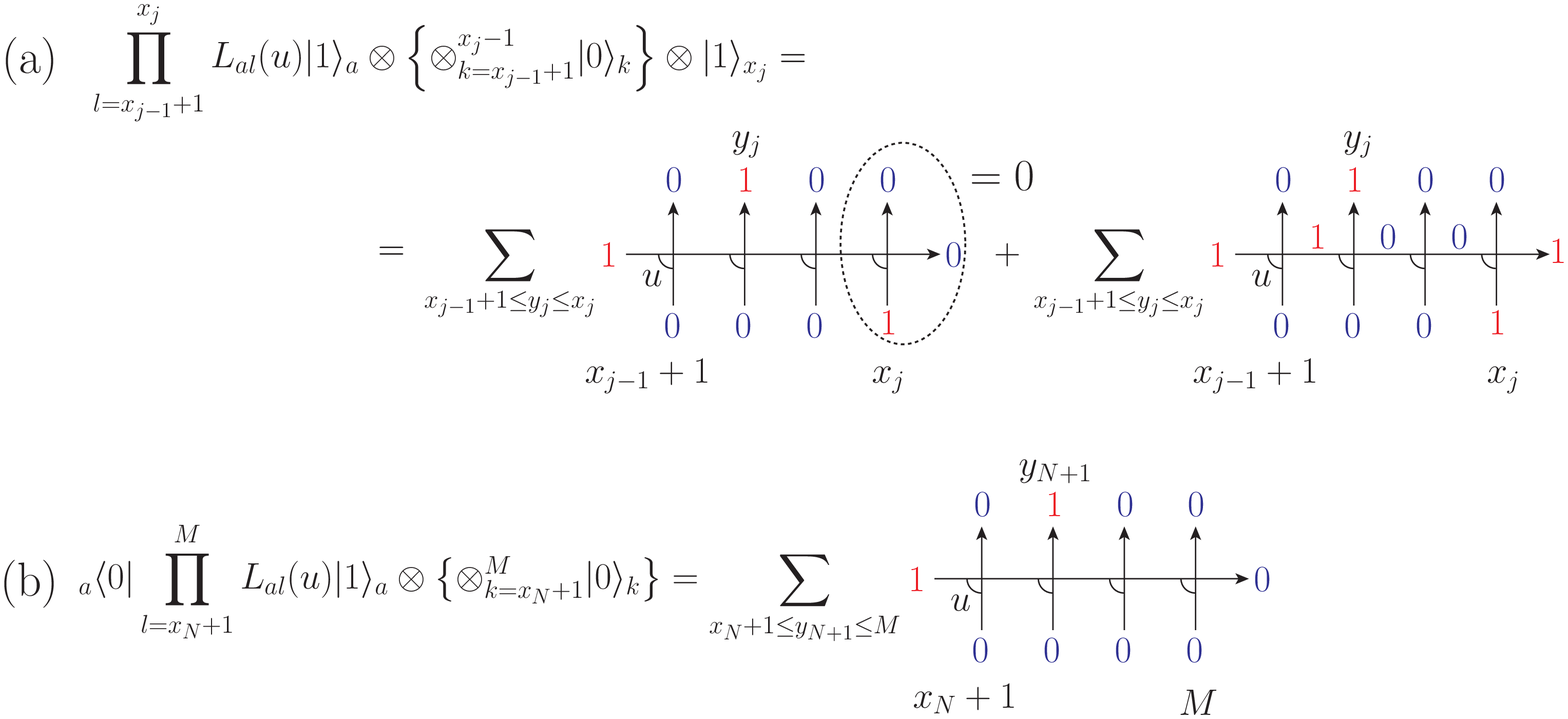}
\end{center}
\caption{(a): The graphical description of \eqref{parts}.  The first term of the
right hand vanishes because the Boltzmann weight surrounded by the broken 
line is equal to zero.  The insertion of 
the weights shown in Figure~\ref{weight} into the second term yields \eqref{parts}.
(b): The graphical description of \eqref{parts-two}.}
\label{skew-fig}
\end{figure}
Translating \eqref{actionone} into  the language of Young diagrams
by $x_j=\lambda_{N-j+1}+j$ and $y_j=\lambda_{N-j+2}+j$, and
using $y\succ x \Longleftrightarrow\mu\succ \lambda$, we have the
first equality in \eqref{skewexpression}.
\end{proof}
Combining the relation between the wavefunction
and the Grothendieck polynomial
\eqref{wavefunctionone}
and the definition of the skew Grothendieck polynomials
\eqref{skewone}, we have the following addition theorem.
\begin{theorem}\label{addition}
The following relation between
the Grothendieck and skew Grothendieck polynomials holds
\begin{align}
G_\mu(z_1,\dots,z_N,z_{N+1};\beta)=\sum_{\mu \succ \lambda}
G_{\mu/\lambda}(z_{N+1};\beta)G_\lambda(z_1,\dots,z_N;\beta), \label{relation}
\end{align}
which recovers the one for the Schur and skew Schur polynomials at $\beta=0$.
\end{theorem}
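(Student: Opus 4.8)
The plan is to read both sides of \eqref{relation} off the wavefunction identification of Theorem~\ref{th-wave}, exploiting that the $(N+1)$-particle Bethe state is produced from the $N$-particle state by one further application of the $B$-operator. Since $|\psi(\{u\}_{N+1})\ket=B(u_{N+1})|\psi(\{u\}_N)\ket$, I would first insert the resolution of the identity on the $N$-particle sector:
\begin{align}
\bra y_1\cdots y_{N+1}|\psi(\{u\}_{N+1})\ket
=\sum_{x_1<\cdots<x_N}
\bra y_1\cdots y_{N+1}|B(u_{N+1})|x_1\cdots x_N\ket\,
\bra x_1\cdots x_N|\psi(\{u\}_N)\ket.
\end{align}
This is legitimate because the multiple $B$-action preserves particle number, so $|\psi(\{u\}_N)\ket$ lives entirely in the $N$-particle sector, for which the states $|x_1\cdots x_N\ket$ with $x_1<\cdots<x_N$ form a complete orthonormal basis.

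Next I would evaluate each of the three objects by the formulas already at hand. By \eqref{wavefunctionone} the overlap on the left equals $(-\beta^{-1})^{(N+1)N/2}\prod_{j=1}^{N+1}u_j^{M-1}G_\mu(z_1,\dots,z_{N+1};\beta)$ with $\mu$ the Young diagram attached to $y$; the surviving overlap inside the sum equals $(-\beta^{-1})^{N(N-1)/2}\prod_{j=1}^{N}u_j^{M-1}G_\lambda(z_1,\dots,z_N;\beta)$ with $\lambda$ attached to $x$; and by the very definition \eqref{skewone} the matrix element of $B(u_{N+1})$ equals $(-\beta)^{-N}u_{N+1}^{M-1}G_{\mu/\lambda}(z_{N+1};\beta)$. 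Throughout $z_j=-\beta^{-1}-u_j^{-2}$, so the single extra variable $z_{N+1}$ of the skew factor is precisely the one carried by the additional spectral parameter $u_{N+1}$.

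Substituting and cancelling the common prefactors is then the only computation, and the bookkeeping matches cleanly. The powers of the spectral parameters combine as $u_{N+1}^{M-1}\prod_{j=1}^{N}u_j^{M-1}=\prod_{j=1}^{N+1}u_j^{M-1}$, and, using $(-\beta)^{-1}=-\beta^{-1}$, the scalar factors combine as $(-\beta^{-1})^{N}(-\beta^{-1})^{N(N-1)/2}=(-\beta^{-1})^{N(N+1)/2}$, exactly the prefactor on the left. After cancellation I am left with $G_\mu(z_1,\dots,z_{N+1};\beta)=\sum_x G_{\mu/\lambda}(z_{N+1};\beta)G_\lambda(z_1,\dots,z_N;\beta)$. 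I would finish by restricting the summation: by the explicit formula \eqref{skewexpression} the factor $G_{\mu/\lambda}$ vanishes unless $\mu\succ\lambda$, so the sum over all configurations $x$ collapses to the sum over interlacing $\lambda$, which is \eqref{relation}; the $\beta=0$ reduction to the skew Schur case is then immediate from $G_{\mu/\lambda}(z;0)=s_{\mu/\lambda}(z)$.

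I do not anticipate a genuine obstacle, the argument being essentially a transcription of $B(u_{N+1})|\psi(\{u\}_N)\ket=|\psi(\{u\}_{N+1})\ket$ through Theorem~\ref{th-wave}. The only points needing care are the prefactor arithmetic and the dictionary between particle configurations and Young diagrams, namely that $x\mapsto\lambda$ and $y\mapsto\mu$ are bijective on the relevant sectors, so that the abstract completeness sum genuinely becomes a sum over $\lambda$ with the interlacing constraint reproduced automatically by the vanishing of $G_{\mu/\lambda}$.
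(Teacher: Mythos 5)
Your proposal is correct and follows essentially the same route as the paper: both arguments read $B(u_{N+1})|\psi(\{u\}_N)\rangle=|\psi(\{u\}_{N+1})\rangle$ through Theorem~\ref{th-wave} and the definition \eqref{skewone}, your insertion of the $N$-particle resolution of identity being just the matrix-element form of the paper's expansion of the state vector in the configuration basis. The prefactor bookkeeping $(-\beta^{-1})^{N}(-\beta^{-1})^{N(N-1)/2}=(-\beta^{-1})^{(N+1)N/2}$ and the collapse of the sum to interlacing $\lambda$ via the vanishing of $G_{\mu/\lambda}$ both match the paper's proof.
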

\begin{proof}
This follows from evaluating the $(N+1)$-particle state  by using \eqref{wavefunctionone}
and \eqref{skewone} as
\begin{align}
(-\beta)^{(N+1)N/2} &\prod_{j=1}^{N+1} u_j^{1-M} B(u_j)|\Omega \ket=
(-\beta)^N u_{N+1}^{1-M} B(u_{N+1})
(-\beta)^{N(N-1)/2}
\prod_{j=1}^{N} u_j^{1-M} B(u_j)|\Omega \ket \nonumber \\
=&(-\beta)^N u_{N+1}^{1-M} B(u_{N+1})
\sum_{\lambda}G_\lambda(z_1,\dots,z_N;\beta)
|x_1(\lambda) \cdots x_N(\lambda) \ket \nonumber \\
=&\sum_{\mu \succ \lambda}G_{\mu/\lambda}(z_{N+1};\beta)
G_\lambda(z_1,\dots,z_N;\beta)|y_1(\mu) \cdots y_{N+1}(\mu) \ket,
\label{expansiontwotimes}
\end{align}
and comparing with
\begin{align}
(-\beta)^{(N+1)N/2} \prod_{j=1}^{N+1} u_j^{1-M} B(u_j)|\Omega \ket
=\sum_{\mu}G_\mu(z_1,\dots,z_{N+1};\beta)
|y_1(\mu) \cdots y_{N+1}(\mu) \ket.
\end{align}
Note that $\lambda$ in the summation \eqref{expansiontwotimes}
can be restricted from $\lambda \subseteq (M-N)^N$
to $\lambda \subseteq (M-N-1)^N$
since $\mu \subseteq (M-N-1)^{N+1}$ and $G_{\mu/\lambda}(z;\beta)=0$
unless $\mu \succ \lambda$.
\end{proof}
The relation \eqref{relation}
is the consequence of the action of a $B$-operator
on the wavefunction of the $N$-particle state,
from which also justifies the definition \eqref{skewone}
of the skew Grothendieck polynomials.
In the next section, we use this addition theorem
to show that the wavefunction of the non-Hermitian phase model 
can also be expressed as Grothendieck polynomials.

The repeated application of the addition theorem \eqref{relation}
leads to the following corollary.
\begin{corollary}\label{decomposition}
The Grothendieck polynomials can be expressed in terms of
the single variable skew Grothendieck polynomials as
\begin{align}
G_{\lambda}(z_1,\dots,z_N;\beta)
&=\sum_{\lambda=\lambda^{(0)} \succ \lambda^{(1)} \succ \cdots \succ \lambda^{(N)}
=\emptyset}
\prod_{j=1}^N G_{\lambda^{(j-1)}/\lambda^{(j)}}(z_j;\beta).
\label{decompositionskew}
\end{align}
\end{corollary}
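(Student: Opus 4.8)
The plan is to prove Corollary~\ref{decomposition} by iterating the addition theorem (Theorem~\ref{addition}), peeling off one variable at a time. The statement is precisely the ``fully expanded'' form of the single-step branching rule \eqref{relation}, so the proof should be a clean induction on the number of variables $N$.

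First I would set up the induction. The base case $N=1$ is immediate: $G_\lambda(z_1;\beta)$ with a single variable equals $G_{\lambda/\emptyset}(z_1;\beta)$, which is the only term in the sum since the single chain $\lambda=\lambda^{(0)}\succ\lambda^{(1)}=\emptyset$ is forced. For the inductive step, I would apply Theorem~\ref{addition} once to split off the last variable $z_N$:
\begin{align}
G_\lambda(z_1,\dots,z_N;\beta)=\sum_{\lambda\succ\nu}
G_{\lambda/\nu}(z_N;\beta)\,G_\nu(z_1,\dots,z_{N-1};\beta).
\label{onestep}
\end{align}
Here I am using \eqref{relation} read with $\mu=\lambda$ (now a diagram in $N$ parts) and the summed-over diagram $\nu$ in $N-1$ parts satisfying $\lambda\succ\nu$. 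Then I would invoke the induction hypothesis on the $(N-1)$-variable polynomial $G_\nu(z_1,\dots,z_{N-1};\beta)$ appearing in \eqref{onestep}, expanding it as a sum over chains $\nu=\lambda^{(1)}\succ\lambda^{(2)}\succ\cdots\succ\lambda^{(N)}=\emptyset$ of products of single-variable skew polynomials. Relabelling $\lambda=\lambda^{(0)}$ and $\nu=\lambda^{(1)}$, the factor $G_{\lambda/\nu}(z_N;\beta)=G_{\lambda^{(0)}/\lambda^{(1)}}(z_N;\beta)$ combines with the inductive product to give exactly the full product $\prod_{j=1}^N G_{\lambda^{(j-1)}/\lambda^{(j)}}(z_j;\beta)$, and the nested sums collapse into the single sum over maximal chains $\lambda=\lambda^{(0)}\succ\cdots\succ\lambda^{(N)}=\emptyset$ in \eqref{decompositionskew}.

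The only point requiring care is the \emph{bookkeeping of the interlacing constraints and the variable indices}, which I expect to be the main (though minor) obstacle. One must check that the condition $\lambda^{(0)}\succ\lambda^{(1)}$ produced by the first application of \eqref{relation} is consistent with the chain conditions $\lambda^{(1)}\succ\lambda^{(2)}\succ\cdots$ supplied by the induction, so that the combined index set is precisely the set of all strictly interlacing chains terminating at $\emptyset$, with no spurious or missing constraints. One must also verify the variable assignment: the variable split off at each step must be $z_j$ in the $j$th factor, which follows because each application of Theorem~\ref{addition} extracts the highest-indexed remaining variable. Since $G_{\mu/\lambda}(z;\beta)=0$ unless $\mu\succ\lambda$ (the vanishing established in \eqref{skewexpression}), the interlacing restriction on the summation indices is automatic and need not be imposed by hand; one may sum over all chains and the non-interlacing terms drop out. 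This makes the recombination of the nested sums into the single displayed sum completely transparent, so the induction closes with no residual estimates or analytic input.
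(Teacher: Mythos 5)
Your proposal is exactly the argument the paper intends: the paper offers no written proof beyond the remark that the corollary follows by ``repeated application of the addition theorem,'' and your induction on $N$ via Theorem~\ref{addition}, with the vanishing of $G_{\mu/\lambda}$ for $\mu\nsucc\lambda$ handling the chain constraints automatically, is that argument spelled out. One bookkeeping point is actually backwards in your write-up, though: Theorem~\ref{addition} as stated splits off the \emph{last} variable and attaches it to the \emph{first} skew factor, so iterating it literally (extracting the highest-indexed remaining variable at each step, as you propose) produces $\sum_{\text{chains}}\prod_{j=1}^{N}G_{\lambda^{(j-1)}/\lambda^{(j)}}(z_{N-j+1};\beta)$, i.e.\ the variables appear in the product in reverse order relative to \eqref{decompositionskew}; your claim that ``the variable split off at each step must be $z_j$ in the $j$th factor'' does not follow from that extraction order. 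The fix is immediate --- $G_\lambda$ is symmetric in its arguments by the determinant definition \eqref{GR}, so one may relabel $z_j\mapsto z_{N-j+1}$, or equivalently restate \eqref{relation} so as to peel off $z_1$ first --- but it should be said, since otherwise the induction as you set it up proves a formula with a different variable assignment than the one displayed.
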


Before closing this section, we define the multivariable skew 
Grothendieck polynomials for completeness of the paper.
The multivariable skew Grothendieck is naturally defined by
multiplying the single variable skew Grothendieck polynomials.
\begin{definition}
The multivariable skew Grothendieck polynomials is defined as
\begin{align}
G_{\lambda/\nu}(z_1,\dots,z_n;\beta)&:=
\sum_{\lambda^{(1)}\succ \cdots \succ \lambda^{(n-1)}}
\prod_{j=1}^n
G_{\lambda^{(j-1)}/\lambda^{(j)}}(z_j;\beta)
\end{align}
where $\lambda=\lambda^{(0)}$ and $\nu=\lambda^{(n)}$.
\end{definition}
The combination of Corollary~\ref{decomposition} and Theorem~\ref{addition}
leads to the following addition theorem.
\begin{theorem}
The following relation between the Grothendieck polynomials and
the (multivariable) skew Grothendieck polynomials holds:
\begin{align}
G_{\lambda}(z_1,\dots,z_n, w_{1},\dots, w_m;\beta)=\sum_{\nu} 
G_{\lambda/\nu}(z_1,\dots,z_n;\beta)G_{\nu}(w_1,\dots,w_m;\beta),
\end{align} 
which recovers the one for the Schur and skew Schur polynomials at $\beta=0$.
\end{theorem}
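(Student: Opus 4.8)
The plan is to reduce the left-hand side to a sum over chains of single-variable skew Grothendieck polynomials and then to split each chain at the point where the variable alphabet switches from the $z$'s to the $w$'s. First I would apply Corollary~\ref{decomposition} to $G_{\lambda}(z_1,\dots,z_n,w_1,\dots,w_m;\beta)$, treating $(z_1,\dots,z_n,w_1,\dots,w_m)$ as a single alphabet of $n+m$ variables. This expands it as a sum over all interlacing chains $\lambda=\lambda^{(0)}\succ\lambda^{(1)}\succ\cdots\succ\lambda^{(n+m)}=\emptyset$ of $\prod_{j=1}^{n+m}G_{\lambda^{(j-1)}/\lambda^{(j)}}(\xi_j;\beta)$, where $\xi_j=z_j$ for $1\le j\le n$ and $\xi_j=w_{j-n}$ for $n+1\le j\le n+m$.

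Next I would organize these chains according to the intermediate partition $\nu:=\lambda^{(n)}$ reached after exactly $n$ steps. For fixed $\nu$ the product factorizes into its first $n$ factors, which depend only on $z_1,\dots,z_n$ and on the sub-chain $\lambda=\lambda^{(0)}\succ\cdots\succ\lambda^{(n)}=\nu$, and its last $m$ factors, which depend only on $w_1,\dots,w_m$ and on the sub-chain $\nu=\lambda^{(n)}\succ\cdots\succ\lambda^{(n+m)}=\emptyset$. Since the two sub-chains share only the endpoint $\nu$, the sum over all intermediate partitions decouples into independent sums over $\lambda^{(1)},\dots,\lambda^{(n-1)}$ and over $\lambda^{(n+1)},\dots,\lambda^{(n+m-1)}$. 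By the definition of the multivariable skew Grothendieck polynomial the first sum equals $G_{\lambda/\nu}(z_1,\dots,z_n;\beta)$, while applying Corollary~\ref{decomposition} once more, now to the partition $\nu$ in the alphabet $w_1,\dots,w_m$, identifies the second sum with $G_{\nu}(w_1,\dots,w_m;\beta)$. Summing over $\nu$ then gives the asserted identity, and the $\beta=0$ case follows at once because each single-variable skew factor reduces to a skew Schur factor, collapsing the identity to the classical Schur branching rule.

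I do not expect any genuinely hard analytic step; the only point requiring care is the bookkeeping in the decoupling of the chain sums. One must verify that the interlacing constraints $\lambda^{(j-1)}\succ\lambda^{(j)}$ are enforced independently on the two sides of the cut, that the admissible range of $\nu$ is exactly the set of partitions reachable both from $\lambda$ in $n$ steps and from $\emptyset$ in $m$ steps --- which is automatic, since the skew factors vanish off the interlacing locus by \eqref{skewexpression} --- and that the degenerate cases $n=0,1$ or $m=0,1$ introduce no double-counting or omission. Equivalently, the whole argument can be recast as an $m$-fold iteration of the single-variable addition theorem (Theorem~\ref{addition}), peeling off $w_m,\dots,w_1$ in succession and then assembling the accumulated single-variable skews into $G_{\lambda/\nu}$ via the multivariable definition; the two routes coincide after reindexing.
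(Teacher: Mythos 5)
Your argument is correct and coincides with the paper's own route: the paper proves this theorem precisely by combining Corollary~\ref{decomposition} (the expansion of $G_{\lambda}$ into chains of single-variable skew factors) with Theorem~\ref{addition}, and your splitting of the interlacing chain at the intermediate partition $\nu=\lambda^{(n)}$, with the two halves reassembled via the definition of the multivariable skew polynomial and a second application of Corollary~\ref{decomposition}, is exactly the intended elaboration. The bookkeeping points you flag (vanishing of skew factors off the interlacing locus, independence of the two sub-chains given $\nu$) are the only checks needed, and they go through as you describe.
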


\section{Non-Hermitian phase model}\label{NHPM}
%
In this section, we introduce the non-Hermitian phase model \cite{BN},
which can be solved by the algebraic Bethe ansatz. The phase model is
a boson system characterized by the generators 
$\phi$, $\phi^\dagger$, $N$ and $\pi$ acting on a bosonic 
Fock space $\mathcal{F}$ spanned by orthonormal basis  
$| n \ket \ (n=0,1,\dots, \infty)$. Here the number $n$ 
indicates the occupation number of bosons.  The generators 
$\phi$, $\phi^\dagger$, $N$ and $\pi$ are, respectively, 
the annihilation, creation, number and vacuum projection operators,
whose actions on $\mathcal{F}$ are, respectively, defined as
\begin{align}
\phi|0 \ket=0,  \quad 
\phi|n \ket=|n-1 \ket, \quad
\phi^\dagger|n \ket=|n+1 \ket, \quad 
N|n \ket=n|n \ket, \quad
\pi|n \ket=\delta_{n\,0}|n \ket.
\end{align}
Thus the operator forms are explicitly given by
\begin{align}
\phi=\sum_{n=0}^\infty |n \ket \bra n+1|, \quad
\phi^\dagger=\sum_{n=0}^\infty |n+1 \ket \bra n|, \quad
N=\sum_{n=0}^\infty n |n \ket \bra n|, \quad
\pi=|0 \ket \bra 0|.
\end{align}
These operators generate an algebra referred to as the phase algebra:
\begin{align}
[\phi, \phi^\dagger]=\pi, \quad
[N, \phi]=-\phi, \quad
[N, \phi^\dagger]=\phi^\dagger.
\end{align}
The non-Hermitian phase model \cite{BN,SW} under the periodic boundary 
condition is defined by the
following Hamiltonian:
\begin{align}
\mathcal{H}=\sum_{j=0}^{M-1} (\phi_{j+1}^\dagger \phi_j-\beta \pi_j).
\label{phase}
\end{align}
The Hamiltonian acts on the tensor product of Fock spaces
$\otimes_{j=0}^{M-1} \mathcal{F}_j$, whose basis is given by
$|\{ n \}_M \ket :=
\otimes_{j=0}^{M-1}|n_j \ket_j$, $n_j=0,1,\dots,\infty$.
We denote a dual state of $|\{ n \}_M \ket$ as
$\bra \{ n \}_M| := \otimes_{j=0}^{M-1} {}_j \bra n_j|$.
The operators $\phi_j$, $\phi_j^\dagger$, $N_j$ and $\pi_j$
act on the Fock space $\mathcal{F}_j$ as $\phi$, 
$\phi^\dagger$, $N$ and $\pi$,
and the other Fock spaces $\mathcal{F}_k, k \neq j$ as
an identity. The term including $\beta$ in \eqref{phase}
denotes an on-site interaction: $\beta>0$, $\beta=0$ and $\beta<0$
correspond to repulsive, free and attractive interactions, respectively.

The Hamiltonian is quantum integrable,
and a special point $\beta=-1$ describes a stochastic process
without exclusion
called the totally asymmetric zero range process (TAZRP), i.e., a stochastic process for a system of
bosons so that each site can be occupied by arbitrary number of particles,
which is in contrast to the TASEP where each site
can be occupied by at most one particle.

We can make an analysis on the non-Hermitian phase model
by the quantum inverse scattering method.
The basic object is the following $L$ operator
\begin{align}
\mathcal{L}_{a j}(v)=
\begin{pmatrix}
v^{-1}-\beta v \pi_j & \phi_j^\dagger  \\
\phi_j & v
\end{pmatrix},
\label{Lop-boson}
\end{align}
acting on the tensor product $W_a \otimes \mathcal{F}_j$ of the
complex two-dimensional space $W_a$ and the Fock space at the
$j$th site $\mathcal{F}_j$. See also Figure~\ref{boson-L} for
a pictorial representation of the $L$-operator \eqref{Lop-boson},
which allows for an intuitive understanding of the subsequent calculations.
The $L$-operator satisfies the intertwining relation ($RLL$-relation)
\begin{align}
R_{ab}(u,v)\mathcal{L}_{a j}(u)\mathcal{L}_{b j}(v)=
\mathcal{L}_{b j}(v)\mathcal{L}_{a j}(u)R_{ab}(u,v),
\label{RLL-boson}
\end{align}
which acts on $W_a \otimes W_b \otimes \mathcal{F}_j$.
The $R$ matrix $R(u,v)$ is the same as the one for the integrable
five-vertex model \eqref{Rmatrix}. The auxiliary space $W_a$ is the
complex two-dimensional space, which is the same as that for the
integrable five-vertex model, while the quantum space $\mathcal{F}_j$
is the infinite-dimensional bosonic Fock space.

\begin{figure}[tt]
\begin{center}
\includegraphics[width=0.7\textwidth]{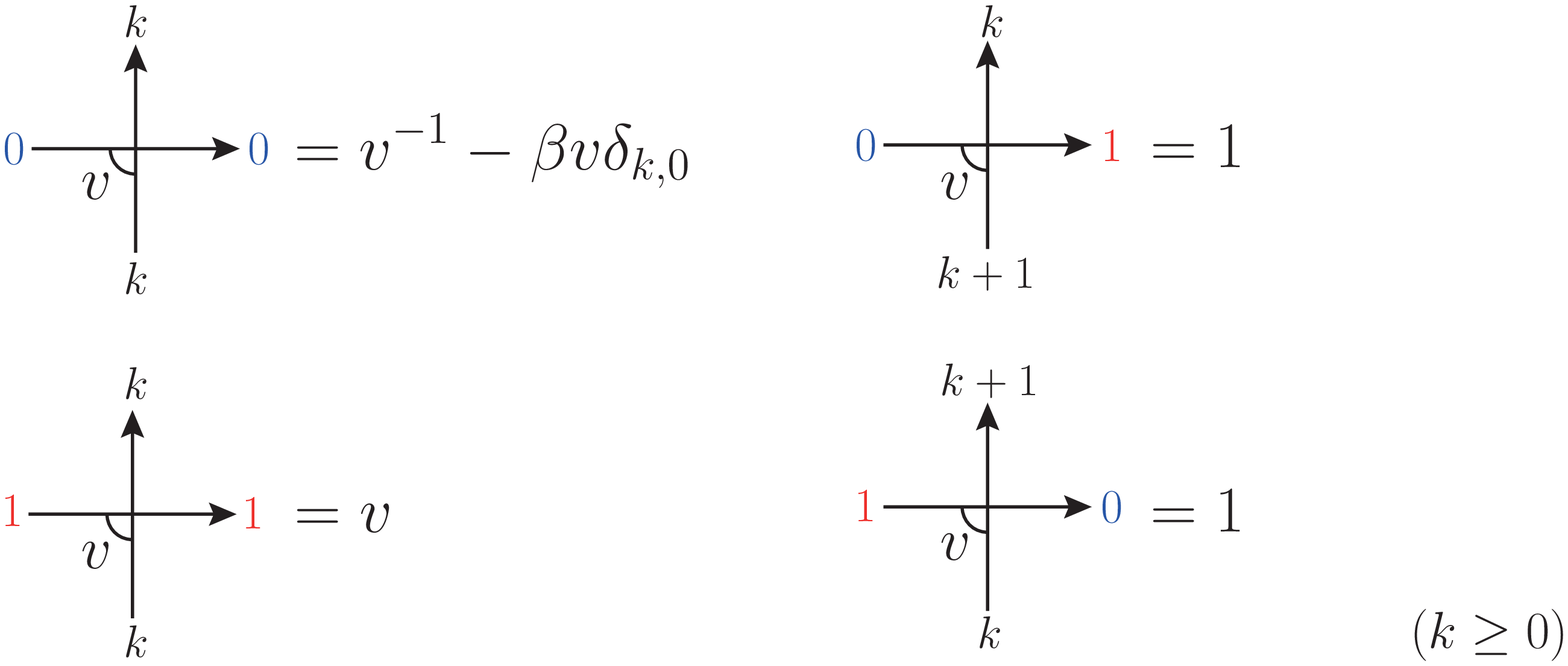}
\end{center}
\caption{The non-zero elements of the $L$-operator \eqref{Lop-boson}
for the non-Hermitian phase model. The variables on the left-arrows
take $0$ and $1$, since  the auxiliary space for
the phase model is two-dimensional space which is 
the same as the one for the five-vertex model.
On the other hand, the variables on the up-arrows take
infinite values $0,1,\dots,\infty$ which reflects the fact that
the quantum space of the phase model is infinite dimensions.
Note that the weights are invariant under a $180^{\circ}$ rotation.
}
\label{boson-L}
\end{figure}

From the $L$-operator, we construct the monodromy matrix
\begin{align}
\mathcal{T}_{a}(v)=\mathcal{L}_{a M-1}(v) \cdots \mathcal{L}_{a 0}(v)
=
\begin{pmatrix}
\mathcal{A}(v) & \mathcal{B}(v)  \\
\mathcal{C}(v) & \mathcal{D}(v)
\end{pmatrix}_{a}, 
\label{monodromy}
\end{align}
which acts on $W_a \otimes (\mathcal{F}_0\otimes\dots\otimes 
\mathcal{F}_{M-1})$. Tracing
out the auxiliary space, one defines the transfer matrix 
$\tau(u)\in \mathrm{End} (\mathcal{F}^{\otimes M})$:
\begin{align}
\tau (v)=\Tr_{W_a} \mathcal{T}_{a}(v). 
\label{TM}
\end{align}
The repeated applications of the $RLL$-relation leads to the
intertwining relation
\begin{align}
R_{ab}(u,v)\mathcal{T}_{a}(u)\mathcal{T}_{b}(v)=
\mathcal{T}_{b}(v)\mathcal{T}_{a}(u)R_{a b}(u,v) \label{RTT}.
\end{align}
Some elements of the relation \eqref{RTT} are
\begin{align}
&\mathcal{C}(u)\mathcal{B}(v)=g(u,v) 
\left[ \mathcal{A}(u)\mathcal{D}(v)-\mathcal{A}(v)\mathcal{D}(u) \right],
\nn  \\
&\mathcal{A}(u)\mathcal{B}(v)=f(u,v) \mathcal{B}(v) \mathcal{A}(u)+
g(v,u)\mathcal{B}(u)\mathcal{A}(v), \nn \\
&\mathcal{D}(u)\mathcal{B}(v)=f(v,u) \mathcal{B}(v) \mathcal{D}(u)+
g(u,v)\mathcal{B}(u)\mathcal{D}(v), \nn \\
&\left[\mathcal{B}(u),\mathcal{B}(v)\right]
=\left[\mathcal{C}(u),\mathcal{C}(v)\right]=0.
\label{commutation}
\end{align}
The above relations are  completely the same as those satisfied by $A(u)$, $B(u)$,
$C(u)$ and $D(u)$  for the integrable five-vertex model, since the $RLL$-relation
\eqref{RLL-boson} is the same as \eqref{RLL}.
Thanks to the $RTT$-relation \eqref{RTT},
the transfer matrix $\tau(u)$ mutually 
commutes, i.e.,
\begin{align}
[\tau(u),\tau(v)]=0.
\label{commutative}
\end{align}
The Hamiltonian
can be obtained by the derivative of  the transfer matrix 
with respect to the spectral parameter:
\begin{align}
\mathcal{H}
=
\left.\frac{\partial}{\partial v^2} (v^M \tau(v))
\right|_{v=0}.
\label{Baxter-B}
\end{align}

The arbitrary $N$-particle state $|\Psi(\{v \}_N) \ket$
(resp. its dual $\bra \Psi(\{v \}_N)|$) 
(not normalized) with $N$ spectral parameters
$\{ v \}_N=\{ v_1,\dots,v_N \}$
is constructed by a multiple action
of $\mathcal{B}$ (resp. $\mathcal{C}$) operator on the vacuum state 
$|\Omega \ket:= | 0^{M} \ket:=|0\ket_0\
\otimes \dots \otimes |0\ket_{M-1}$
(resp. $\bra \Omega| := \bra 0^{M}|:=
{}_0\bra 0|\otimes\dots \otimes{}_{M-1}\bra 0|$):
\begin{align}
|\Psi(\{v \}_N) \ket=\prod_{j=1}^N \mathcal{B}(v_j)| \Omega \ket,
\quad
\bra \Psi(\{v \}_N)|=\bra \Omega| \prod_{j=1}^N
\mathcal{C}(v_j).
\label{statevector-B}
\end{align}
The $N$-particle state $|\Psi(\{v \}_N) \ket$
and its dual $\bra \Psi(\{v \}_N)|$
become an eigenvector of the transfer matrix
with the eigenvalue
\begin{align}
\tau(u)=(v^{-1}-\beta v)^M \prod_{k=1}^N \frac{v_k^2}{v_k^2-v^2}
+v^M \prod_{k=1}^N \frac{v^2}{v^2-v_k^2},
\end{align}
if the spectral parameters $\{ v \}_N$
satisfy the Bethe ansatz equation
\begin{align}
(v_j^{-2}-\beta)^M
=(-1)^{N-1} \prod_{k=1}^N \frac{v_j^2}{v_k^2}.
\end{align}
The eigenvalue of the Hamiltonian is given by
\begin{align}
E=-\beta M+\sum_{j=1}^N v_j^{-2}.
\end{align}

\section{Wavefunctions and scalar products}
%
Here and in what follows, we consider the arbitrary off-shell
state, i.e., the parameters $\{v\}_N$ in the $N$-particle state
\eqref{statevector-B} are arbitrary.
The orthonormal basis of
the $N$-particle state $|\Psi(\{v \}_N) \ket$
and its dual $\bra \Psi(\{v \}_N)|$
is given by $| \{ n \}_{M,N} \ket := |n_0\ket_0\
\otimes \dots \otimes |n_{M-1}\ket_{M-1}$
and $\bra \{ n \}_{M,N}| := _{0}\bra n_0| \otimes \dots
\otimes _{M-1} \bra n_{M-1}|$, where $n_0+n_1+\cdots+n_{M-1}=N$.
The wavefunctions can be expanded in this basis as
\begin{align}
&|\Psi(\{v \}_N) \ket
=\sum_{\substack{0 \le n_0,\dots,n_{M-1} \le N \\
n_0+\cdots+n_{M-1}=N}}
\bra \{ n \}_{M,N}|\psi(\{v \}_N) \ket
|\{ n \}_{M,N} \ket, \\
&\bra \Psi(\{v \}_N)|
=\sum_{\substack{0 \le n_0, \dots,n_{M-1} \le N \\
n_0+\cdots+n_{M-1}=N}}
\bra \{ n \}_{M,N}| \bra \psi(\{v \}_N)| \{ n \}_{M,N} \ket.
\end{align}
There is a one-to-one correspondence between
the set $\{ n \}_{M,N}=\{n_0,n_1,\dots,n_{M-1} \}$ ($n_0+n_1+\cdots+n_{M-1}=N$)
and the Young diagram $\lambda=(\lambda_1,\lambda_2,\dots,\lambda_N)$
($M-1 \ge \lambda_1 \ge \lambda_2 \ge \cdots \ge \lambda_N \ge 0$).
Namely, each Young diagram $\lambda$ under the constraint
$\ell(\lambda) \le N$, $\lambda_1 \le M-1$
can be labeled by a set of integers $\{ n \}_{M,N}$ as
$\lambda=((M-1)^{n_{M-1}}, \dots, 1^{n_1},0^{n_0})$.
In Figure~\ref{config-B}, we denote the particle positions
$\{n\}_{M,N}$ and $\{n^{\vee} \}_{M,N}$, which correspond
to the Young diagram $\lambda$ and $\lambda^{\vee}$, respectively. 
From this, one can intuitively 
find that the positions of the particles corresponding to $\lambda^{\vee}$ are 
related to those corresponding to $\lambda$ after a $180^{\circ}$ rotation.
\begin{figure}[ttt]
\begin{center}
\includegraphics[width=0.85\textwidth]{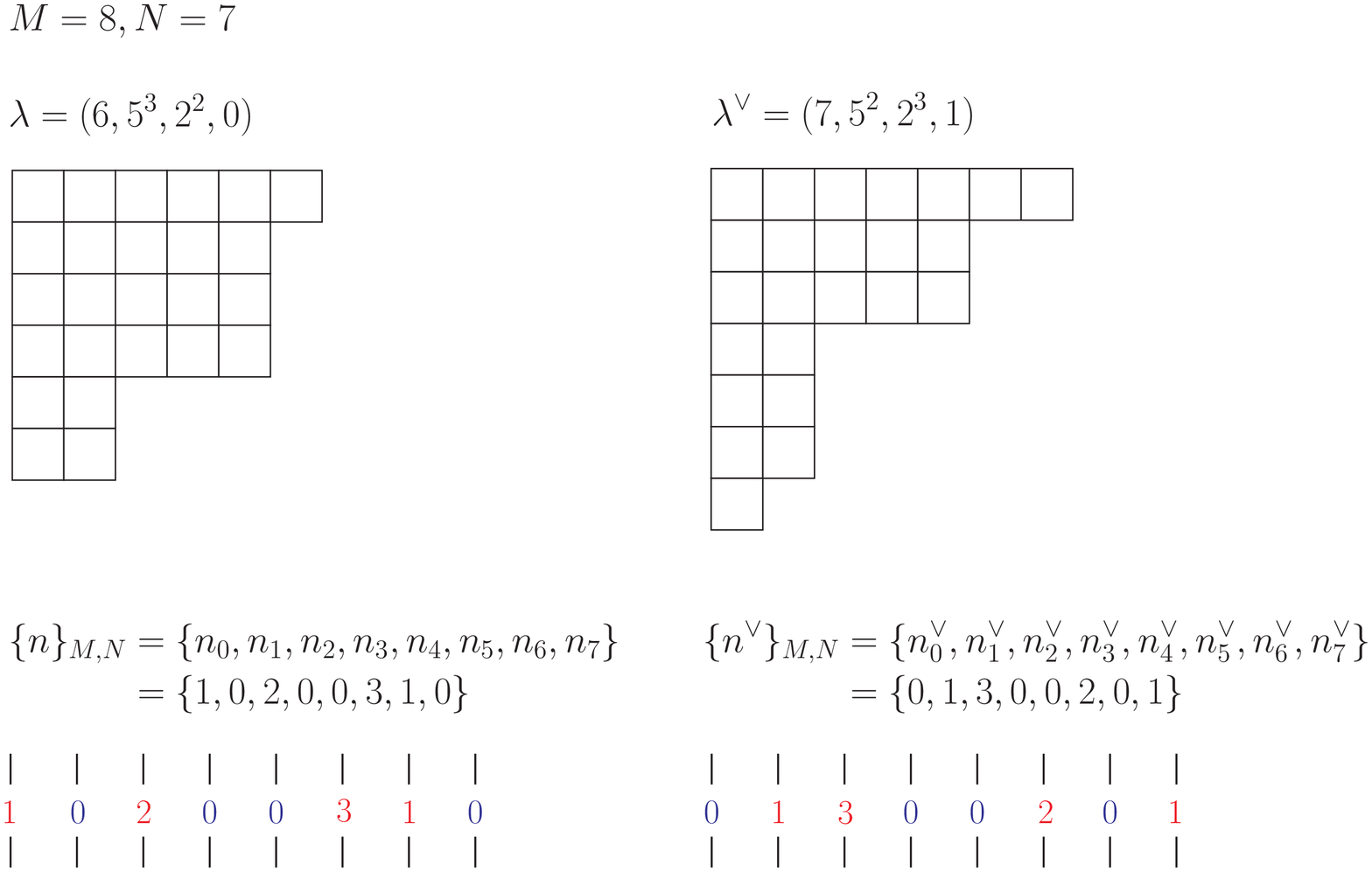}
\end{center}
\caption{An example of the partitions $\lambda$ and $\lambda^{\vee}$
and corresponding particle configurations $\{n\}_{M,N}=\{n_0,\dots n_{M-1}\}$ 
and$\{n^{\vee}\}_{M,N}=\{n^{\vee}_0,\dots n^{\vee}_{M-1}\}$  for 
the conditions $M=8$, $N=7$ and $\lambda=(6,5^3,2^2,0)$.  One sees that
the positions of the particles corresponding to $\lambda^{\vee}$ are 
related to those corresponding to $\lambda$ after a $180^{\circ}$ rotation.
}
\label{config-B}
\end{figure}

The following definition \cite{Wh}
on the ordering on the basis of particle configurations
is useful for later purpose.
\begin{definition} \cite{Wh}
For two configurations
$\{ m \}_{M,N+1}=\{m_0,m_1,\dots,m_{M-1} \}$ $(m_0+m_1+\cdots+m_{M-1}=N+1)$
and
$\{ n \}_{M,N}=\{n_0,n_1,\dots,n_{M-1} \}$ $(n_0+n_1+\cdots+n_{M-1}=N)$,
let $\sum_j^m=\sum_{k=j}^{M-1}m_k$ and $\sum_j^n=\sum_{k=j}^{M-1}n_k$.
We say that the particle configurations
$\{ m \}_{M,N+1}$  and $\{ n \}_{M,N}$ are admissible, if and only if
$0 \le (\sum_j^m-\sum_j^n) \le 1$ ($1 \le j \le M-1$), and write this relation
as $\{ m \}_{M,N+1} \triangleright \{ n \}_{M,N}$.
\end{definition}
\begin{proposition}\label{admissible}
Let $\{m\}_{M,N+1}$ and $\{n\}_{M,N}$ be the particle configurations
described by the Young diagram $\mu=(\mu_1,\dots,\mu_{N+1})$
and $\lambda=(\lambda_1,\dots,\lambda_{N})$. Then 
\begin{align}
\mu\succ\lambda \Longleftrightarrow \{ m \}_{M,N+1} \triangleright \{ n \}_{M,N}.
\end{align}
\end{proposition}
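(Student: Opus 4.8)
The plan is to translate both the interlacing relation $\mu\succ\lambda$ and the admissibility relation $\{m\}_{M,N+1}\triangleright\{n\}_{M,N}$ into identical statements about the conjugate partitions $\mu'$ and $\lambda'$. The bridge is the observation that the partial sums appearing in the admissibility condition are precisely the values of the conjugate partition. Recall from the labeling $\lambda=((M-1)^{n_{M-1}},\dots,1^{n_1},0^{n_0})$ that $n_k$ counts the parts of $\lambda$ equal to $k$. Hence
\begin{align}
\textstyle\sum_j^n=\sum_{k=j}^{M-1}n_k=\#\{i:\lambda_i\ge j\}=\lambda'_j,\nonumber
\end{align}
and likewise $\sum_j^m=\#\{i:\mu_i\ge j\}=\mu'_j$ for every $j\ge 1$. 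The first step, therefore, is simply to record these two identities, which are immediate from the stated correspondence $\lambda\leftrightarrow\{n\}_{M,N}$.

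Given these identities, the admissibility condition $0\le\sum_j^m-\sum_j^n\le 1$ $(1\le j\le M-1)$ becomes $0\le\mu'_j-\lambda'_j\le 1$, i.e. the statement that $\mu'/\lambda'$ is a \emph{vertical} strip. Since both $\mu$ and $\lambda$ have all parts in $\{0,\dots,M-1\}$, their conjugates vanish for $j\ge M$, so the restriction $1\le j\le M-1$ loses nothing: the condition is equivalent to $0\le\mu'_j-\lambda'_j\le 1$ for all $j\ge1$. The second step is thus to reduce the proposition to the classical duality between horizontal and vertical strips, namely that $\mu\succ\lambda$ (equivalently, $\mu/\lambda$ a horizontal strip) holds if and only if $\mu'/\lambda'$ is a vertical strip.

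To keep the argument self-contained I would prove this duality directly by counting rather than cite it. For $(\Rightarrow)$: from the chain $\mu_1\ge\lambda_1\ge\mu_2\ge\cdots\ge\lambda_N\ge\mu_{N+1}$ one checks that $\lambda_i\ge j\Rightarrow\mu_i\ge j$ yields $\mu'_j\ge\lambda'_j$, while $\lambda_i\ge\mu_{i+1}$ shows that if $\mu_1,\dots,\mu_r\ge j$ then $\lambda_1,\dots,\lambda_{r-1}\ge j$, giving $\mu'_j-\lambda'_j\le 1$. For $(\Leftarrow)$: assuming interlacing fails, say $\mu_i<\lambda_i$ or $\lambda_i<\mu_{i+1}$ for some $i$, I choose $j=\mu_i+1$ (resp. $j=\lambda_i+1$) and count parts $\ge j$; the first choice forces $\lambda'_j\ge i>i-1\ge\mu'_j$, contradicting the lower bound $0\le\mu'_j-\lambda'_j$, while the second forces $\mu'_j\ge i+1$ and $\lambda'_j\le i-1$, so $\mu'_j-\lambda'_j\ge 2$, contradicting the upper bound. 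Both implications follow.

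I expect the only genuine care-point, rather than a real obstacle, to be the bookkeeping in the converse direction: correctly matching the test value $j$ to the index $i$ at which interlacing breaks and confirming that the resulting jump of the counting function is exactly $\ge 2$. The forward direction and the identification of the partial sums $\sum_j^n,\sum_j^m$ with the conjugate-partition values $\lambda'_j,\mu'_j$ are routine once the correspondence between $\lambda$ and $\{n\}_{M,N}$ is written out explicitly.
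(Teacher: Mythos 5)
Your proof is correct. The key identification $\sum_j^n=\#\{i:\lambda_i\ge j\}=\lambda'_j$ (and likewise for $\mu$) is immediate from the labeling $\lambda=((M-1)^{n_{M-1}},\dots,1^{n_1},0^{n_0})$, your reduction to ``$\mu'/\lambda'$ is a vertical strip'' is sound (the cutoff $1\le j\le M-1$ indeed loses nothing since both conjugates vanish for $j\ge M$, and your test values $j=\mu_i+1$, $j=\lambda_i+1$ automatically land in that range because all parts are at most $M-1$), and both directions of the strip duality are verified cleanly. Note, however, that the paper does not actually prove this proposition: it is stated without argument and merely illustrated by two figures showing an admissible and a non-admissible configuration, with the equivalence treated as evident from the pictorial correspondence between occupation numbers and Young diagrams. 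So your contribution is not an alternative route but a genuine proof where the paper supplies none; the conjugate-partition/partial-sum dictionary you set up is exactly the right bridge, and it is the standard horizontal-strip versus vertical-strip duality phrased in the paper's occupation-number language. The only stylistic remark is that you could shorten the converse by observing that the two failure modes are symmetric under exchanging the roles of $\mu'$ and $\lambda'$ up to an index shift, but the explicit case check you give is perfectly fine.
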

\noindent
In Figure~\ref{Interlacing-B1} (resp. Figure~\ref{Interlacing-B2}), an example of the
 interlacing
(resp. non-interlacing)  partitions and the corresponding admissible (resp. non-admissible) 
particle configurations are depicted.

\begin{figure}[ttt]
\begin{center}
\includegraphics[width=0.75\textwidth]{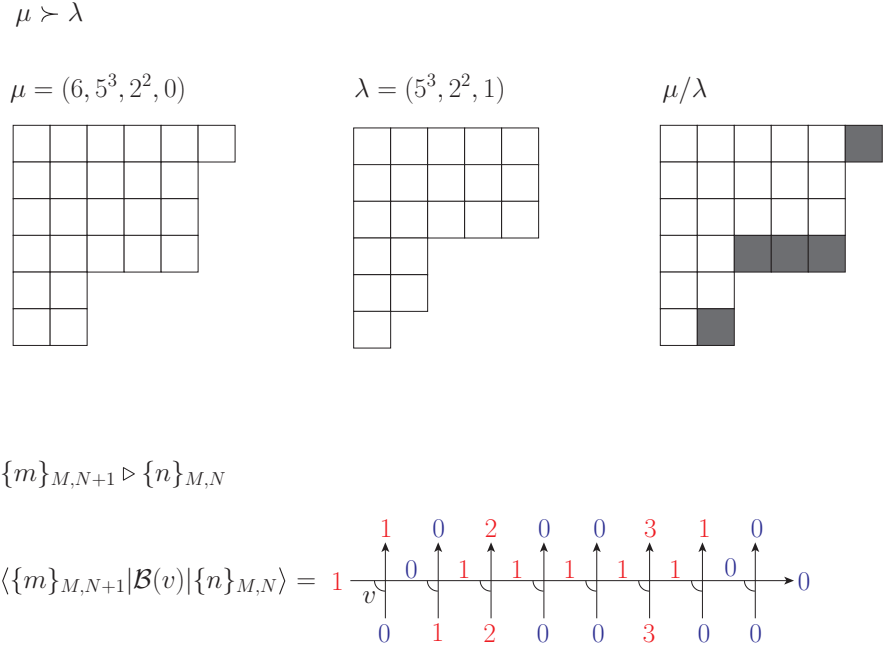}
\end{center}
\caption{An example of the interlacing partition functions 
$\mu \succ \lambda$. Here we have set
$\mu=(6,5^3,2^2,0)$ and $\lambda=(5^3,2^2,1)$. The skew Young diagram 
$\mu/\lambda$ is
 depicted as the gray boxes. The 
input (resp. output) state denotes the particle configuration corresponding to $\lambda$ 
(resp. $\mu$).  The particle configurations are admissible for the 
interlacing partitions. 
For admissible configurations  $\{m\}_{M,N+1} \triangleright\{n\}_{M,N}$,
the matrix element $\bra \{m\}_{M,N+1}|\mathcal{B}(v) |\{n\}_{M,N}\ket$
is non-zero for the generic value of $v$.}
\label{Interlacing-B1}
\end{figure}

\begin{figure}[ttt]
\begin{center}
\includegraphics[width=0.75\textwidth]{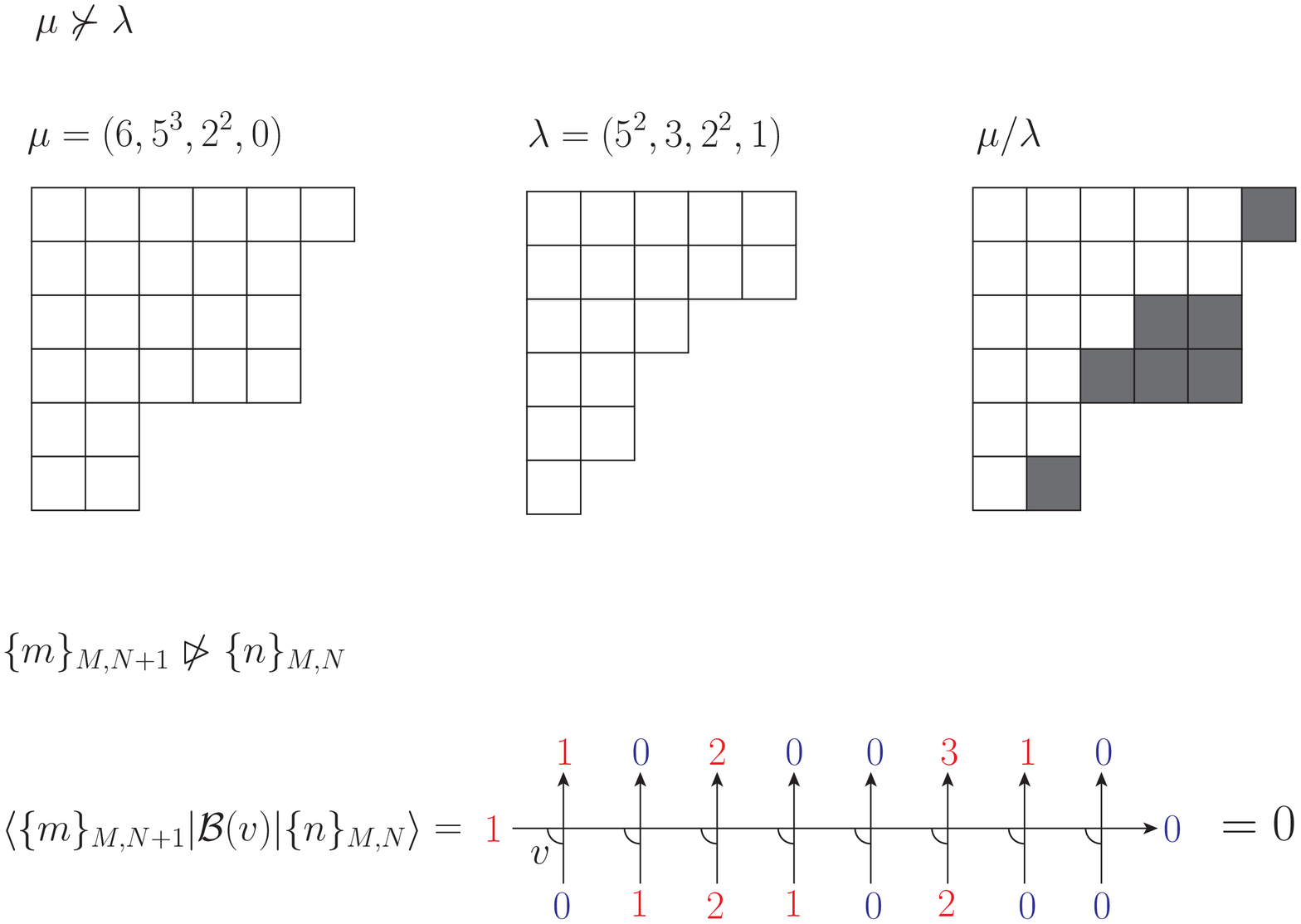}
\end{center}
\caption{
An example of the non interlacing partition functions 
$\mu \nsucc \lambda$. Here we have set
$\mu=(6,5^3,2^2,0)$ and $\lambda=(5^2,3,2^2,1)$.
 For non-interlacing partitions, the particle configurations are not admissible. 
For non-admissible configurations $\{m\}_{M,N+1} \ntriangleright\{n\}_{M,N}$, 
one sees $\bra \{m\}_{M,N+1}|\mathcal{B}(v) |\{n\}_{M,N}\ket=0$.}
\label{Interlacing-B2}
\end{figure}

We show the wavefunctions $\bra \{ n \}_M|\Psi(\{v \}_N) \ket$
and its dual $\bra \Psi(\{v \}_N)| \{ n \}_M \ket$
can be represented in the following determinant forms
which are parametrized by Young diagrams.
\begin{theorem} \label{bosontheorem}
The wavefunctions can be expressed by the Grothendieck polynomials as
\begin{align}
& \bra \{ n \}_{M,N}|\Psi(\{v \}_N) \ket
=\prod_{j=1}^N (v_j^{-1}-\beta v_j)^{M-1}G_\lambda(z_1,\dots,z_N;\beta), 
\label{Grothen-B1} \\
&\bra \Psi(\{v \}_N)| \{ n \}_{M,N} \ket
=\prod_{j=1}^N (v_j^{-1}-\beta v_j)^{M-1}
G_{\lambda^\vee}(z_1,\dots,z_N;\beta) \label{Grothen-B2},
\end{align}
where $z_j^{-1}=v_j^{-2}-\beta$ and $\lambda^\vee=(\lambda_1^\vee,\lambda_2^\vee,\dots,\lambda_N^\vee)$ ($M-1 \ge \lambda_1^\vee \ge \cdots \ge \lambda_N^\vee \ge 0$) is given by the Young diagram $\lambda$ as
$\lambda_j^\vee=M-1-\lambda_{N+1-j}$.
\end{theorem}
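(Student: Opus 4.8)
The plan is to mirror, for the phase model, the argument that produced the five-vertex wavefunction in Theorem~\ref{th-wave}, exploiting the fact that $\mathcal{B}(v)$ obeys exactly the same commutation relations \eqref{commutation} as $B(u)$ and that the Boltzmann weights of $\mathcal{L}_{aj}(v)$ (Figure~\ref{boson-L}) are again invariant under a $180^\circ$ rotation. First I would establish the single-variable action of $\mathcal{B}(v)$ on an arbitrary $N$-boson configuration. Writing $\mathcal{B}(v)$ as the contraction of $\mathcal{L}$-operators along the auxiliary line and evaluating site by site from the graphical rules in Figure~\ref{boson-L}, I expect the analogue of \eqref{actionone},
\[
\mathcal{B}(v)|\{n\}_{M,N}\ket=\sum_{\{m\}_{M,N+1}\triangleright\{n\}_{M,N}}W(\{m\},\{n\};v)\,|\{m\}_{M,N+1}\ket,
\]
with the sum restricted to admissible configurations by Proposition~\ref{admissible}. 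The core step is to show that, after extracting the scalar factor $(v^{-1}-\beta v)^{M-1}$ and setting $z^{-1}=v^{-2}-\beta$, the weight $W$ coincides with the single-variable skew Grothendieck polynomial $G_{\mu/\lambda}(z;\beta)$ of \eqref{skewexpression}, where $\mu,\lambda$ are the partitions attached to $\{m\},\{n\}$ and the admissibility condition becomes $\mu\succ\lambda$.

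With this identification in hand, the theorem follows by repeated application exactly as in the proof of Theorem~\ref{addition} and Corollary~\ref{decomposition}. Acting with $\mathcal{B}(v_1),\dots,\mathcal{B}(v_N)$ in turn on the vacuum $|\Omega\ket$ builds a nested sum over interlacing chains $\lambda=\lambda^{(0)}\succ\lambda^{(1)}\succ\cdots\succ\lambda^{(N)}=\emptyset$ of products of single-variable skew factors; collecting the prefactors $(v_j^{-1}-\beta v_j)^{M-1}$ and recognizing the nested sum as the right-hand side of \eqref{decompositionskew} yields \eqref{Grothen-B1}. The dual identity \eqref{Grothen-B2} then follows from the $180^\circ$ rotation invariance of the weights together with the complementation $\lambda^\vee_j=M-1-\lambda_{N+1-j}$, in the same way that \eqref{wavefunctiontwo} was deduced from \eqref{wavefunctionone}; alternatively it can be obtained directly by a $\mathcal{C}$-operator computation parallel to Proposition~\ref{skew}.

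The main obstacle is the identification of $W$ with $G_{\mu/\lambda}(z;\beta)$. The difficulty specific to the boson case is the infinite-dimensional Fock space and the multiple occupancy of sites, which has no counterpart in the fermion computation behind \eqref{actionone}. Concretely, I must track how the vacuum-projection term $v^{-1}-\beta v\pi_j$ in the diagonal entry of $\mathcal{L}_{aj}(v)$ contributes. Because $\pi_j$ acts nontrivially only on empty sites, I expect it to generate precisely the factors $(1+\beta z)$ in \eqref{skewexpression}, degenerating to $1$ exactly when $\mu_{j+1}=\lambda_j$, that is when no empty site is traversed at the relevant step, so that the Kronecker-delta structure of \eqref{skewexpression} is reproduced. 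Verifying that the $v$-dependence then collapses, via the elementary relations $1+\beta z=(1-\beta v^2)^{-1}$ and $v^{-1}-\beta v=v^{-1}(1-\beta v^2)$, to the claimed prefactor together with the power $z^{|\mu|-|\lambda|}$, and in particular that the normalization carries no extra $\beta$-power beyond $(v^{-1}-\beta v)^{M-1}$, is the computation on which the whole correspondence rests.
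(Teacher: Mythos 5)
Your proposal is correct and follows essentially the same route as the paper: the core is the identification of the matrix elements of a single $\mathcal{B}(v)$ (after extracting $(v^{-1}-\beta v)^{M-1}$ and substituting $z^{-1}=v^{-2}-\beta$) with the single-variable skew Grothendieck polynomial of \eqref{skewexpression}, which the paper isolates as Lemma~\ref{skew-B-lemma} and proves by exactly the site-by-site tracking of the $\pi_j$ contributions you describe, followed by the resummation via Corollary~\ref{decomposition} and the $180^{\circ}$-rotation argument for the dual. Your checks that $1+\beta z=(1-\beta v^2)^{-1}$ and $v^{-1}-\beta v=v^{-1}(1-\beta v^2)$ are precisely the substitutions the paper uses to pass from \eqref{beforetranslationone} to \eqref{skewexpression}.
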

\begin{proof}
The second relation \eqref{Grothen-B2} holds if the first equation
\eqref{Grothen-B1} is valid. This follows from an argument similar
to that in the wavefunctions for the five-vertex model. Namely, since
the Boltzmann weights for the phase model 
are invariant under a $180^{\circ}$ (see Figure~\ref{boson-L}) and
the commutativity of the $B$- and $C$-operators \eqref{commutation},
the graphical description of the wave function 
$\bra \{ n \}_{M,N}|\Psi(\{v \}_N) \ket$
is also invariant under 
the rotation (cf. Figure~\ref{wave} for the five-vertex model).
The rotated graph is nothing but the dual wavefunction
$\bra \Psi(\{v \}_N)| \{ n^{\vee} \}_{M,N} \ket$. Transforming
$ \{ n^{\vee} \}\to \{ n \}$ which corresponds to the 
transformation $\lambda\to\lambda^{\vee}$, one finds \eqref{Grothen-B2}
is valid if \eqref{Grothen-B1} holds. Thus, it is sufficient to show \eqref{Grothen-B1}.

The relation between the wavefunctions of the integrable five-vertex model
of $N$ and $N+1$ particles can be reduced to the relation
between the Grothendieck and skew Grothendieck polynomials \eqref{relation}.
This relation is also the key for the non-Hermitian phase model.
Namely, we show the following lemma
for the correspondence between the matrix elements of the
single $\mathcal{B}$- and $\mathcal{C}$- operators
and the skew Grothendieck polynomials of a single variable
from which one concludes that the
wavefunctions is proportional to the Grothendieck polynomials.
\begin{lemma}\label{skew-B-lemma}
The matrix elements of the single
$\mathcal{B}$- and $\mathcal{C}$-operators can be
expressed as the skew Grothendieck polynomials of a single variable as
\begin{align}
\bra \{ m \}_{M,N+1}|(v^{-1}-\beta v)^{1-M} \mathcal{B}(v)| \{ n \}_{M,N}
\ket&=G_{\mu/\lambda}(z;\beta), \label{bosonskewone} \\
\bra \{ n \}_{M,N}|(v^{-1}-\beta v)^{1-M} \mathcal{C}(v)| \{ m \}_{M,N+1}
\ket&=G_{\mu^\vee/\lambda^\vee}(z;\beta), \label{bosonskewtwo}
\end{align}
where the Young diagram $\mu=(\mu_1,\mu_2,\dots,\mu_{N+1})$
$(M-1 \ge \mu_1 \ge \cdots \ge \mu_{N+1} \ge 0)$
is parametrized by the configuration $\{ m \}_{M,N+1}=\{m_0,m_1,\dots,m_{M-1} \}$ ($m_0+
\cdots+m_{M-1}=N+1$) as $\mu=((M-1)^{m_{M-1}},\dots,1^{m_1},0^{m_0})$.
The Young diagram
$\mu^\vee=(\mu_1^\vee,\mu_2^\vee,\dots,\mu_{N+1}^\vee)$
$(M-1 \ge \mu_1^\vee \ge \cdots \ge \mu_{N+1}^\vee \ge 0)$
is given by $\mu_j^\vee=M-1-\mu_{N+2-j}$.
\end{lemma}
Here we first end the proof of Theorem \ref{bosontheorem}
by using Lemma~\ref{skew-B-lemma}. The left
hand side of \eqref{Grothen-B1} is decomposed as
\begin{align}
&\bra \{ n \}_{M,N}|\Psi(\{v \}_N) \ket \nn \\
&\quad =\sum_{\{m^{(0)}\},\dots, \{m^{(N-1)}\}}\bra \{ n \}_{M,N}| \prod_{j=1}^N
\left\{\mathcal{B}(v_j)|\{ m^{(N-j)}\}_{M,N-j}\ket \bra \{ m^{(N-j)}\}_{M,N-j}|\right\}|\Omega\ket.
\end{align}
Then applying Lemma~\ref{skew-B-lemma} to the above decomposition and
using Corollary~\ref{decomposition}, one obtains \eqref{Grothen-B1}.
\end{proof}
\noindent
{\it Proof of Lemma~\ref{skew-B-lemma}.} 
Utilizing the graphical description
and an argument similar to Proposition~\ref{skew},
one immediately sees that \eqref{bosonskewtwo} automatically holds 
if \eqref{bosonskewone} holds. Let us show \eqref{bosonskewone}.
From the matrix elements of the $L$-operator, one finds
\begin{align}
\bra \{ m \}_{M,N+1}|\mathcal{B}(u)| \{ n \}_{M,N} \ket=0, \ \ \ 
\mathrm{unless} \ \ \ \{ m \}_{M,N+1} \triangleright \{ n \}_{M,N}.
\end{align}
See Figure~\ref{Interlacing-B2} for a graphical representation.
For $\{ m \}_{M,N+1}$ and $\{ n \}_{M,N}$,
we introduce $\{ p \}_r=\{0 \le p_1<\cdots<p_r \le M-1 \}$ to be the set of all integers
$p$ such that $m_p=n_p+1$, and
$\{ q \}_s=\{0 \le q_1<\cdots<q_s \le M-1 \}$ to be the set of all integers
$q$ such that $m_q+1=n_q$.
When $\{ m \}_{M,N+1}$ and $\{ n \}_{M,N}$ satisfy the
admissible condition $\{ m \}_{M,N+1} \triangleright \{ n \}_{M,N}$,
$\{ p \}_r$ and $\{ q \}_s$ satisfy
$s=r-1$ and $p_k < q_k < p_{k+1}$ ($k=1,\dots,r-1$).
(see Figure~\ref{Interlacing-B1}, for instance).
One calculates the matrix elements of $\mathcal{B}(v)$
using $\{ p \}_r$ and $\{ q \}_{r-1}$ as
\begin{align}
\bra \{m\}_{M,N+1}|\mathcal{B}(v)|\{ n \}_{M,N} \ket 
&={}_{a} \bra 0 | \bra \{m\}_{M,N+1}| \prod_{j=0}^{M-1} \mathcal{L}_{aj}(v)
|1 \ket_{a} |\{ n \}_{M,N} \ket \nonumber \\
&=v^{\sum_{j=1}^r p_j-\sum_{j=1}^{r-1} q_j-r+1}
\prod_{j=1}^r \prod_{k=p_j+1}^{q_j-1}(v^{-1}-\beta v \delta_{n_k\,0}),
\end{align}
where $q_0=-1, \ q_r=M$.
This can be shown by combining the following partial actions:
\begin{align}
&\prod_{l=q_{j-1}+1}^{q_j} \mathcal{L}_{al}(v)| 1 \ket_a \otimes
\left\{ \otimes_{k=q_{j-1}+1}^{q_j}
| n_k \ket_k \right\} \nonumber \\
&\qquad =v^{p_j-q_{j-1}-1} \prod_{l=p_j+1}^{q_j-1}(v^{-1}-\beta v \delta_{n_l\,0})
| 1 \ket_a \left\{ \otimes_{k=q_{j-1}+1}^{q_j}
| m_k \ket_k \right\} \quad (1 \le j \le r-1), \nonumber \\
&\prod_{l=q_{r-1}+1}^{M-1} \mathcal{L}_{al}(v)| 1 \ket_a  \otimes
\left\{ \otimes_{k=q_{r-1}+1}^{M-1}
| n_k \ket_k \right\} \nonumber \\
&\qquad =v^{p_r-q_{r-1}-1} \prod_{l=p_r+1}^{M-1}(v^{-1}-\beta v \delta_{n_l\,0})
| 0 \ket_a \left\{ \otimes_{k=q_{r-1}+1}^{M-1}
| m_k \ket_k \right\}.
\end{align}
Dividing the matrix elements 
$\bra \{m\}_{M,N+1}|\mathcal{B}(v)|\{ n \}_{M,N} \ket$
by $(v^{-1}-\beta v)^{M-1}$
and expressing in terms of the variable $z$, we have
\begin{align}
&\bra \{m\}_{M,N+1}|(v^{-1}-\beta v)^{1-M}
\mathcal{B}(v)|\{ n \}_{M,N} \ket \nonumber \\
&\,\,=
\begin{cases}
z^{\sum_{j=1}^r p_j-\sum_{j=1}^{r-1} q_j}(1+\beta z)^{r-1}
\prod_{j=1}^r \prod_{k=p_j+1}^{q_j-1}(1+\beta z-\beta z \delta_{n_k\,0}) & 
\text{ $\{ m \}_{M,N+1} \triangleright \{ n \}_{M,N}$}  \\
0 & \text{ otherwise}
\end{cases}.
\label{beforetranslationone}
\end{align}
The remaining step is to translate the configuration of particles
$\{ m \}_{M,N+1}$ and $\{ n \}_{M,N}$ with the differences specified by
$\{ p \}_r$ and $\{ q \}_{r-1}$, to
the Young diagrams $\mu$ and $\lambda$.
One finds the translation rule
\begin{align}
&\sum_{j=1}^r p_j-\sum_{j=1}^{r-1}q_j
=\sum_{j=1}^{N+1} \mu_j-\sum_{j=1}^N \lambda_j, \nn \\
&r-1+\# \{k \in \cup_{j=1}^{r} \{ p_j+1, \dots , q_j-1 \} |n_k \neq 0 \}
=\# \{ j \in \{1, \dots , N \}|\lambda_j \neq \mu_{j+1} \}.
\end{align}
By this translation together with Proposition~\ref{admissible}, 
one finds that \eqref{beforetranslationone}
is nothing but the skew Grothendieck polynomial
\eqref{skewexpression}. \hfill $\Box$
\begin{example}
The wavefunctions \eqref{Grothen-B1} and \eqref{Grothen-B2} for
the free phase model ($\beta=0$) reduce to \cite{Bo}:
\begin{align}
\bra \{ n \}_M|\Psi(\{v \}_N) \ket
&=\prod_{j=1}^N v_j^{-M+1} s_\lambda(v_1^2,v_2^2,\dots,v_N^2), \nn \\
\bra \Psi(\{v \}_N)| \{ n \}_M \ket  
&=\prod_{j=1}^N v_j^{M-1} s_\lambda(v_1^{-2},v_2^{-2},\dots,v_N^{-2}),
\end{align}
where $s_{\lambda}(z_1,\dots,z_N)$ is the Schur polynomials.
\end{example}
\begin{example}
For some particular cases, the wavefunctions reduces to some simple forms:
\begin{align}
&\bra \{ N,0,\dots,0 \}|\Psi(\{v \}_N) \ket=
\bra \Psi(\{ v \}_N)|\{0,\dots,0,N \} \ket
=\prod_{j=1}^N (v_j^{-1}-\beta v_j)^{M-1}, \\
&\bra \{ 0,\dots,0,N \}|\Psi(\{v \}_N) \ket=\prod_{j=1}^N v_j^{M-1}=
\bra \Psi(\{ v \}_N)|\{N,0,\dots,0 \} \ket
=\prod_{j=1}^N v_j^{M-1}.
\end{align}
Their relations can be easily checked from
their graphical descriptions.
\end{example}

Applying the Cauchy identity \eqref{cauchy} and using the relations
\eqref{Grothen-B1} and \eqref{Grothen-B2}, we can express the 
scalar product of the $N$-particle states as a determinant
form:
\begin{corollary}
The scalar products of the $N$-particle states for the 
non-Hermitian phase model has the following determinant 
representation.
\begin{align}
&\bra \Psi(\{ u \}_N)|\Psi(\{ v \}_N) \ket \nonumber \\
&=\prod_{1 \le j<k \le N} \frac{1}{(v_j^2-v_k^2)(u_k^2-u_j^2)}
\mathrm{det}_N 
\left[
\frac{(u_k^{-1}-\beta u_k)^M v_j^{M+2(N-1)}
-(v_j^{-1}-\beta v_j)^M u_k^{M+2(N-1)}}{v_j/u_k-u_k/v_j}
\right]. \label{Scalar-B}
\end{align}
\end{corollary}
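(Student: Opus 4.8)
The plan is to evaluate the scalar product by inserting a resolution of the identity in the orthonormal occupation-number basis and then identifying the resulting sum with the Cauchy identity \eqref{cauchy}. Since $|\Psi(\{v\}_N)\ket$ and $\bra\Psi(\{u\}_N)|$ lie in the $N$-particle sector, on which the states $|\{n\}_{M,N}\ket$ form an orthonormal basis, I would first write
\begin{align}
\bra \Psi(\{ u \}_N)|\Psi(\{ v \}_N) \ket
=\sum_{\{n\}_{M,N}}\bra \Psi(\{u\}_N)|\{n\}_{M,N}\ket\,
\bra \{n\}_{M,N}|\Psi(\{v\}_N)\ket .
\end{align}
Using the one-to-one correspondence between the configurations $\{n\}_{M,N}$ and Young diagrams $\lambda\subseteq(M-1)^N$, together with the determinant forms \eqref{Grothen-B1} and \eqref{Grothen-B2} of Theorem~\ref{bosontheorem}, this becomes
\begin{align}
\bra \Psi(\{ u \}_N)|\Psi(\{ v \}_N) \ket
=\prod_{j=1}^N(v_j^{-1}-\beta v_j)^{M-1}(u_j^{-1}-\beta u_j)^{M-1}
\sum_{\lambda\subseteq(M-1)^N}G_\lambda(z_1,\dots,z_N;\beta)\,G_{\lambda^\vee}(w_1,\dots,w_N;\beta),
\end{align}
where $z_j=v_j^2/(1-\beta v_j^2)$ and $w_j=u_j^2/(1-\beta u_j^2)$, so that $z_j^{-1}=v_j^{-2}-\beta$ and $w_j^{-1}=u_j^{-2}-\beta$, and $\lambda^\vee_j=M-1-\lambda_{N+1-j}$.

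Next I would apply the Cauchy identity \eqref{cauchy} with $L=M-1$, the value for which the complementation $\lambda^\vee_j=L-\lambda_{N+1-j}$ agrees with that of Theorem~\ref{bosontheorem}. This replaces the sum by
\begin{align}
\sum_{\lambda\subseteq(M-1)^N}G_\lambda(z;\beta)G_{\lambda^\vee}(w;\beta)
=\prod_{1\le j<k\le N}\frac{1}{(z_j-z_k)(w_k-w_j)}\,
\mathrm{det}_N\left[\frac{z_j^{M-1+N}(1+\beta w_k)^{N-1}-w_k^{M-1+N}(1+\beta z_j)^{N-1}}{z_j-w_k}\right].
\end{align}
It then remains to rewrite everything in terms of $u,v$. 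The elementary identities to use are $1+\beta z_j=(1-\beta v_j^2)^{-1}$, $z_j-w_k=(v_j^2-u_k^2)/[(1-\beta v_j^2)(1-\beta u_k^2)]$, and $z_j-z_k=(v_j^2-v_k^2)/[(1-\beta v_j^2)(1-\beta v_k^2)]$ (and their $w$-analogues). The last identity shows that $\prod_{j<k}(z_j-z_k)^{-1}(w_k-w_j)^{-1}$ produces exactly the prefactor $\prod_{j<k}(v_j^2-v_k^2)^{-1}(u_k^2-u_j^2)^{-1}$ of \eqref{Scalar-B}, times factors $(1-\beta v_j^2)$ and $(1-\beta u_k^2)$; since each index appears $N-1$ times in $\prod_{j<k}$, these combine with the prefactors $(v_j^{-1}-\beta v_j)^{M-1}$ and $(u_j^{-1}-\beta u_j)^{M-1}$ into a single row factor $(1-\beta v_j^2)^{M+N-2}v_j^{1-M}$ and column factor $(1-\beta u_k^2)^{M+N-2}u_k^{1-M}$.

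Finally I would absorb these row and column factors into the determinant by multilinearity. The decisive cancellation is that in the first term of the $(j,k)$ entry the power of $(1-\beta v_j^2)$ cancels completely while that of $(1-\beta u_k^2)$ reduces to $M$, and symmetrically in the second term; collecting the surviving powers of $v_j$ and $u_k$ as well, the entry becomes
\begin{align}
\frac{(u_k^{-1}-\beta u_k)^M v_j^{M+2(N-1)}-(v_j^{-1}-\beta v_j)^M u_k^{M+2(N-1)}}{v_j/u_k-u_k/v_j},
\end{align}
which is precisely the matrix appearing in \eqref{Scalar-B}. I expect the main obstacle to be purely this last bookkeeping: tracking the several powers of $(1-\beta v_j^2)$, $(1-\beta u_k^2)$, $v_j$ and $u_k$ through the change of variables and verifying that the delicate cancellations (rooted in the coincidence $M-2+N=M+N-2$) leave exactly the claimed entry, rather than any conceptual difficulty.
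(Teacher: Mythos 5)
Your proposal is correct and follows exactly the route the paper indicates (the paper states the corollary with only the one-line justification "applying the Cauchy identity \eqref{cauchy} and using the relations \eqref{Grothen-B1} and \eqref{Grothen-B2}"): insert the occupation-number resolution of the identity, substitute the Grothendieck-polynomial wavefunctions, apply \eqref{cauchy} with $L=M-1$, and change variables back to $u,v$. I verified the bookkeeping — the row/column factors $(1-\beta v_j^2)^{M+N-2}v_j^{1-M}$ and $(1-\beta u_k^2)^{M+N-2}u_k^{1-M}$ do absorb into the determinant to give precisely the entry in \eqref{Scalar-B} — so your filled-in details are a valid completion of the paper's omitted computation.
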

We can also use the summation formula for the Grothendieck polynomials
\eqref{sum-Gr}
to obtain the summation formula for the wavefunctions of the
non-Hermitian phase model.
\begin{corollary}
The summation formula for the wavefunctions holds.
\begin{align}
&\sum_{\{ n \}_{M,N}}(-\beta)^{\sum_{j=1}^{M-1} jn_j}
\bra \{ n \}_{M,N}|\Psi(\{ v \}_N) \ket
=\frac{\prod_{j=1}^N v_j^{N-1}(v_j^{-1}-\beta v_j)^{M+N-2}}
{\prod_{1\le j<k\le N}(v_k^2-v_j^2)}
\mathrm{det}_N V
\label{bosonsum}
\end{align}
with an $N\times N$ matrix $V$ whose matrix elements are
\begin{align}
&
V_{jk}=\sum_{m=0}^{j-1}(-1)^m (-\beta)^{j-N}
\binom{M+N-1}{m}(1-\beta v_k^2)^{1-m+j-N}
\quad (1\le j \le N-1), \nn \\
&
V_{Nk}=-\sum_{m=\mathrm{max}(N-1,1)}^{M+N-1} (-1)^m
\binom{M+N-1}{m} (1-\beta v_k^2)^{1-m}.
\end{align}
\end{corollary}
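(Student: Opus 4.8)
The plan is to reduce the summation formula \eqref{bosonsum} directly to the summation formula \eqref{sum-Gr} for Grothendieck polynomials, using the identification of the phase-model wavefunctions with Grothendieck polynomials established in Theorem~\ref{bosontheorem}. The entire argument is a change of variables together with careful bookkeeping of prefactors; no new combinatorial input is needed.

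First I would substitute $\bra\{n\}_{M,N}|\Psi(\{v\}_N)\ket=\prod_{j=1}^N(v_j^{-1}-\beta v_j)^{M-1}G_\lambda(z_1,\dots,z_N;\beta)$ from \eqref{Grothen-B1} into the left-hand side. Under the bijection $\{n\}_{M,N}\leftrightarrow\lambda=((M-1)^{n_{M-1}},\dots,1^{n_1},0^{n_0})$, each part of $\lambda$ equal to $j$ is counted by $n_j$, so $\sum_{j=1}^{M-1}jn_j=\sum_{i=1}^N\lambda_i$ and the weight $(-\beta)^{\sum_j jn_j}$ is exactly the weight $(-\beta)^{\sum_i\lambda_i}$ appearing in \eqref{sum-Gr}. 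The constraint $M-1\ge\lambda_1$, $\ell(\lambda)\le N$ means the sum runs over $\lambda\subseteq(M-1)^N$, i.e.\ $L=M-1$. Pulling the $v$-dependent prefactor out of the sum, the remaining sum is precisely $\sum_{\lambda\subseteq(M-1)^N}(-\beta)^{\sum_i\lambda_i}G_\lambda(z_1,\dots,z_N;\beta)$, which I evaluate by \eqref{sum-Gr} with $L=M-1$, so that $L+N=M+N-1$.

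The next step is to transport the resulting determinant and Vandermonde factor from the $z$-variables back to the $v$-variables via $z_j=v_j^2/(1-\beta v_j^2)$ (equivalently $z_j^{-1}=v_j^{-2}-\beta$). A one-line computation gives $1+\beta z_j=(1-\beta v_j^2)^{-1}$, so the entries $(1+\beta z_k)^{m-j+N-1}$ and $(1+\beta z_k)^{m-1}$ of \eqref{sum-Gr} become $(1-\beta v_k^2)^{1-m+j-N}$ and $(1-\beta v_k^2)^{1-m}$, reproducing $V_{jk}$ and $V_{Nk}$ in the statement verbatim. Likewise $z_k-z_j=(v_k^2-v_j^2)/[(1-\beta v_k^2)(1-\beta v_j^2)]$, whence $\prod_{j<k}(z_k-z_j)^{-1}=\prod_{i=1}^N(1-\beta v_i^2)^{N-1}/\prod_{j<k}(v_k^2-v_j^2)$, the exponent $N-1$ arising because each index $i$ occurs in exactly $N-1$ of the pairs $j<k$.

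Finally I collect the prefactors. Using $1-\beta v_j^2=v_j(v_j^{-1}-\beta v_j)$, the product $\prod_j(v_j^{-1}-\beta v_j)^{M-1}(1-\beta v_j^2)^{N-1}$ (the first factor coming from Theorem~\ref{bosontheorem}, the second from the transported Vandermonde) collapses to $\prod_j v_j^{N-1}(v_j^{-1}-\beta v_j)^{M+N-2}$, which is precisely the prefactor in \eqref{bosonsum}. I do not expect any genuine obstacle; the only point requiring care is the consistent propagation of the substitution $z_j=v_j^2/(1-\beta v_j^2)$ through the determinant entries, the Vandermonde, and the wavefunction prefactor simultaneously, so that all three conspire to yield the stated closed form.
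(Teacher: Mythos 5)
Your proposal is correct and is exactly the route the paper intends: the paper states this corollary without detailed proof, merely noting that it follows from the Grothendieck summation formula \eqref{sum-Gr} combined with Theorem~\ref{bosontheorem}, which is precisely your substitution $L=M-1$, $\sum_j jn_j=\sum_i\lambda_i$, and the change of variables $z_j=v_j^2/(1-\beta v_j^2)$. All of your bookkeeping (the identities $1+\beta z_k=(1-\beta v_k^2)^{-1}$, $z_k-z_j=(v_k^2-v_j^2)/[(1-\beta v_k^2)(1-\beta v_j^2)]$, and $1-\beta v_j^2=v_j(v_j^{-1}-\beta v_j)$) checks out and reproduces the stated prefactor and matrix entries.
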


\section{Melting crystals}
As an application of our formulae developed in the previous sections,
we study the statistical mechanical system of a melting crystal in 
three dimensions as depicted in Figure~\ref{melting-crystal}.  The melting 
rules are the following.  The melting starts at the one corner of the cubic
crystal. Each cube can be removed if its three faces never touch the other cubes 
constructing
the crystal. The removed cube contributes the factor $q=e^{-\mu/T}$ to the
Boltzmann weight of the configuration, where $\mu>0$ and $T>0$ 
(i.e. $0<q<1$) denote the 
chemical potential and the temperature, respectively.  The system of the melting
crystal can be mapped to the model of stacking cubes around the one corner of
the empty box: a cube can be added such that its three faces touch the other cubes
or the walls/floor of the box. In Figure~\ref{plane-partition}, we depict 
the configuration of the stacked cubes corresponding to Figure~\ref{melting-crystal}. 
One finds that the configurations of the stacked cubes (or equivalently those 
of the melting 
crystal) are in one-to-one correspondence with plane partitions defined as follows.
\begin{figure}[ttt]
\begin{center}
\includegraphics[width=0.4\textwidth]{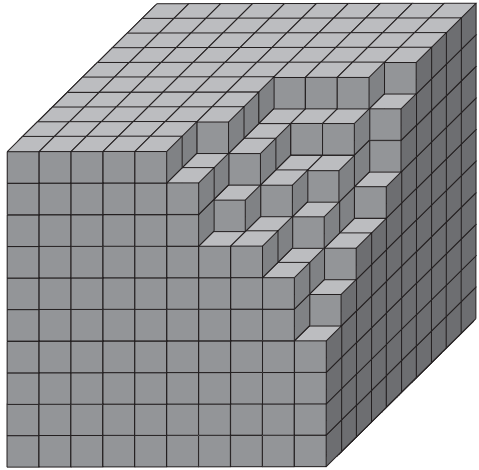}
\end{center}
\caption{A melting crystal. The melting starts at the one corner of 
the crystal. Each cube  is possibly removed (melt) only if its three 
faces do not touch the other cubes. Each removed cube contributes the 
factor $q=e^{-\mu/T}$ ($\mu>0$, $T>0$) 
to the weight of the configuration.} \label{melting-crystal}
\end{figure}

\begin{figure}[ttt]
\begin{center}
\includegraphics[width=0.4\textwidth]{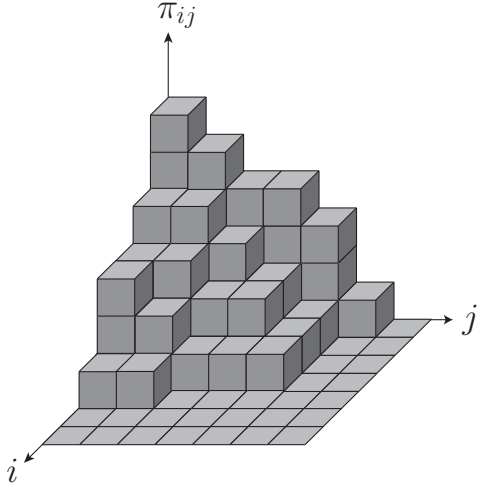}
\end{center}
\caption{The stacked cubes corresponding to Figure~\ref{melting-crystal}. The 
configurations  are in one-to-one correspondence with plane 
partitions. The above configuration
is described by a configuration of the plane partition $|\pi|=58$.}
\label{plane-partition}
\end{figure}
\begin{definition}
A plane partition $\pi$ is a two-dimensional array of non-negative integers 
$\pi_{i j}$ ($i,j>0$) satisfying
$\pi_{i  j} \ge \pi_{i+1 \, j}$, $\pi_{ij} \ge \pi_{i \, j+1}$.
\end{definition}
\noindent
The plane partitions can be regarded as a three-dimensional generalization 
of the Young diagram. In this three-dimensional diagram, $\pi_{ij}$ corresponds
to the height of stacked cubes on the coordinate $(i,j)$. Then
the total number of the stacked cubes is given by 
$|\pi|:=\sum_{i,j\ge 1}\pi_{i j}$.
For later convenience, let us describe some properties satisfying the 
{\it diagonal slices}  of $\pi$, which is defined as follows.
\begin{definition}
For a plane partition $\pi$, the $m$th ($m\in \mathbb{Z}$) 
diagonal slice $\pi^{(m)}$ is a sequence whose 
elements are defined as
\begin{align}
\pi^{(m)}_j=\pi_{j-m\,\, j} \quad \text{for $j>\max(0,m)$}.
\end{align}
\end{definition}
\noindent
First, each diagonal slice $\pi^{(m)}$ is a partition, i.e.,
a sequence of weakly decreasing non-negative integers.
Second, theses partitions satisfy the following
interlacing property.
\begin{lemma}
The series of partitions $\pi^{(m)}$ satisfies the
interlacing relation
\begin{align}
\cdots\pi^{(-2)} \prec \pi^{(-1)}
\prec \pi^{(0)} \succ \pi^{(1)} \succ \pi^{(2)} \cdots.
\label{interlace-lemma}
\end{align}
\end{lemma}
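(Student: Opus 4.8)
The plan is to prove the interlacing relation \eqref{interlace-lemma} directly from the definitions of the plane partition and its diagonal slices, reducing the statement to a comparison of entries $\pi_{ij}$ lying on adjacent diagonals. Recall that the $m$th slice is $\pi^{(m)}_j=\pi_{j-m\,\,j}$. I would first fix attention on one side of the chain, say $\pi^{(m)}\succ\pi^{(m+1)}$ for $m\ge 0$, and observe that by symmetry (exchanging the roles of the two indices $i,j$, which reflects the array across its main diagonal) the other side $\pi^{(-m-1)}\prec\pi^{(-m)}$ follows from the same computation. So the crux is a single generic interlacing statement between two consecutive slices.

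First I would translate the interlacing condition $\pi^{(m)}\succ\pi^{(m+1)}$ into inequalities among the $\pi_{ij}$. Using the earlier definition of $\succ$ (namely $\mu_j\ge\lambda_j\ge\mu_{j+1}$), what must be shown is the pair of inequalities
\begin{align}
\pi^{(m)}_j\ge\pi^{(m+1)}_j\ge\pi^{(m)}_{j+1},
\end{align}
for all valid $j$. Unpacking the slice definition, $\pi^{(m)}_j=\pi_{j-m\,\,j}$ and $\pi^{(m+1)}_j=\pi_{j-m-1\,\,j}$, so the left inequality reads $\pi_{j-m\,\,j}\ge\pi_{j-m-1\,\,j}$ and the right inequality reads $\pi_{j-m-1\,\,j}\ge\pi_{j-m\,\,j+1}$. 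The left one is almost immediate: it compares two entries in the same column $j$ whose row indices differ by one, and the defining monotonicity $\pi_{i\,j}\ge\pi_{i+1\,j}$ of a plane partition gives it at once (with $i=j-m-1$).

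The right inequality is the step I expect to carry slightly more care, since it compares entries in different rows \emph{and} different columns. Here I would chain the two monotonicity axioms: starting from $\pi_{j-m-1\,\,j}$, one step down in the row index gives $\pi_{j-m-1\,\,j}\ge\pi_{j-m\,\,j}$ by $\pi_{i\,j}\ge\pi_{i+1\,j}$, and one step up in the column index gives $\pi_{j-m\,\,j}\ge\pi_{j-m\,\,j+1}$ by $\pi_{ij}\ge\pi_{i\,j+1}$; composing yields $\pi_{j-m-1\,\,j}\ge\pi_{j-m\,\,j+1}$, as required. The only genuine obstacle is bookkeeping: I must check that the index ranges line up so that every entry invoked actually lies in the region $j>\max(0,m)$ where the slice is defined, and that entries falling outside the support of $\pi$ are consistently read as zero, so that the interlacing holds at the boundary of each slice as well. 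Once this index accounting is verified, the generic inequality $\pi^{(m)}\succ\pi^{(m+1)}$ is established for every $m\ge 0$, the reflected argument gives $\pi^{(-m-1)}\prec\pi^{(-m)}$, and concatenating over all $m$ produces the full chain \eqref{interlace-lemma}.
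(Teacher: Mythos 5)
The paper states this lemma without proof, so your argument must stand on its own; your overall strategy (reduce each interlacing inequality to the two monotonicity axioms of a plane partition, and obtain the $m<0$ half by the transposition $i\leftrightarrow j$) is the right one, but your translation of $\pi^{(m)}\succ\pi^{(m+1)}$ into inequalities on the $\pi_{ij}$ is off by an index shift, and this makes one of your two target inequalities false. With the paper's convention the slice $\pi^{(m)}$ (for $m\ge 0$) has its parts at positions $j=m+1,m+2,\dots$, while $\pi^{(m+1)}$ starts at $j=m+2$; the $k$th parts are $\mu_k=\pi^{(m)}_{m+k}=\pi_{k,\,m+k}$ and $\lambda_k=\pi^{(m+1)}_{m+1+k}=\pi_{k,\,m+1+k}$. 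Hence the interlacing condition $\mu_k\ge\lambda_k\ge\mu_{k+1}$ reads, in the paper's slice indices, $\pi^{(m)}_{j}\ge\pi^{(m+1)}_{j+1}\ge\pi^{(m)}_{j+1}$, i.e.\ $\pi_{j-m,\,j}\ge\pi_{j-m,\,j+1}\ge\pi_{j-m+1,\,j+1}$, each being a single application of column, respectively row, monotonicity. You instead aligned the slices as $\pi^{(m)}_j\ge\pi^{(m+1)}_j\ge\pi^{(m)}_{j+1}$. Your ``left inequality'' $\pi_{j-m,\,j}\ge\pi_{j-m-1,\,j}$ is the \emph{reverse} of row monotonicity and is false in general (take $m=0$, $j=2$ and any $\pi$ with $\pi_{1,2}>\pi_{2,2}$); indeed the axiom you invoke, $\pi_{i,j}\ge\pi_{i+1,j}$ with $i=j-m-1$, yields $\pi_{j-m-1,\,j}\ge\pi_{j-m,\,j}$, the opposite of what you assert. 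Your ``right inequality'' $\pi_{j-m-1,\,j}\ge\pi_{j-m,\,j+1}$ is true, but rewritten it is just $\pi^{(m+1)}_j\ge\pi^{(m+1)}_{j+1}$, i.e.\ the statement that the slice is weakly decreasing; it carries no information about how the two slices interlace. The fact that you needed a two-step chain there should have been a warning: with the correct alignment every interlacing inequality is exactly one monotonicity axiom.

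Once the indices are realigned the proof is immediate and the rest of your outline goes through: entries outside the finite support of $\pi$ are read as $0$, so the inequalities hold trivially near the tail of each slice, and the transpose $\pi_{ij}\mapsto\pi_{ji}$ maps plane partitions to plane partitions, carries $\pi^{(m)}$ to $\pi^{(-m)}$ as a partition, and converts $\pi^{(m)}\succ\pi^{(m+1)}$ into $\pi^{(-m-1)}\prec\pi^{(-m)}$, giving the left half of the chain \eqref{interlace-lemma}.
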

\noindent
See Figure~\ref{diagonal-slice} for an example of the diagonal slices.
\begin{figure}[ttt]
\begin{center}
\includegraphics[width=0.6\textwidth]{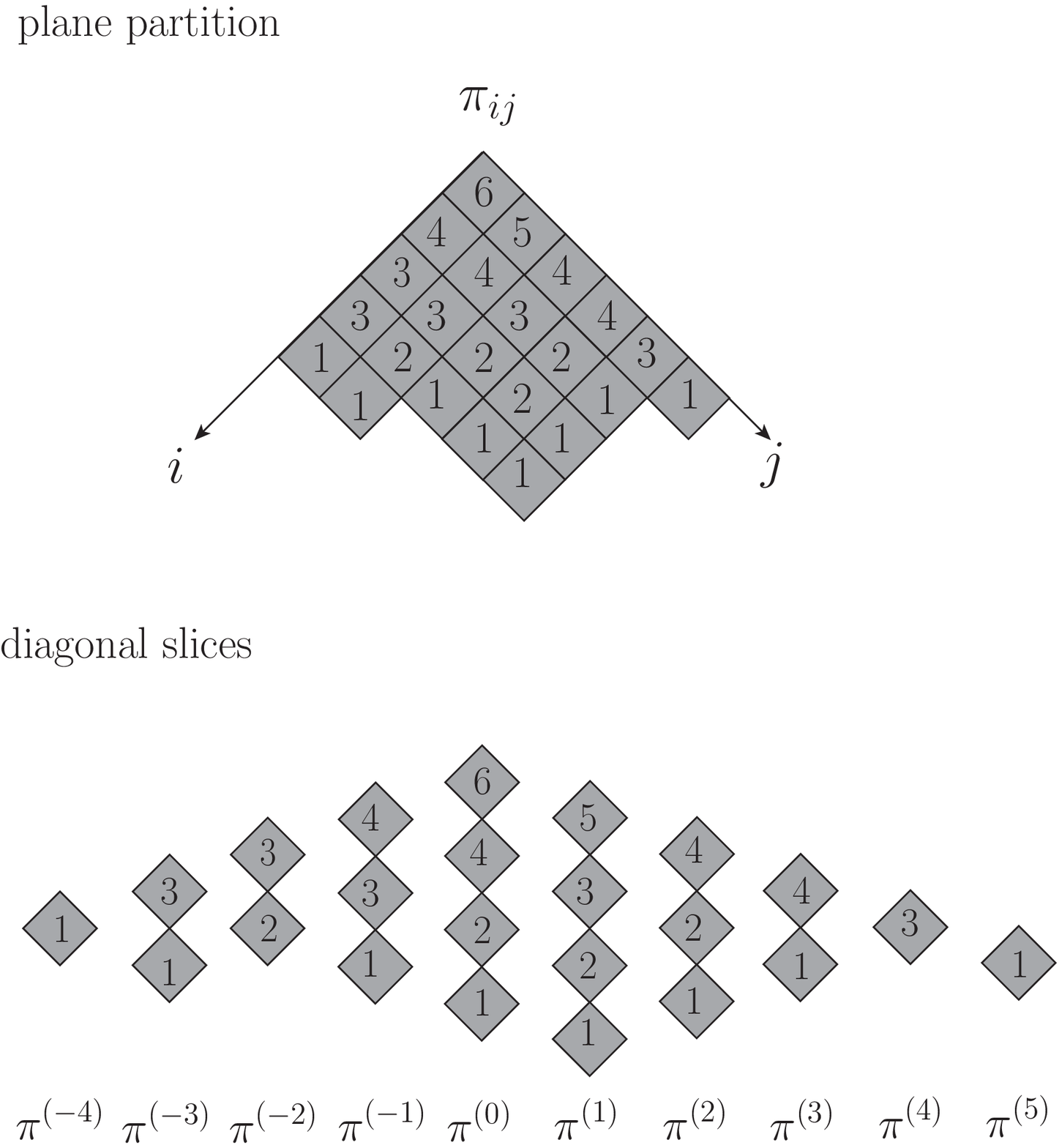}
\end{center}
\caption{The diagonal slices of the plane partition $\pi$ defined
in Figure~\ref{plane-partition}.}
\label{diagonal-slice}
\end{figure}

The partition function $Z$ of the system of the melting crystal is
regarded as the generating function of the plane partition, and
is  known to be given by the so-called MacMahon function \cite{Mac}:
\begin{align}
Z=\sum_{\pi} q^{|\pi|}=\prod_{n=1}^\infty\frac{1}{(1-q^n)^n} \quad  (0<q<1).
\label{MacMahon}
\end{align}

Now we consider the case that a plane partition $\pi$ is contained in a certain finite
box of  size, say $N_1\times N_2 \times L$.  Let us call such a partition the
boxed plane partition and write it 
as $\pi \subseteq [N_1,N_2,L]$.  For this boxed plane partition, the following
is valid:  
\begin{align}
\pi^{(-N_1)}=\pi^{(N_2)}=\emptyset, \quad
\pi_{i\,N_2+1}=\pi_{N_1+1\,j}=0 \,\,\, (i,j \ge 1),
\label{boxed-case1}  
\end{align}
and hence the interlacing
relation \eqref{interlace-lemma} is restricted to
\begin{align}
\emptyset=\pi^{(-N_1)}\prec \cdots \prec \pi^{(-1)}
\prec \pi^{(0)} \succ \pi^{(1)} \succ  \cdots \succ \pi^{(N_2)}=
\emptyset.
\label{boxed-case2}
\end{align}
This case corresponds to a system of the melting
rectangular crystal  of size $N_1\times N_2\times L$.
Then the  partition function of the system is given by \cite{Mac}:
\begin{align}
Z_{\rm box}=\sum_{\pi \subseteq [N_1,N_2,L]} q^{|\pi|}=
\prod_{j=1}^{N_1}\prod_{k=1}^{N_2} \frac{1-q^{L+j+k-1}}{1-q^{j+k-1}}.
\label{partition-func}
\end{align}
In the limit $q\to 1$, this formula gives the number of the plane partitions
contained in the box $N_1\times N_2\times L$.
In \cite{Bo}, the formula \eqref{partition-func} for the box $N\times N \times L$
is reproduced by utilizing the scalar products of the phase model.  

Inspired by that work, we extend the method to the 
case for the non-Hermitian phase model and calculate the 
partition function for the statistical mechanical model of a melting crystal
with the size $N\times N \times L$. The partition function of the model is
defined as \footnote{We remark again as in the introduction that this
assignment of the weights for each plane partition is totally different
from the ones in previous literature like \cite{Vu,FW} for example,
which are based on the Macdonald polynomials and its degeneration to the
Hall-Littlewood polynomials.
The model introduced in this paper is based on the Grothendieck polynomials,
and the directions of the extensions from the Schur to
the Grothendieck and the Macdonald polynomials are different,
hence are the corresponding melting crystal models.
}
\begin{align}
&Z_{\rm box}(\beta)=\sum_{\pi \subseteq [N,N,L]}  \Phi(q,\beta;\pi) q^{|\pi|} \quad
(0< q <1),\nn \\
&\Phi(q,\beta;\pi)=\prod_{j=1}^{N}\prod_{k=1}^{N-j}\left[
             (1+\beta q^j)^{-\delta(\pi_{k}^{(j)}, \,\pi_{k+1}^{(j-1)})}
             (1+\beta q^{1-j})^{1-\delta(\pi_{k}^{(-j)}, \,\pi_{k}^{(1-j)})}\right],
\label{part-def}
\end{align}
where $\delta(i,j)$ denotes the Kronecker delta: $\delta(i,j)=\delta_{i\,j}$.
Here we comment on the physical meaning of the additional 
potential factor $\Phi(q,\beta;\pi)$.  This factor can be 
interpreted to reflect, such as microscopic interactions 
among atoms. For $\beta>0$, it brings out a surface flattening 
effect in the region $j>i$ in Figure~\ref{plane-partition} or 
\ref{diagonal-slice}. In contrast to this, in the region $j<i$, 
the potential causes a surface roughening effect. The strength of the
effects decreases (resp. increases) with distance from the
plane $i=j$ in the region $j> i$ (resp. $i<j$).

On the other hand, for $\beta<0$, the potential structure is 
much more complicated. (i) For $\beta<-2$, the potential 
$\Phi(q,\beta;\pi)$ denotes a roughening effect in 
$i<j<\log(-2/\beta)/\log(q) \cup j<i$,
and a flattening effect in the other region. (ii) For $-2\le\beta<0$, 
it denotes a roughening effect in $1-\log(-2/\beta)/\log q<j<i$, and 
a flattening effect in the other region.
Note that for $\beta<0$, the model sometimes
becomes {\it physically} ill-defined, because $\Phi(q,\beta;\pi)$
possibly takes negative values.

In any cases, due to the strength of the force is not symmetric
with respect to the plane $i=j$, the expected shape of the
melting crystal is not symmetric with respect to $i=j$ except
for $\beta=0$.

The partition function $Z_{\rm box}(\beta)$ is explicitly evaluated 
by using the Cauchy identity \eqref{cauchy} and
Corollary~\ref{decompositionskew}. The following and subsequent
Corollaries are 
the main results of this section.
\begin{corollary}
The partition function $Z_{\rm box}(\beta)$ is given by 
\begin{align}
Z_{\rm box}(\beta)&=\frac{q^{N(N-1)/2}\prod_{j=1}^N (1+\beta q^j)^{j-1}}
                      {\prod_{1\le j<k\le N} (q^j-q^k)^2}
             \mathrm{det}_N
                \left[\frac{1-q^{(j+k-1)(L+N)}
                 \left(\frac{1+\beta q^{1-k}}{1+\beta q^j}\right)^{N-1}}
                     {1-q^{j+k-1}}\right].
\label{partition}
\end{align}
\end{corollary}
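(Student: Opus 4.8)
The strategy is to translate the melting-crystal generating function into a scalar product of $N$-particle states for the non-Hermitian phase model, and then invoke the already-established determinant formula \eqref{Scalar-B} (or directly the Cauchy identity \eqref{cauchy}) to close the computation. The starting observation is the interlacing description of boxed plane partitions \eqref{boxed-case2}: a plane partition $\pi \subseteq [N,N,L]$ is nothing but a sequence of partitions
\[
\emptyset=\pi^{(-N)}\prec\cdots\prec\pi^{(0)}\succ\cdots\succ\pi^{(N)}=\emptyset,
\]
each fitting inside the $N\times L$ rectangle. The plan is to split this chain at the central slice $\pi^{(0)}=:\nu$ into a descending branch ($m\ge 0$) and an ascending branch ($m\le 0$). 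By Corollary~\ref{decomposition}, each branch is exactly a multivariable skew Grothendieck polynomial, i.e.\ a product of single-variable skew factors $G_{\pi^{(m-1)}/\pi^{(m)}}$, evaluated at suitable monomials $z=q^m$.

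First I would match the weights. The power $q^{|\pi|}$ factorizes along diagonal slices as $q^{|\pi|}=\prod_m q^{|\pi^{(m)}|}$, and the single-variable skew Grothendieck polynomial \eqref{skewexpression} carries exactly a monomial $z^{\sum\mu_j-\sum\lambda_j}$ together with the $\beta$-corrections $\prod_j(1+\beta z-\beta z\,\delta_{\mu_{j+1}\lambda_j})$. The role of the potential factor $\Phi(q,\beta;\pi)$ in \eqref{part-def} is precisely to reproduce these $\beta$-dependent Kronecker-delta corrections once one sets $z=q^{m}$ on the appropriate branch; I would verify that the exponents $\sum|\pi^{(m)}|$ telescope correctly to give $q^{|\pi|}$ and that the two halves of $\Phi$ (the $(1+\beta q^{j})$ and $(1+\beta q^{1-j})$ blocks) are assigned to the descending and ascending branches respectively. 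This bookkeeping identifies
\[
Z_{\rm box}(\beta)=\sum_{\nu\subseteq L^N}
G_{\nu}(q^{1},\dots,q^{N};\beta)\,
G_{\nu^{\vee}}(q^{0},\dots,q^{-(N-1)};\beta)
\]
up to an overall monomial prefactor, where the specialization variables are geometric progressions in $q$. This is exactly the left-hand side of the Cauchy identity \eqref{cauchy} with $z_j=q^{j}$ and $w_k=q^{1-k}$.

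Having reduced to the Cauchy identity, the final step is a direct substitution into the right-hand side of \eqref{cauchy}. With $z_j=q^{j}$ and $w_k=q^{1-k}$, the Vandermonde-type denominator $\prod_{j<k}(z_j-z_k)(w_k-w_j)$ becomes $\prod_{j<k}(q^j-q^k)(q^{1-k}-q^{1-j})$, contributing the $\prod(q^j-q^k)^2$ factor and a power of $q$; the determinant entry $\bigl(z_j^{L+N}(1+\beta w_k)^{N-1}-w_k^{L+N}(1+\beta z_j)^{N-1}\bigr)/(z_j-w_k)$ collapses, after factoring out the common $(1+\beta q^j)^{N-1}$ and normalizing, to the quoted ratio $\bigl(1-q^{(j+k-1)(L+N)}(\tfrac{1+\beta q^{1-k}}{1+\beta q^j})^{N-1}\bigr)/(1-q^{j+k-1})$. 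Collecting the prefactors $q^{N(N-1)/2}\prod_j(1+\beta q^j)^{j-1}$ from the monomial corrections and the row/column rescalings of the determinant then yields \eqref{partition}.

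**Main obstacle.** The delicate part is the weight-matching, i.e.\ confirming that the artificially-defined potential $\Phi(q,\beta;\pi)$ in \eqref{part-def} is \emph{exactly} the product of the $\beta$-correction factors coming from all the single-variable skew Grothendieck polynomials across both branches, with the two asymmetric blocks $(1+\beta q^{j})^{-\delta}$ and $(1+\beta q^{1-j})^{1-\delta}$ landing on the correct slices; the asymmetry of $\Phi$ about $i=j$ must mirror the fact that the two Grothendieck factors in the Cauchy sum are evaluated at different sets of variables ($q^j$ versus $q^{1-k}$). The remaining determinant simplification is routine once the geometric specialization is in place.
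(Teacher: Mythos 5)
Your proposal is correct and follows essentially the same route as the paper's proof: decompose the boxed plane partition into diagonal slices, identify each branch of the interlacing chain with a product of single-variable skew Grothendieck polynomials via Corollary~\ref{decomposition}, match the weight $q^{|\pi|}\Phi(q,\beta;\pi)$ with the specialization $z_j=q^j$, $w_k=q^{1-k}$, and evaluate the resulting Cauchy identity \eqref{cauchy} in determinant form. The only difference is one of direction (you build up from $Z_{\rm box}$ to the Cauchy identity, while the paper specializes the Cauchy identity down to $Z_{\rm box}$), and you correctly single out the weight-matching with $\Phi$ as the only nontrivial bookkeeping step.
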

\begin{proof}
Consider the Cauchy identity given by \eqref{cauchy} for $\lambda=\pi^{(0)}$.
Then applying Corollary~\ref{decompositionskew}, the Grothendieck polynomials
in the left hand side are given by
\begin{align}
&G_{\pi^{(0)}}(z_1,\dots,z_N;\beta)=
\sum_{\pi^{(0)}\succ\dots \succ \pi^{(N)}=\emptyset}
\prod_{l=1}^N z_l^{|\pi^{(l-1)}|-|\pi^{(l)}|} 
\prod_{j=1}^N\prod_{k=1}^{N-j}
\left[1+\beta z_j-\beta z_j \delta_{\pi_{k+1}^{(j-1)}\,\pi_k^{(j)}}\right], \nn \\
&G_{\pi^{(0)\vee}}(w_1,\dots,w_N;\beta)=
\sum_{\emptyset=\pi^{(-N)\vee}\prec\dots \prec \pi^{(0)\vee}}
\prod_{l=1}^N w_l^{L+|\pi^{(-l)}|-|\pi^{(1-l)}|}
 \nn \\
& \qquad \qquad \qquad \qquad \qquad \quad 
\times \prod_{j=1}^N\prod_{k=1}^{N-j}
\left[1+\beta w_j-\beta w_j \delta_{\pi_{k}^{(1-j)}\,\pi_k^{(-j)}}\right].
\end{align}
Here we have used $\pi^{(j)\vee}_k=L-\pi^{(j)}_{N-|j|-k+1}$ and
the properties \eqref{correspond}, \eqref{boxed-case1} and 
\eqref{boxed-case2} for the explicit
evaluations. The insertion of them into the 
Cauchy identity \eqref{cauchy} yields
\begin{align}
&\sum_{\pi \subseteq [N,N,L]}
\prod_{j=1}^N z_j^{|\pi^{(j-1)}|-|\pi^{(j)}|} 
w_j^{|\pi^{(-j)}|-|\pi^{(1-j)}|}  \nn \\
&\quad \qquad \quad\times
\prod_{j=1}^{N}\prod_{k=1}^{N-j}\left[
             (1+\beta z_j)^{-\delta(\pi_{k}^{(j)}\, ,\pi_{k+1}^{(j-1)})}
             (1+\beta w_j)^{1-\delta(\pi_{k}^{(-j)}\, , \pi_{k}^{(1-j)})}\right] \nn \\
&\quad=  
\frac{\prod_{j=1}^{N}(1+\beta z_j)^{j-1}}
{\prod_{1\le j<k\le N}(z_j-z_k)(w_j^{-1}-w_k^{-1})}
\mathrm{det}_N\left[\frac{
1-(z_j w_k^{-1})^{L+N}\left(\frac{1+\beta w_k}{1+\beta z_j}\right)^{N-1}}
{1-z_j w_k^{-1}}\right].
\end{align}
Setting $z_j=q^{j}$ and $w_j=q^{1-j}$ in the above, we finally arrive at
\eqref{partition-func}.
\end{proof}

Set $\beta=0$ in \eqref{partition}, then the formula \eqref{partition-func} is 
reproduced. Moreover taking the limit $L\to\infty$ and  $N\to\infty$
we have the following generalized MacMahon function which reduces to
the ordinary MacMahon function \eqref{MacMahon} for $\beta=0$ and 
Euler's generating function at $\beta=-1$.
\begin{corollary}
The partition function \eqref{partition} in the limit $L\to \infty$ and
$N\to\infty$ is given by
\begin{align}
Z(\beta):=\lim_{L,N\to\infty}Z_{\rm box}(\beta)=
\prod_{n=1}^\infty\frac{(1+\beta q^n)^{n-1}}{(1-q^n)^n},
\label{unification}
\end{align}
which becomes the MacMahon function and  Euler's generating function
at $\beta=0$ and $\beta=-1$ respectively.
\end{corollary}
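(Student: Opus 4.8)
The plan is to start from the finite-box determinant formula \eqref{partition}, carry out the two limits in the order $L\to\infty$ then $N\to\infty$, and recognise the surviving determinant as a Cauchy determinant whose evaluation collapses almost all of the prefactor. First I would send $L\to\infty$ with $N$ fixed. Since $0<q<1$ we have $q^{(j+k-1)(L+N)}\to 0$ in every entry, and because the determinant of a fixed $N\times N$ matrix depends continuously on its entries, the determinant in \eqref{partition} tends to
\begin{align}
\mathrm{det}_N\left[\frac{1}{1-q^{j+k-1}}\right].
\end{align}

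Next I would evaluate this by the classical Cauchy determinant identity $\mathrm{det}_N[(1-a_jb_k)^{-1}]=\prod_{j<k}(a_j-a_k)(b_j-b_k)/\prod_{j,k}(1-a_jb_k)$, applied with $a_j=q^j$ and $b_k=q^{k-1}$. Using $q^{j-1}-q^{k-1}=q^{-1}(q^j-q^k)$ this gives
\begin{align}
\mathrm{det}_N\left[\frac{1}{1-q^{j+k-1}}\right]
=\frac{q^{-N(N-1)/2}\prod_{1\le j<k\le N}(q^j-q^k)^2}{\prod_{1\le j,k\le N}(1-q^{j+k-1})}.
\end{align}
Substituting into \eqref{partition}, the factor $q^{N(N-1)/2}$ and the Vandermonde product $\prod_{j<k}(q^j-q^k)^2$ cancel against the prefactor, leaving the finite product
\begin{align}
\lim_{L\to\infty}Z_{\rm box}(\beta)=\frac{\prod_{j=1}^N(1+\beta q^j)^{j-1}}{\prod_{1\le j,k\le N}(1-q^{j+k-1})}.
\end{align}

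Finally I would take $N\to\infty$. The numerator converges to $\prod_{n=1}^\infty(1+\beta q^n)^{n-1}$ directly. For the denominator I would reindex by $n=j+k-1$: the multiplicity of a fixed $n$ among pairs $(j,k)$ with $1\le j,k\le N$ is $n$ for $1\le n\le N$ and $2N-n$ for $N<n\le 2N-1$, so
\begin{align}
\prod_{1\le j,k\le N}(1-q^{j+k-1})=\prod_{n=1}^N(1-q^n)^n\cdot\prod_{n=N+1}^{2N-1}(1-q^n)^{2N-n}.
\end{align}
The first factor tends to $\prod_{n=1}^\infty(1-q^n)^n$, and the boundary factor tends to $1$ because $|\sum_{n=N+1}^{2N-1}(2N-n)\log(1-q^n)|\le N\,q^{N+1}/(1-q)\to 0$. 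Combining the limits yields \eqref{unification}. The special cases then follow by inspection: at $\beta=0$ the numerator is $1$ and one recovers the MacMahon function \eqref{MacMahon}, while at $\beta=-1$ the numerator becomes $\prod_n(1-q^n)^{n-1}$, so $Z(-1)=\prod_n(1-q^n)^{-1}$, which is Euler's generating function for partitions.

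The main obstacle is not any single computation but the rigorous handling of the double limit. I would justify taking $L\to\infty$ first by noting that the discarded error carries the extra factor $((1+\beta q^{1-k})/(1+\beta q^j))^{N-1}$, which can grow with $N$ yet is dominated by $q^{(j+k-1)(L+N)}$ once $L$ is large compared with $N$; one must therefore either fix the order of limits as above or let $L\to\infty$ fast enough that the joint limit agrees with the iterated one. The remaining care is the routine convergence of the infinite products $\prod_n(1+\beta q^n)^{n-1}$ and $\prod_n(1-q^n)^n$, which is immediate for $0<q<1$ since $\sum_n n q^n<\infty$.
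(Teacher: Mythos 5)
Your proposal is correct. The paper states this corollary without any proof, so there is nothing to compare against directly; your derivation --- sending $L\to\infty$ first so that the determinant degenerates to $\mathrm{det}_N[(1-q^{j+k-1})^{-1}]$, evaluating it by the Cauchy determinant with $a_j=q^j$, $b_k=q^{k-1}$ so that the Vandermonde and power-of-$q$ prefactors cancel, and then reindexing $\prod_{j,k}(1-q^{j+k-1})$ by $n=j+k-1$ with multiplicities $n$ and $2N-n$ --- is evidently the intended computation and each step checks out. Your remark on the order of the two limits is a genuine point of care that the paper glosses over entirely (the factor $((1+\beta q^{1-k})/(1+\beta q^j))^{N-1}$ can indeed grow with $N$, so the iterated limit $L\to\infty$ then $N\to\infty$, or a joint limit with $L$ growing fast enough, is the right reading of the statement). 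The only blemish is cosmetic: the bound on the boundary product should carry $(1-q)^2$ rather than $(1-q)$ in the denominator if one uses $|\log(1-x)|\le x/(1-x)$, but this does not affect the conclusion that it tends to $1$.
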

\noindent
For $\beta=0$,
the partition function $Z(0)$ is nothing but the MacMahon function
\eqref{partition-func} which is a generating function of the plane partitions.
And surprisingly, for $\beta=-1$ which corresponds to the
TASEP (resp. TAZRP) in the
language of the five-vertex model (resp. the non-Hermitian phase model), 
$Z(-1)$ is nothing but a
generating function for the numbers of possible partitions of 
natural numbers which is due to Euler:
\begin{align}
Z(-1)=\prod_{n=1}^\infty\frac{1}{1-q^n}=\sum_{\lambda}q^{|\lambda|}.
\end{align}
The expression for the partition function \eqref{unification}
means that the melting crystal model we introduced
unifies the generating functions of the
two-dimensional and three-dimensional Young diagrams
\eqref{unification}.
The enumeration problems for
two-dimensional Young diagrams can be treated
by the three-dimensional melting crystal model
at the point $\beta=-1$.
Note that there are other melting crystal models
based on the Macdonald polynomials \cite{Vu,FW}, whose partition functions
are different but have simple expressions in the infinite volume limit
as ours,
including the MacMahon function at a special point.
But if one wants to relate the results in \cite{Vu,FW}
with the Euler's generating function,
one has to multiply infinite products.
This means that one should multiply infinite products to the weights
assigned to each plane partition,
which seems to be an artificial operation,
unnatural from the point of view of enumeration.

Note here that the partition function \eqref{unification} is physically well-defined 
for $\beta \ge -1$ which is a condition for positivity of $Z(\beta)$. 
The entropy $S(\beta)$ for the model \eqref{unification} can be calculated by
using the relation  $Z(\beta)=e^{S(\beta)-E/T}$, where
 $E:=T^2 \der^2 \log Z(\beta)/\der T^2$ is the internal energy. Explicitly it reads
\begin{align}
S(\beta)=\sum_{n=1}^\infty \frac{\mu n}{T}\left[\frac{\beta(n-1)}{\beta+q^{-n}}
               +\frac{n}{q^{-n}-1} \right]+
               \sum_{n=1}^\infty \log\left[\frac{(1+\beta q^n)^{n-1}}{(1-q^n)^n}\right]
\quad (\beta\ge -1),
\label{entropy}
\end{align}
where $q=e^{-\mu/T}$ $(\mu>0, T>0)$.  From this expression, it can be easily followed that
the entropy $S(\beta)$ is a monotonically increasing function of $\beta$. In 
Figure~\ref{entropy-fig},
the temperature dependence of the entropy is depicted for various values of $\beta$.
\begin{figure}[ttt]
\begin{center}
\includegraphics[width=0.7\textwidth]{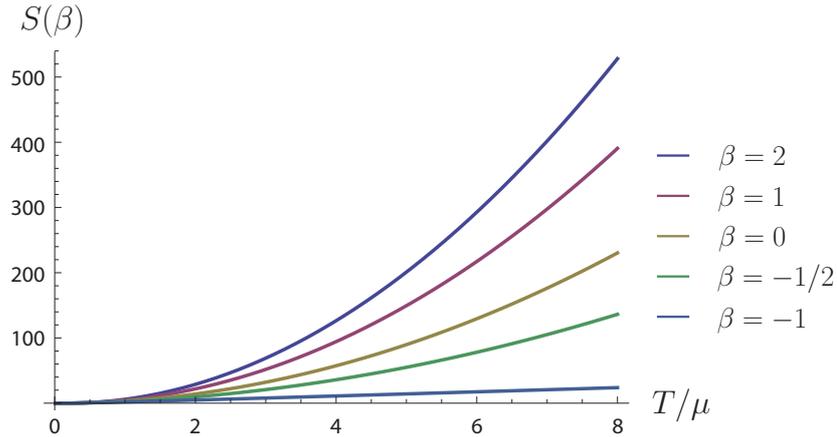}
\end{center}
\caption{The temperature dependence of the entropy $S(\beta)$ \eqref{entropy}
is depicted for various values of $\beta$.}
\label{entropy-fig}
\end{figure}
%
\section{Conclusion}
%
In this paper, we studied the non-Hermitian phase model
and showed that the wavefunctions is nothing but the Grothendieck polynomials.
To show this, we reviewed the integrable five-vertex model,
and introduced the skew Grothendieck polynomials for a single variable
as matrix elements of a $B$-operator. The addition theorem for the
Grothendieck polynomials follows from the equivalence between 
the wavefunctions of the five-vertex model and the Grothendieck 
polynomials. Showing that the matrix element of the $B$-operator
in the non-Hermitian phase model is given by the skew Grothendieck
polynomials, and then applying the addition theorem, we derive the
wavefunctions of the non-Hermitian phase model, which 
can also be expressed by the Grothendieck polynomials.
Our works establish the $K$-theoretic boson-fermion correspondence
at the level of wavefunctions.

As another application of the boson-fermion correspondence, 
we discussed the statistical mechanical model of a three-dimensional 
melting crystal and exactly derive the partition functions, which 
is interpreted as a $K$-theoretic generalization of the MacMahon
function.
Surprisingly, the $K$-theoretic MacMahon function includes not only the
generating function of the plane partitions but also
Euler's generating function of the partitions.
Our refinement of the melting crystal  model
unifies the treatment of the enumeration problems
of two-dimensional and three-dimensional Young diagrams.
The reason why two-dimensional objects appear for $K$-theory is not known now,
and its geometric meaning deserves to be investigated in the future.

The hermitian phase model is described by the
Schur polynomials. Since the determinant representations of the scalar products
are essentially the Cauchy identity for the Schur polynomials,
it has connections with KP equation and the Toda lattice \cite{FW,Zu,Ta}.
It is interesting to examine whether this classical integrable
interpretation can be lifted to the case of the integrable five-vertex model
and the non-Hermitian phase model, by making connection with
the existing classical integrable system or extending to some extent.

In the words of geometry,
our works on the relation between non-Hermitian integrable models
and Grothendieck polynomials mean that non-Hermitian integrable models
provide a natural framework to study the
quantum $K$-theory of Grassmannian varieties.
For the hermitian phase model, the quantum cohomology ring
and the Verlinde ring are shown to be described by the ring
defined by the model under the quasiperiodic boundary condition
\cite{KS}, where the Bethe ansatz equation plays the role of the ideal.
We would like to make further investigations on quantum $K$-theoretic objects
in our framework in the future.

One of the problems we are planning to investigate is to lift
the relation between integrable models and $K$-theoretic objects
to other types of Grassmannian varieties.
There are several extensions and variations of the Schur polynomials.
The Schur $P$, Schur $Q$,
Jack, Hall-Littlewood and the Macdonald polynomials
have connections with the $q$-boson model \cite{Wh,Ts,Ko}.
On the other hand, the $K$-theoretic extension of the Schur $P$ and Schur $Q$
polynomials are introduced in \cite{IN}.
We expect to find connections between these $K$-theoretical
symmetric polynomials and the integrable models
such as the non-Hermitian $q$-boson model \cite{SW,BC,BCPS}, for example.

\section*{Acknowledgments}
We thank C. Arita, T. Ikeda, S. Kakei, A. Kuniba, S. Naito, H. Naruse
and Y. Takeyama for useful discussions.
The present work was partially supported
by Grants-in-Aid  for Scientific Research (C) No. 24540393
and for Young Scientists (B) No. 25800223.

\section*{Appendix}
In this Appendix, we show the $L$-operator
of the five-vertex model \eqref{loperator}
is a particular reduction of a
more general six-vertex model.
We start from the $R$-matrix of the following six-vertex model
satisfying the Yang-Baxter relation \eqref{YBE}
\begin{align}
R(u,v)
&=
\begin{pmatrix}
f(v,u;t) & 0 & 0 & 0 \\
0 & t & g(v,u;t) & 0 \\
0 & g(v,u;t) & 1 & 0 \\
0 & 0 & 0 & f(v,u;t)
\end{pmatrix}, \nonumber \\
f(v,u;t)&=\frac{u^2-t v^2}{u^2-v^2},  \,
g(v,u;t)=\frac{(1-t)uv}{u^2-v^2}, 
\nonumber
\end{align}
including the $R$-matrix of the five-vertex model
\eqref{Rmatrix} as a special point $t=0$.
One can show that the following $L$-operator solves the
$RLL$ relation \eqref{RLL} for this $R$-matrix
of the six-vertex model
\begin{align}
L(u)
=
\begin{pmatrix}
\alpha_3 u+\alpha_4 u^{-1} & 0 & 0 & 0 \\
0 & \alpha_3 t u+\alpha_4 u^{-1} & (1-t)\alpha_1 & 0 \\
0 & (1-t)\alpha_2 & \alpha_5 u+\alpha_6 u^{-1} & 0 \\
0 & 0 & 0 & \alpha_5 u+\alpha_6 t u
\end{pmatrix},
\nonumber
\end{align}
where the parameters $\alpha_j, \ j=1,\cdots,6$ and $t$
satisfy the relations
\begin{align}
&(1-t)\alpha_1 \alpha_2+\alpha_3 \alpha_6-\alpha_4 \alpha_5=0,
\nonumber \\
&(t^2-t)\alpha_1 \alpha_2+t^2 \alpha_3 \alpha_6-\alpha_4 \alpha_5=0.
\nonumber
\end{align}
The $R$-matrix of the six-vertex model is recovered
from the $L$-operator by the choice of the parameters
$\alpha_1=\alpha_2=\alpha_3=\alpha_5=1, \ \alpha_6=-1, \ \alpha_4=-t$.

Another particular choice of the $L$-operator
of the general six-vertex model
$t=\alpha_4=0, \ \alpha_1=\alpha_2=\alpha_3=1, \ \alpha_6=-1, \
\alpha_5=-\beta^{-1}$ gives the $L$-operator for the
five-vertex model \eqref{loperator}, whose wavefunction
are the Grothendieck polynomials.

\end{document}